\begin{document}

\title{Exponentially Stable Adaptive Optimal Control of Uncertain LTI Systems\protect}

\author[1]{Anton Glushchenko*}

\author[1]{Konstantin Lastochkin}

\authormark{Glushchenko A. \textsc{et al}}

\address[1]{\orgdiv{Ya.Z. Tsypkin Laboratory of Adaptive and Robust Systems}, \orgname{V.A. Trapeznikov Institute of Control Sciences of Russian Academy of Sciences}, \orgaddress{\state{Moscow}, \country{Russia}}}

\corres{*Anton Glushchenko, Russia, Moscow, Profsoyuznaya street, h.65, 117997, ICS RAS. \email{aiglush@ipu.ru}}

\abstract[Summary]{A novel method of an adaptive linear quadratic (LQ) regulation of uncertain continuous linear time-invariant systems is proposed. Such an approach is based on the direct self-tuning regulators design framework and the exponentially stable adaptive control technique developed earlier by the authors. Unlike the known solutions, a procedure is proposed to obtain a non-overparametrized regression equation (RE) with respect to the unknown controller parameters from an initial RE of the LQ-based reference tracking control system. On the basis of such result, an adaptive law is proposed, which under mild regressor finite excitation condition provides monotonous convergence of the LQ-controller parameters to an adjustable set of their true values, which bound is defined only by the machine precision. Using the Lyapunov-based analysis, it is proved that the mentioned law guarantees the exponential stability of the closed-loop adaptive optimal control system. The simulation examples are provided to validate the theoretical contributions.}

\keywords{adaptive control, optimal control; LQR; Riccati equation; finite excitation}

\jnlcitation{\cname{%
\author{Glushchenko A.}, and
\author{K. Lastochkin}} (\cyear{2023}), 
\ctitle{Exponentially Stable Adaptive Optimal Control of Uncertain LTI Systems}, \cjournal{International Journal of Adaptive Control and Signal Processing}, \cvol{2023;00:1--26}.}

\maketitle

\section{Introduction}\label{sec1}

The model reference adaptive control (MRAC) is an effective framework to solve a great number of applied control problems with parameter uncertainty \cite{Sastry11, Lavretsky13,Ioannou12,Aranovskiy21, Nguyen18}. However, the main assumption of the classical MRAC paradigm is that the Erzberger conditions \cite{Erzberger67}, i.e. the {\it matching conditions} between the reference and plant models, are satisfied. But in the many practical scenarios such requirement is not always met \cite{Nguyen18}. This fact narrows the applicability domain of the conventional MRAC approaches. Moreover, both classical and modern methods of the adaptive control provide stability of the reference model tracking error, but generally do not guarantee the required time-domain or integral performance indexes \cite{Lavretsky13}.

On the other hand, the methods of optimal control theory \cite{Lewis95,Kwakernaak72}: (1) do not require to satisfy strict matching conditions, (2) allow one to synthesize the optimal control law with respect to the chosen cost function, (3) provide certain guarantees on the control quality in the robust sense. The classical solution of the optimal control problem is a state feedback controller minimizing the linear-quadratic cost function, which depends on the states and control signals, over the infinite time interval \cite{Lewis95,Kwakernaak72}. Considering the linear systems, the parameters of such a controller are calculated via solution of the well-known matrix algebraic Riccati equation (ARE). The solution of ARE requires {\it a priori} knowledge of the plant parameters, and it is usually found using offline analytical methods \cite{Kwakernaak72,Kalman60,Laub79,Vaughan69} or numerical procedures \cite{Kwakernaak72,Kleinman68,Arnold84}.

Considering the above-mentioned advantages of the optimal control theory methods, their application to the plants with parameter uncertainty \cite{Polyak05} to overcome the drawbacks of the conventional MRAC is an extremely actual problem. This is confirmed by a significant number of studies \cite{Sutton18,Vrabie09,Jiang12,Vamvoudakis10,Bhasin13,Modares13,Vamvoudakis15,Modares14,Lee14,Palanisamy14,Jha18a,Jha18b,Jha19a,Jha19b,Kiumarsi17,Jha17,Jha18c,Jha19c} aimed at the synthesis of an adaptive linear-quadratic (ALQ) regulators via solution of ARE when the plants parameters are unknown.

In general, known mathematically sound methods of ALQ synthesis can be divided into two broad groups:
\begin{enumerate}
    \item[1)] ALQ based {\it reinforcement learning} (RL);
    \item[2)] ALQ based {\it self-tuning regulators} (STR).
\end{enumerate}

Without pretending to provide an exhaustive review, the main studies, which belong to the mentioned groups, are considered below.

\textbf{ALQ-RL}. Approaches from this group exploit the {\it policy iteration} (PI) methodology consisting of the policy evolution ({\it critic update}) step, in which the value of the cost function is calculated using the controller current parameters, and the policy improvement ({\it actor update}) step, in which the controller parameters are adjusted towards the direction of the cost function decrease \cite{Sutton18}. This methodology has become applicable to the ALQ control systems synthesis on the ground of the disseminating study of Kleinman \cite{Kleinman68}, in which a computationally effective fast convergent discrete algorithm of online solution of the Riccati equation for plants with completely known dynamics is proposed. Later, based on the Kleinman algorithm and PI, the ALQ controllers for plants with partially \cite{Vrabie09} and completely \cite{Jiang12} unknown dynamics were proposed. The disadvantages of these solutions are that: (1) their stability analysis is rather complex \cite{Liberzon03} as the algorithm to adjust the control law parameters is discrete, while the plant is continuous, (2) some initial stabilizing controller is required to be known, (3) the ARE solution and controller parameters estimates converge to their true values only if the regressor persistent excitation (PE) condition is met. To overcome the first and second drawbacks, continuous time PI-based schemes are proposed, which do not require an initial stabilizing policy \cite{Vamvoudakis10,Bhasin13,Modares13}, but the PE condition is still a must for convergence. In \cite{Vamvoudakis15,Modares14,Lee14,Palanisamy14}, various approaches to relax the PE requirement, which are based on the integral sliding-window filtering and methods of intelligent data storage on the input-output characteristics of the plant, are proposed for the PI framework. The computational complexity of some of the above-mentioned solutions turns out to be rather high, especially when they are applied to control large scale systems. Other approaches \cite{Jha18a,Jha18b,Jha19a,Jha19b}, which, according to their authors statements, have less computational complexity, require an initial stabilizing policy and could be reduced to the discrete Kleinman algorithm. However, in \cite{Jha18a,Jha18b,Jha19a,Jha19b}, in contrast to \cite{Vamvoudakis15,Modares14,Lee14,Palanisamy14}, the stability of the closed-loop system for the whole set of switching states is proved on the basis of the switching systems stability theory \cite{Liberzon03}. In general, a significant drawback of ALQ-RL schemes is that the quality of the plant state transients is often unacceptable in the course of the iterative discrete adjustment of the control law parameters. A more detailed review of ALQ-RL methods can be found in \cite{Kiumarsi17}.

\textbf{ALQ-STR}. Approaches from this group are directly based on the methods of identification theory and adaptive control \cite{Sastry11,Lavretsky13,Ioannou12,Aranovskiy21,Nguyen18} and assume either a direct continuous adjustment of the optimal controller or require identification of the plant model parameters and the elements of the Riccati equation solution. The main problem of ALQ synthesis using STR methods is the necessity to obtain: (1) the regression equation (RE) with respect to the measurable signals and the unknown control law parameters and/or ARE solution, (2) the adaptive law, which guarantees the system stability. In \cite{Ioannou12} it is proposed to identify the parameters of the plant model and, using such estimates and the {\it certainty equivalence principle}, to solve the ARE continuously. The disadvantage of this approach is the complexity of the analytical solution of the ARE at each time instant, as well as that the convergence of the LQ-controller parameters adjustment process is ensured only when the restrictive PE condition is satisfied. In order to obtain a simpler solution, which could be implemented on low-cost embedded hardware, several methods are proposed in \cite{Jha17,Jha18c,Jha19c} to reduce the ARE to a measurable RE, in which the unknown parameters are the ARE solution and/or the matrix of the optimal controller coefficients. The RE obtained by such parameterization allows one to apply both classical \cite{Sastry11} and modern \cite{Chowdhary13} algorithms to identify its parameters directly. In \cite{Jha17,Jha18c}, concurrent-learning-based estimation laws \cite{Chowdhary13} are introduced on the basis of the obtained RE, which guarantee exponential convergence of the identified parameters to a bounded set of their true values if the regressor initial excitation condition (IE) is met. In \cite{Jha19c}, a gradient descent method is used to derive a law that ensures a local exponential convergence of the unknown parameters estimates to zero when the PE condition is satisfied. However, firstly, the obtained regressions \cite{Jha17,Jha18c,Jha19c} are substantially overparameterized, which causes well-known drawbacks \cite{Sastry11,Aranovskiy21}, and secondly, the authors do not provide a strict formal discussion on the conditions under which the regressor of the obtained RE satisfies the IE or PE requirements (in fact, the assumption that these conditions are satisfied is not sufficiently justified in these studies). Moreover, in \cite{Jha18c}, the excitation of the regressor, in principle, exists due to the fact that the quadratic term of ARE is considered as a disturbance (see (12)), while in \cite{Jha17} it is provided as the plant parameters estimates are continuously substituted into the RE (see (25)). All these facts demonstrate the artificial nature of the regressor excitation due to the perturbations, which are introduced into RE, at the expense of the obtained estimation accuracy. Therefore, despite the satisfactory results of the numerical experiments presented in \cite{Jha17,Jha18c,Jha19c}, the feasibility of the discussed algorithms from the point of view of the regressor excitation, as well as the general capability to adjust their accuracy are open questions. In \cite{Jha18c},  Remark 2, it is noted that the artificial excitation of the regressor can be obtained by addition of noise or test signals to the control action. This will certainly allow one to meet the IE or PE requirements almost everywhere, but in practice such operation is often infeasible and even dangerous.

The present study relates to the ALQ-STR group, and, compared to the considered solutions \cite{Ioannou12,Jha17,Jha18c,Jha19c}, its main contributions can be formulated as follows.
\begin{enumerate}
    \item[\textbf{C1}] A regression equation without overparameterization is obtained, which constitutes of the signals measured from the plant and the unknown matrix of the optimal control law parameters. {Such equation allows one to identify controller parameters directly without consecutive substitutions of intermediate signals estimates from one RE to another.}
    \item[\textbf{C2}] The finite excitation (FE) of the initial regressor (the vector of signals, which are measured from the plant) guarantees that the regressor of the parameterization C1 is also FE.
    \item[\textbf{C3}] The adaptive law is proposed, which, if the FE condition is satisfied for the initial regressor in C2, ensures monotonic convergence of the elements of the optimal control law matrix to a set of their true values, which bound is defined only by the computational power of the hardware-in-use.
    \item[\textbf{C4}] The global exponential stability of the plant with the developed adaptive self-tuning regulator C3 is ensured.
\end{enumerate}

The above contributions constitute the novelty of the proposed approach and fully overcome the shortcomings of the known solutions, which are noted in the review and relate to both ALQ-RL and ALQ-STR.

The main result of the study, which allows us to obtain C1-C4, is based on the direct self-tuning regulators concept \cite{Ioannou12} and the method of the exponentially stable adaptive control design \cite{Glushchenko22a}. According to them, in order to synthesize a state feedback adaptive control system, a RE with a scalar regressor with respect to the unknown ideal parameters of the control law is parameterized. The key aspect in this case is to transfer the excitation of the initial vector of measurable signals (the plant states, control action, reference etc.) to the scalar regressor of the resulting RE. Then an adaptive law is derived, which guarantees global exponential stability of the closed-loop control system. In contrast to classical adaptive control, this approach does not need a measurable reference model, but it requires analytical equations to calculate the ideal parameters of the control law. The concept of STR-based adaptive control with guarantee of global exponential stability under FE condition has been used for the first time by the authors to solve classical problems of the state and output feedback adaptive control for SISO \cite{Glushchenko22a} and MIMO \cite{Glushchenko22b} systems. It was later applied to derive an adaptive pole-placement control system for SISO plants \cite{Glushchenko22c}. In this study, the above-mentioned concept is applied to develop an adaptive suboptimal control system under FE condition, {and, like in all previous above-mentioned cases, the major challenge of such extension is to propose parametrization to derive RE mentioned in C1}.

The remainder of this paper is organized as follows. A rigorous problem statement is presented in Section II. The main result is reported in Section III, and Section IV validates the proposed approach in numerical simulations.

\textbf{Notation.} $\left|.\right|$ is the absolute value, $\left|\left|.\right|\right|$ is the Euclidean norm of a vector, $\lambda_{min}\left(.\right)$ and $\lambda_{max}\left(.\right)$ are the matrix minimum and maximum eigenvalues respectively, ${I_{n \times n}}$ is an identity $n \times n$ matrix, ${0_{n \times n}}$ is a zero $n \times n$ matrix, ${1_{n \times n}}$ is a $n \times n$ matrix, which is filled with ones, $det\{.\}$ stands for a matrix determinant, $adj\{.\}$ represents an adjoint matrix, ${L_\infty }$ is the space of all essentially bounded functions, $o\left(.\right)$ means “is ultimately smaller than”. The following definition from \cite{Sastry11} is used throughout the paper. {$O(.)$ stands for the 'Big $O$' notation to estimate complexity of algorithms.}

\begin{definition}\label{definition1}
The regressor $\varphi \left( t \right)$ is finitely exciting $\left( {\varphi \left( t \right) \in {\rm{FE}}} \right)$ over $\left[ {t_r^ + {\rm{;}}\;{t_e}} \right]$ if there exist ${t_e} \ge t_r^ +  > 0$ and $\alpha  > 0$ such that the following holds:
\begin{equation}\label{eq1}
    \int\limits_{t_r^ + }^{{t_e}} {\varphi \left( \tau  \right){\varphi ^{\rm{T}}}\left( \tau  \right)d} \tau  \ge \alpha {I_{n \times n}},
\end{equation}
where $\alpha$ is the excitation level.
\end{definition}

The corollary of the Kalman-Yakubovich-Popov lemma is introduced (see \cite{Tao03}, Theorem 9.4).

\begin{corollary}\label{corollary1}
For any matrix $D > 0$, a Hurwitz matrix $A \in {\mathbb{R}^{n \times n}}$, a scalar $\mu  > 0$, a matrix $B \in {\mathbb{R}^{n \times m}}$ such that a pair $\left(A, B\right)$ is controllable there exists a matrix $M = {M^{\rm{T}}} > 0$ and matrices $N \in {\mathbb{R}^{n \times m}},\;K \in {\mathbb{R}^{m \times m}}$ such that:
\begin{equation}\label{eq2}
\begin{array}{c}
{A^{\rm{T}}}M + MA =  - N{N^{\rm{T}}} - \mu M{\rm{,}}\;MB = NK{\rm{,}\;}\\
{K^{\rm{T}}}K = D + {D^{\rm{T}}}.
\end{array}
\end{equation}
\end{corollary}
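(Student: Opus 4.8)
The plan is to reduce the three matrix equations of the statement to a single Lyapunov equation which can then be solved in closed form. This is, of course, nothing but the Meyer--Kalman--Yakubovich (KYP) lemma of \cite{Tao03}, Theorem~9.4, specialised to the case of a constant strictly positive real feedthrough matrix $D$; but rather than invoking that lemma as a black box I would argue directly, since the specialised case is short and makes transparent where each hypothesis is used.

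First I would eliminate the last two equations. Since $D > 0$, its symmetric part $D + D^{\rm T}$ is positive definite, so I can simply set $K := (D + D^{\rm T})^{1/2}$ (the symmetric positive definite square root): it is invertible and satisfies $K^{\rm T}K = K^2 = D + D^{\rm T}$, so the third identity holds by construction. The second identity $MB = NK$ then forces $N = MBK^{-1}$. Substituting this into the first identity and using $K^{-1}K^{-{\rm T}} = (K^{\rm T}K)^{-1} = (D + D^{\rm T})^{-1}$, it becomes the single Riccati-type identity $A^{\rm T}M + MA + \mu M + MB(D + D^{\rm T})^{-1}B^{\rm T}M = 0$, so the whole claim is equivalent to finding one matrix $M = M^{\rm T} > 0$ solving it.

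The key observation is that this equation is a Lyapunov equation in disguise. Writing $\hat A := A + \tfrac{\mu}{2}I_{n\times n}$ and multiplying on the left and on the right by $P := M^{-1} > 0$, it collapses to $\hat A P + P\hat A^{\rm T} = -B(D + D^{\rm T})^{-1}B^{\rm T}$. Now I would pick $\mu > 0$ small enough that $\hat A$ is still Hurwitz, which is possible because $A$ is Hurwitz (any $\mu < 2\min_i|{\rm Re}\,\lambda_i(A)|$ works); then this Lyapunov equation has the unique solution $P = \int_0^{\infty} e^{\hat A\tau}B(D + D^{\rm T})^{-1}B^{\rm T}e^{\hat A^{\rm T}\tau}\,d\tau \ge 0$, i.e. the controllability Gramian of $(\hat A,\,B(D+D^{\rm T})^{-1/2})$. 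It is in fact positive definite: $(D+D^{\rm T})^{-1} > 0$ so this Gramian has the same definiteness as that of $(\hat A, B)$, and shifting $A$ by a scalar multiple of the identity does not change the column span of the controllability matrix, so $(\hat A, B)$ is controllable iff $(A,B)$ is. Setting $M := P^{-1} > 0$, $N := MBK^{-1}$ and $K := (D+D^{\rm T})^{1/2}$ then yields matrices satisfying all three identities.

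The main obstacle here is conceptual rather than computational: recognising that the quadratic ("Riccati") term disappears after inversion, leaving a plain Lyapunov equation, and then keeping careful track of the two places where the hypotheses bite — controllability of $(A,B)$, needed for $P > 0$, and smallness of $\mu$ relative to the stability margin of $A$, needed for $\hat A$ to be Hurwitz so that the Gramian integral converges. If instead one prefers the textbook route and cites \cite{Tao03} verbatim, the only point to verify is that the constant transfer matrix $G(s)\equiv D$ is strictly positive real, which is precisely the condition $D + D^{\rm T} > 0$; the lemma then returns matrices $(P, L, W)$ which one renames $(M, N, K)$, flipping the sign of $N$ if necessary so that $MB = NK$ (this leaves $NN^{\rm T}$ unchanged).
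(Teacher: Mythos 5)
Your proposal is correct, but it is not the paper's route: the paper gives no proof of Corollary~\ref{corollary1} at all, simply importing it as Theorem~9.4 of \cite{Tao03}, whereas you supply a self-contained argument. Your reduction is sound: fixing $K=(D+D^{\rm T})^{1/2}$ and eliminating $N=MBK^{-1}$ turns the three identities into the single Riccati identity $A^{\rm T}M+MA+\mu M+MB(D+D^{\rm T})^{-1}B^{\rm T}M=0$, and pre/post-multiplication by $P=M^{-1}$ converts it into the Lyapunov equation $\hat AP+P\hat A^{\rm T}=-B(D+D^{\rm T})^{-1}B^{\rm T}$ with $\hat A=A+\tfrac{\mu}{2}I_{n\times n}$, whose Gramian solution is positive definite because $(\hat A,B)$ inherits controllability from $(A,B)$; inverting back gives $M$, $N$, $K$. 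What your argument buys over the citation is transparency about where each hypothesis enters; the one point of friction is with the corollary's literal wording. You must take $\mu<2\min_i\left|{\rm Re}\,\lambda_i(A)\right|$, while the statement reads as if $\mu>0$ were arbitrary. This is a defect of the statement, not of your proof: if $\hat A$ has an eigenvalue $\lambda$ with ${\rm Re}\,\lambda\ge 0$ and left eigenvector $v$, any symmetric solution would satisfy $2\,{\rm Re}(\lambda)\,v^{*}Pv=-v^{*}B(D+D^{\rm T})^{-1}B^{\rm T}v\le 0$, which together with the PBH test and controllability excludes $P>0$; so smallness of $\mu$ relative to the stability margin of $A$ is intrinsic, exactly as in the classical Meyer--Kalman--Yakubovich formulation where the damping scalar is ``sufficiently small'' or existentially quantified. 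Since the paper only ever uses the corollary existentially (in Appendix~E it merely asserts the existence of $M$, $N$, $D$, $K$ and later normalizes $K^{\rm T}K=I_{m\times m}$), your version proves everything the paper needs; it would be worth adding the clause ``for sufficiently small $\mu>0$'' when stating the corollary.
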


\section{Problem statement}\label{sec2}

The optimal control problem for a class of linear time-invariant plants with parameter uncertainty is considered:
\begin{equation}\label{eq3}
    \forall t \ge t_r^ + {\rm{,}\;}\left\{ \begin{array}{l}
{{\dot x}_p}\left( t \right) = {A_p}{x_p}\left( t \right) + {B_p}u\left( t \right){\rm{,}\;}{x_p}\left( {t_r^ + } \right) = {x_{p0}}{\rm{,}}\\
z\left( t \right) = C_p^{\rm{T}}{x_p}\left( t \right){\rm{,}}
\end{array} \right.
\end{equation}
where ${x_p}\left( t \right) \in {\mathbb{R}^{{n_p}}}$ is a measurable plant state vector with an unknown initial conditions ${x_{p0}}$, $u\left( t \right) \in {\mathbb{R}^m}$ is the control vector, $z\left( t \right) \in {\mathbb{R}^m}$ is the performance output, ${A_p} \in {\mathbb{R}^{{n_p} \times {n_p}}}$ is an unknown state matrix, ${B_p} \in {\mathbb{R}^{{n_p} \times m}}$ is an unknown input matrix of the plant, ${C_p} \in {\mathbb{R}^{{n_p} \times m}}$ is a known output matrix to form the output $z\left( t \right)$, the pair $\left( {{A_p}{\rm{,}\;}{B_p}} \right)$  is fully controllable.

The objective is to derive an optimal control law that allows the output $z\left( t \right)$ to track a given bounded piecewise constant command $r \in {L_\infty } \cap {\mathbb{R}^m}$. For this purpose, a weighted integral error state ${e_{{z_I}}}\left( t \right) = \vartheta \int\limits_{t_r^ + }^t {\left( {z\left( \tau  \right) - r} \right)d\tau } $ ($\vartheta  \ne 0$) is introduced to denote the integrated output tracking error. The system \eqref{eq3} is augmented with ${e_{{z_I}}}\left( t \right)$ to yield the extended system as follows:
\begin{equation}\label{eq4}
\forall t \ge t_r^ + {\rm{,}\;}\left\{ \begin{array}{l}
\dot x\left( t \right) = \theta _{AB}^{\rm{T}}\Psi \left( t \right) - {B_r}r = Ax\left( t \right) + Bu\left( t \right) - {B_r}r{\rm{,}}\;x\left( {t_r^ + } \right) = {x_0}{\rm{,}}\\
z\left( t \right) = {C^{\rm{T}}}x\left( t \right){\rm{,}}
\end{array} \right.
\end{equation}
where $x\left( t \right) = {{\begin{bmatrix}{{x_p}\left( t \right)}&{{e_{{z_I}}}\left( t \right)}
\end{bmatrix}}^{\rm{T}}} \in {\mathbb{R}^n}$ is an augmented state vector $\left( {n = {n_p} + m} \right)$,
\begin{equation}\label{eq5}
\begin{array}{c}
A = {\begin{bmatrix}
{{A_p}}&{{0_{{n_p} \times m}}}\\
{\vartheta C_p^{\rm{T}}}&{{0_{m \times m}}}
\end{bmatrix}} \in {\mathbb{R}^{n \times n}}{\rm{,}}\;B = {\begin{bmatrix}
{{B_p}}\\
{{0_{m \times m}}}
\end{bmatrix}} \in {\mathbb{R}^{n \times m}}{\rm{,}}\;{B_r} = {\begin{bmatrix}
{{0_{{n_p} \times m}}}\\
{\vartheta {I_{m \times m}}}
\end{bmatrix}} \in {\mathbb{R}^{n \times m}}{\rm{,}}\\
{C^{\rm{T}}} = {\begin{bmatrix}
{C_p^{\rm{T}}}&{{0_{m \times m}}}
\end{bmatrix}} \in {\mathbb{R}^{m \times n}}{\rm{,}}\;\Psi \left( t \right) = { {\begin{bmatrix}
{{x^{\rm{T}}}\left( t \right)}&{{u^{\rm{T}}}\left( t \right)}
\end{bmatrix}}^{\rm{T}}}{\rm{,}}\;\theta _{AB}^{\rm{T}} ={\begin{bmatrix}
A&B
\end{bmatrix}}.
\end{array}
\end{equation}

The vector $\Psi \left( t \right) \in {\mathbb{R}^{n + m}}$ is considered to be measurable at each time instant $t > t_r^ + $, whereas the vector ${\theta _{AB}} \in {\mathbb{R}^{\left( {n + m} \right) \times n}}$ is time-invariant and unknown. The pair $\left( {A{\rm{,}}\;B} \right)$ is controllable if $\left( {{A_p}{\rm{,}}\;{B_p}} \right)$ is controllable and $det \left\{ {{\begin{bmatrix}
{{A_p}}&{{B_p}}\\
{C_p^{\rm{T}}}&{{0_{m \times m}}}
\end{bmatrix}}} \right\} \ne 0.$

Considering the plant \eqref{eq4}, the goal is to derive a control law that minimizes the cost function:
\begin{equation}\label{eq6}
J = \frac{1}{2}\int\limits_{t_r^ + }^\infty  {{x^{\rm{T}}}\left( t \right)Qx\left( t \right) + {u^{\rm{T}}}\left( t \right)Ru\left( t \right)} {\rm{}\;}dt{\rm{,}}
\end{equation}
where $Q = {Q^{\rm{T}}} \in {L_\infty } \cap {\mathbb{R}^{n \times n}}$ and $R = {R^{\rm{T}}} \in {L_\infty } \cap {\mathbb{R}^{m \times m}}$ are the positive definite weight matrices related to the plant states and control signal respectively, the pair $\left( {A{\rm{,}}\;{Q^{{\textstyle{\frac{1}{2}}}}}} \right)$ is detectable.

The following well-known result of the optimal control theory \cite{Kalman60}, which allows one to solve the optimization problem \eqref{eq6} when the matrices $A$ and $B$ are known, will be used throughout the paper.

\begin{proposition}\label{proposition1}
Let $\left( {A{\rm{,}}\;B} \right)$ be fully controllable, and $\left( {A{\rm{,}}\;{Q^{{\textstyle{\frac{1}{2}}}}}} \right)$ be detectable, then the control law, which minimizes \eqref{eq6}, is written as:
\begin{equation}\label{eq7}
\begin{array}{c}
{u^*}\left( t \right) = {\theta ^{\rm{T}}}\omega \left( t \right) = {K_x}x\left( t \right) + {K_r}r =  - {R^{ - 1}}{B^{\rm{T}}}Px\left( t \right) - {R^{ - 1}}{B^{\rm{T}}}Vr{\rm{,}}\\
{\theta ^{\rm{T}}}{\rm{ = }}{\begin{bmatrix}
{{K_x}}&{{K_r}}
\end{bmatrix}} \in {\mathbb{R}^{m \times \left( {n + m} \right)}}{\rm{,}\;}\omega \left( t \right) = {{\begin{bmatrix}
{{x^{\rm{T}}}\left( t \right)}&{{r^{\rm{T}}}}
\end{bmatrix}}^{\rm{T}}} \in {\mathbb{R}^{n + m}}{\rm{,}}
\end{array}
\end{equation}
where ${K_x} \in {\mathbb{R}^{m \times n}}{\rm{,}\;}{K_r} \in {\mathbb{R}^{m \times m}}$, $P = {P^{\rm{T}}} \in {\mathbb{R}^{n \times n}}$ and $V \in {\mathbb{R}^{n \times m}}$ are steady-state solutions of the following differential equations:
\begin{equation}\label{eq8}
\begin{array}{l}
 - \dot P\left( t \right) = {A^{\rm{T}}}P\left( t \right) + P\left( t \right)A - P\left( t \right)B{R^{ - 1}}{B^{\rm{T}}}P\left( t \right) + Q{\rm{,}}\\
\dot V\left( t \right) =  - {A^{\rm{T}}}V\left( t \right) + P\left( t \right)B{R^{ - 1}}{B^{\rm{T}}}V\left( t \right) + P\left( t \right){B_r}{\rm{,}}
\end{array}
\end{equation}
and, if $\forall i{\rm{,}\;}j \in \overline {1,2n} $ it holds that ${\lambda _i}\left( D \right)\cancel{ \gg }{\lambda _j}\left( D \right)$, then such solutions can be obtained analytically in the following way:
\begin{equation}\label{eq9}
\begin{array}{c}
P = P\left( {{\tau _\infty }} \right){\rm{\;:}} = {\Phi _{21}}\left( {{\tau _\infty }} \right)\Phi _{11}^{ - 1}\left( {{\tau _\infty }} \right){\rm{,}}\\
\Phi \left( {{\tau _\infty }} \right){\rm{:}} = {e^{D{\tau _\infty }}} = {\begin{bmatrix}
{{\Phi _{11}}\left( {{\tau _\infty }} \right)}&{{\Phi _{12}}\left( {{\tau _\infty }} \right)}\\
{{\Phi _{21}}\left( {{\tau _\infty }} \right)}&{{\Phi _{22}}\left( {{\tau _\infty }} \right)}
\end{bmatrix}}{\rm{,}}\;D = {\begin{bmatrix}
{ - A}&{B{R^{ - 1}}{B^{\rm{T}}}}\\
Q&{{A^{\rm{T}}}}
\end{bmatrix}}{\rm{,}}\\
V = {\left( {{A^{\rm{T}}} - PB{R^{ - 1}}{B^{\rm{T}}}} \right)^{ - 1}}P{B_r}{\rm{,}}
\end{array}
\end{equation}
where $\Phi \left( {{\tau _\infty }} \right) \in {\mathbb{R}^{2n \times 2n}}$, ${\Phi _{ij}} \in {\mathbb{R}^{n \times n}}$, ${\tau _\infty } > 0,\;{\rm{}}{\tau _\infty } \to \infty $ is an auxiliary constant, which value is sufficiently high.
\end{proposition}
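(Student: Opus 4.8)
The plan is to establish the three assertions of Proposition~\ref{proposition1} in turn: the structure \eqref{eq7} of the optimal controller, the fact that its gains are the steady-state solutions of \eqref{eq8}, and the closed form \eqref{eq9} for those steady-state values.

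\textbf{Optimal law and the equations \eqref{eq8}.} For the first two claims I would use dynamic programming. Since the integrand of \eqref{eq6} is quadratic in $(x,u)$ and the dynamics \eqref{eq4} are affine in $(x,u)$ with the constant forcing $-B_r r$, I posit a value function $\mathcal{V}(x)=\frac12 x^{\rm T}Px+x^{\rm T}Vr+\frac12 r^{\rm T}Wr$ with $P=P^{\rm T}>0$. Substituting $\mathcal{V}$ into the stationary Hamilton--Jacobi--Bellman equation and performing the inner minimization over $u$ gives $u^{*}=-R^{-1}B^{\rm T}(Px+Vr)$, which is exactly \eqref{eq7} with $K_x=-R^{-1}B^{\rm T}P$ and $K_r=-R^{-1}B^{\rm T}V$. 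Equating the quadratic-in-$x$, bilinear and constant terms then produces the algebraic Riccati equation $A^{\rm T}P+PA-PBR^{-1}B^{\rm T}P+Q=0$ and the linear relation $(A^{\rm T}-PBR^{-1}B^{\rm T})V=PB_r$, i.e.\ precisely the equilibria of \eqref{eq8}; alternatively one invokes the classical finite-horizon result \cite{Kalman60}, in which \eqref{eq8} appears with terminal data $P=0$, $V=0$, and lets the horizon tend to infinity. Controllability of $(A,B)$ and detectability of $(A,Q^{1/2})$ guarantee, via the standard monotone-convergence argument for the finite-horizon Riccati flow, existence of a unique stabilizing $P=P^{\rm T}>0$ with $A-BR^{-1}B^{\rm T}P$ Hurwitz; since $A^{\rm T}-PBR^{-1}B^{\rm T}=(A-BR^{-1}B^{\rm T}P)^{\rm T}$ is then nonsingular, $V=(A^{\rm T}-PBR^{-1}B^{\rm T})^{-1}PB_r$ is well defined, which is the last line of \eqref{eq9}.

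\textbf{Closed form \eqref{eq9}.} To obtain the matrix-exponential formula I would linearize the Riccati equation via the Hamiltonian substitution. Writing $\tau$ for time-to-go and $\Phi(\tau)=e^{D\tau}$ with $D$ and the block partition as in \eqref{eq9}, set $P(\tau):=\Phi_{21}(\tau)\Phi_{11}^{-1}(\tau)$; since $\Phi(0)=I$ this yields $P(0)=0$, matching the terminal condition. Using $\dot\Phi=D\Phi$ to read off $\dot\Phi_{11}=-A\Phi_{11}+BR^{-1}B^{\rm T}\Phi_{21}$ and $\dot\Phi_{21}=Q\Phi_{11}+A^{\rm T}\Phi_{21}$, together with $\frac{d}{d\tau}\Phi_{11}^{-1}=-\Phi_{11}^{-1}\dot\Phi_{11}\Phi_{11}^{-1}$, a direct computation gives $\frac{d}{d\tau}P(\tau)=A^{\rm T}P+PA-PBR^{-1}B^{\rm T}P+Q$, i.e.\ the reversed-time form of the first line of \eqref{eq8}. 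Hence $P(\tau)$ is the finite-horizon Riccati solution, and by the convergence established above $\Phi_{21}(\tau_\infty)\Phi_{11}^{-1}(\tau_\infty)\to P$ as $\tau_\infty\to\infty$; one also checks that $\Phi_{11}(\tau)$ is invertible for every $\tau>0$ (equivalent to finiteness of the finite-horizon cost), which again follows from the stabilizability/detectability hypotheses.

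\textbf{Main obstacle.} The step requiring the most care is the replacement of the limit $\tau_\infty\to\infty$ by a \emph{finite}, sufficiently large $\tau_\infty$, and the precise meaning of the spectral caveat ${\lambda_i(D)\cancel{\gg}\lambda_j(D)}$. Since $D$ is the negated Hamiltonian matrix, its stable eigenspace is the graph of the antistabilizing solution of the ARE while its antistable eigenspace is the graph of the stabilizing $P$; for large $\tau$ the blocks of $e^{D\tau}$ are dominated by the latter, so $\Phi_{21}\Phi_{11}^{-1}\to P$ in exact arithmetic. When the eigenvalues of $D$ are widely separated this convergence, though still valid, becomes ill-conditioned and an impractically large $\tau_\infty$ is required, so the caveat delimits the regime in which \eqref{eq9} is numerically usable; I would state this explicitly rather than leave it implicit. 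The remaining ingredients — the HJB term matching and the block differentiation — are routine.
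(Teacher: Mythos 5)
Your proposal is correct, but it is not the route the paper itself takes: the paper's proof of Proposition~\ref{proposition1} is purely a citation (to \cite{Nguyen18}, pp.~244--246 for \eqref{eq7}--\eqref{eq8}, to \cite{Nguyen18} combined with \cite{Kalman60} for \eqref{eq9}, and to \cite{Vaughan69}, \cite{Kwakernaak72} for the condition ${\lambda _i}\left( D \right)\cancel{ \gg }{\lambda _j}\left( D \right)$), whereas you reconstruct the content of those references. Your stationary HJB ansatz $\mathcal{V}(x)=\tfrac12 x^{\rm T}Px+x^{\rm T}Vr+\tfrac12 r^{\rm T}Wr$ does reproduce $K_x=-R^{-1}B^{\rm T}P$, $K_r=-R^{-1}B^{\rm T}V$, the ARE, and the relation $(A^{\rm T}-PBR^{-1}B^{\rm T})V=PB_r$ with the sign that matches the forcing $-B_r r$ in \eqref{eq4}; and your computation $\tfrac{d}{d\tau}\bigl(\Phi_{21}\Phi_{11}^{-1}\bigr)=A^{\rm T}P+PA-PBR^{-1}B^{\rm T}P+Q$ with $P(0)=0$, followed by $\tau_\infty\to\infty$, is exactly the Kalman--Vaughan derivation behind \eqref{eq9}, including the correct identification of the antistable invariant subspace of $D$ with the graph of the stabilizing solution; your reading of the spectral caveat as a conditioning/feasibility restriction for finite $\tau_\infty$ rather than a failure in exact arithmetic agrees with Remark~\ref{remark3} and \cite{Vaughan69}. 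What your version buys is self-containedness and an explicit check of the sign conventions for this particular augmented plant; what the paper's citation buys is brevity and deference to \cite{Vaughan69,Kwakernaak72} for the numerically delicate part.

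Two minor imprecisions, neither fatal. First, detectability of $\left(A,Q^{1/2}\right)$ gives only $P\ge 0$ (observability is needed for $P>0$); this is enough here, since the proposition asserts only symmetry of $P$ and what is actually used later is that $A-BR^{-1}B^{\rm T}P$ is Hurwitz, hence $A^{\rm T}-PBR^{-1}B^{\rm T}$ is invertible. Second, invertibility of $\Phi_{11}(\tau)$ for finite $\tau$ follows already from global existence of the zero-terminal-condition Riccati flow ($\Phi_{11}$ is the transition matrix of $\dot X=\bigl(-A+BR^{-1}B^{\rm T}P(\tau)\bigr)X$), not from stabilizability/detectability, which are needed only for convergence of the limit. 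Finally, for $r\ne 0_m$ the infinite-horizon cost \eqref{eq6} is generally unbounded, so the stationary HJB step (like the proposition itself) must be read in the servo/feedforward sense of \cite{Nguyen18}; this gloss is inherited from the paper rather than introduced by you.
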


\begin{proof}
The correctness of \eqref{eq7} and \eqref{eq8} is shown in \cite{Nguyen18}, p. 244-246, the equations \eqref{eq9} are obtained as a combination of the results from \cite{Nguyen18}, p.244-246 and \cite{Kalman60}, the necessity of the condition $\forall i{\rm{,}}j \in \overline {1,2n} $ ${\lambda _i}\left( D \right)\cancel{ \gg }{\lambda _j}\left( D \right)$ for the existence of analytical solutions \eqref{eq9} is proved in \cite{Vaughan69} and discussed in detail in \cite{Kwakernaak72}.
\end{proof}

The optimal control law \eqref{eq7} allows one to define the required control quality for the plant \eqref{eq4} as follows:
\begin{equation}\label{eq10}
{\dot x_{ref}}\left( t \right) = {A_{ref}}{x_{ref}}\left( t \right) + {B_{ref}}r = \left( {A + B{K_x}} \right){x_{ref}}\left( t \right) +  B{K_r}r{\rm{,\;}}{x_{ref}}\left( {t_r^ + } \right) = {x_0}{\rm{,}}
\end{equation}
where ${x_{ref}}\left( t \right) \in {\mathbb{R}^n}$ is an unmeasurable state vector of the reference model with initial conditions ${x_0}$, ${A_{ref}} \in {\mathbb{R}^{n \times n}}$ is an unknown desirable closed-loop state matrix.

As the matrices $A$ and $B$ are unknown, both the optimal control law \eqref{eq7} and the reference model \eqref{eq10} could not be implemented, so the control law with adjustable parameters is introduced:
\begin{equation}\label{eq11}
u\left( t \right) = {\hat \theta ^{\rm{T}}}\left( t \right)\omega \left( t \right) = {\hat K_x}\left( t \right)x\left( t \right){\rm{ + }}{\hat K_r}\left( t \right)r{\rm{,}}
\end{equation}
where ${\hat K_x}\left( t \right) \in {\mathbb{R}^{m \times n}}{\rm{,}}\;{\hat K_r}\left( t \right) \in {\mathbb{R}^{m \times m}}{\rm{,}}\; \hat \theta \left( t \right) \in {\mathbb{R}^{\left( {n + m} \right) \times m}}$ are the adjustable parameters.

The equation \eqref{eq11} is substituted into \eqref{eq4}, then \eqref{eq10} is subtracted from the obtained result to write the error equation:
\begin{equation}\label{eq12}
\begin{array}{c}
{{\dot e}_{ref}}\left( t \right) = Ax\left( t \right) + Bu\left( t \right) - {A_{ref}}{x_{ref}}\left( t \right) - B{K_r}r \pm B{K_x}x\left( t \right) = \\
 = {A_{ref}}{e_{ref}}\left( t \right) + B\left( {u\left( t \right) - {K_x}x\left( t \right) - {K_r}r} \right) = {A_{ref}}{e_{ref}}\left( t \right) + B{{\tilde \theta }^{\rm{T}}}\left( t \right)\omega \left( t \right){\rm{,}}
\end{array}
\end{equation}
where ${e_{ref}}\left( t \right) = x\left( t \right) - {x_{ref}}\left( t \right)$ is unmeasurable tracking error calculated as the difference between the states of the plant \eqref{eq4} and the reference model \eqref{eq10}, $\tilde \theta \left( t \right) = \hat \theta \left( t \right) - \theta $ is the parameter error, i.e. the error of identification of the optimal control law \eqref{eq7} parameters.

Then we are in position to formulate the problem of the adaptive control, which is suboptimal in terms of criterion \eqref{eq6}, for the plant \eqref{eq4}.

\emph{Goal.} We consider the system \eqref{eq4} with unknown $A{\rm{,}}\;B{\rm{,}}\;x\left( {t_r^ + } \right)$ and the optimization cost function \eqref{eq6} with known matrices $Q{\rm{,}}\;R$ and parameters $\vartheta {\rm{,}}\;{\tau _\infty }$, which values together guarantee the existence of the solutions \eqref{eq9}. Let $\theta $ be defined as in \eqref{eq7}, where $\theta $ affords a minimum to \eqref{eq6}. Then the goal is to derive an update law for $\hat \theta \left( t \right)$ such that, when $\Psi \left( t \right)$ from \eqref{eq5} is FE, the control law \eqref{eq11} stabilizes the system \eqref{eq4} in a suboptimal way in terms of \eqref{eq6}, and ensures exponential convergence:
\begin{equation}\label{eq13}
\mathop {{\rm{lim}}}\limits_{t \to \infty } \left\| {\xi \left( t \right)} \right\| \le {{\rm{\varepsilon }}_\xi }{\rm{}}\;\left( {\exp } \right){\rm{,}}
\end{equation}
where $\xi \left( t \right) = {{\begin{bmatrix}
{e_{ref}^{\rm{T}}\left( t \right)}&{ve{c^{\rm{T}}}\left( {\tilde \theta \left( t \right)} \right)}
\end{bmatrix}} ^{\rm{T}}} \in {\mathbb{R}^{n + m\left( {n + m} \right)}}{\rm{,}}$ and ${{\rm{\varepsilon }}_\xi }$ value can be reduced to zero by appropriate choice of the arbitrary parameters values of the proposed method.

\begin{remark}\label{remark1}
Contrary to the conventional model reference adaptive control schemes \cite{Ioannou12,Nguyen18}, the reference model \eqref{eq10} can not be implemented, and the state vector ${x_{ref}}\left( t \right)$ is unmeasurable.
\end{remark}

\begin{remark}\label{remark2}
The control law \eqref{eq11} with the adaptive law to adjust the parameters $\hat \theta \left( t \right)$ is suboptimal in terms of criterion \eqref{eq6} as the adjustable law \eqref{eq11} coincides with the optimal one \eqref{eq7} only if ${{\rm{\varepsilon }}_\xi } = 0,{\rm{}}\;t \to \infty $. The difference between the minimum of \eqref{eq6} when the ideal control law \eqref{eq7} is applied and the value of \eqref{eq6} when the adjustable law \eqref{eq11} is used is written as:
\begin{equation}\label{eq14}
\begin{array}{c}
\tilde J = \hat J - J = \frac{1}{2}\int\limits_{t_r^ + }^\infty  {{x^{\rm{T}}}\left( t \right)Qx\left( t \right) + {u^{\rm{T}}}\left( t \right)Ru\left( t \right)} {\rm{}\;}dt
-\frac{1}{2}\int\limits_{t_r^ + }^\infty  {x_{ref}^{\rm{T}}\left( t \right)Q{x_{ref}}\left( t \right) + {{\left( {{u^*}}\left( t \right) \right)}^{\rm{T}}}R{u^*}\left( t \right)} {\rm{}\;}dt = \\
 = \frac{1}{2}\int\limits_{{t_r^+}}^\infty  {e_{ref}^{\rm{T}}\left( t \right)Q{e_{ref}}\left( t \right) + e_{ref}^{\rm{T}}\left( t \right)Q{x_{ref}}\left( t \right) + x_{ref}^{\rm{T}}\left( t \right)Q{e_{ref}}\left( t \right)} {\rm{}\;}dt + \\
 + \frac{1}{2}\int\limits_{t_r^ + }^\infty  {{\omega ^{\rm{T}}}\left( t \right)\tilde \theta \left( t \right)R{{\tilde \theta }^{\rm{T}}}\left( t \right)\omega \left( t \right) + {\omega ^{\rm{T}}}\left( t \right)\tilde \theta \left( t \right)R{\theta ^{\rm{T}}}\omega \left( t \right) + {\omega ^{\rm{T}}}\left( t \right)\theta R{{\tilde \theta }^{\rm{T}}}\left( t \right)\omega \left( t \right)} {\rm{}\;}dt
\end{array}
\end{equation}
and, in the process of the stated problem solution, it can be reduced if the rate of convergence of errors ${e_{ref}}$ and $\tilde \theta \left( t \right)$ to zero is improved. Details on how to obtain \eqref{eq14} are given in the supplementary material \cite{Glushchenko22d}.
\end{remark}

\begin{remark}\label{remark3} 
If $\exists i{\rm{,}}\;j \in \overline {1,2n}$ ${\lambda _i}\left( D \right) \gg {\lambda _j}\left( D \right)$, then, according to the results of \cite{Vaughan69}, the steady-state solutions $\left( {{\tau _\infty } \to \infty } \right)$ of the differential equations \eqref{eq8} cannot be found using \eqref{eq9}. However, in this case, when ${\tau _\infty } = {t_f} > t_r^ + $ has a sufficiently low value, the matrix $P\left( {{t_f}} \right)$ can be calculated. As a result, a stabilizing control law for the system \eqref{eq4} can be obtained in the form of \eqref{eq7}. In such case $P\left( {{t_f}} \right)$ is the boundary condition of the linear-quadratic cost function optimization problem over a time interval $\left[ {t_r^ + {\rm{;}}\;{t_f}} \right]$:
\begin{equation}\label{eq15}
J = \frac{1}{2}\left[ {{x^{\rm{T}}}\left( t \right)P\left( {{t_f}} \right)x\left( t \right) + \int\limits_{t_r^ + }^{{t_f}} {{x^{\rm{T}}}\left( t \right)Qx\left( t \right) + {u^{\rm{T}}}\left( t \right)Ru\left( t \right)} {\rm{}\;}dt} \right].
\end{equation}

Thus, in the general case, the choice of a constant ${\tau _\infty }$ value should at least guarantee the existence of equations \eqref{eq9} solutions.

The choice of matrices $Q{\rm{,}}\;R$ and the parameter $\vartheta $ makes it possible to adjust the rate of convergence of the Riccati differential equation solution $P\left( t \right)$ to the steady-state value $P\left( {t \to \infty } \right) = P$, and hence to ensure that the equality $P = P\left( {{t_f}} \right)$ holds for finite values of ${t_f}$. In this connection, considering the problem statement \eqref{eq13}, it is assumed that $Q{\rm{,}}\;R$ and $\vartheta $ have already been chosen in such a way that the analytical solution $P\left( {{\tau _\infty }} \right)$ \eqref{eq9} exists and coincides with the static value of the Riccati equation \eqref{eq8} solution. For more details on the difficulties, which arise when ${\lambda _i}\left( D \right) \gg {\lambda _j}\left( D \right)$ and ${\tau _\infty } \to \infty $, see \cite{Kwakernaak72,Vaughan69}.
\end{remark}

{\begin{remark}\label{remark3.5} 
As it is mentioned in Introduction, a lot of efforts \cite{Jha17,Jha18c,Jha19c} have been made to solve the stated problem by considering \eqref{eq8} to obtain estimate of $P$ and substitute it into equations to calculate $K_x$ and $K_r$. But they faced problems related to the regressor excitation propagation and overparametrization. In this study it is proposed to use ALQ-STR design and include equation (9) in it to solve the Riccati equation. The challenge is that $A$ and $B$ in $D$ are unknown, but the concept of STR-based adaptive control with guarantee of global exponential stability under FE condition \cite{Glushchenko22a} will give the opportunity to substitute $A$ and $B$ in $D$ with measurable signals. And then in several steps to obtain a RE with respect to $\theta$. The main difficulty is to derive the parametrization of the problem under consideration, which allow one to use advantages of the above-mentioned concept. That is one of the main scopes of the following section.
\end{remark}}

\section{Main result}\label{sec3}

The main result of the study is based on the concept of the direct self-tuning of the control law parameters. Having at hand $\Psi \left( t \right)$ and using \eqref{eq7} and \eqref{eq9}, it is proposed to obtain a dynamic regression equation with respect to the unknown parameters of the optimal control law \eqref{eq7} by a number of transformations of equation \eqref{eq4}:
\begin{equation}\label{eq16}
{y_\theta }\left( t \right) = \Delta \left( t \right)\theta  + {\varepsilon _\theta }\left( t \right){\rm{,}}
\end{equation}
where $\Delta \left( t \right) \in \mathbb{R}$ is a measurable scalar regressor such that $\Psi \left( t \right) \in {\rm{FE}} \Rightarrow \Delta \left( t \right) \in {\rm{FE}}$, ${y_\theta }\left( t \right) \in {\mathbb{R}^{\left( {n + m} \right) \times m}}$ is a measurable function, ${\varepsilon _\theta }\left( t \right) \in {\mathbb{R}^{\left( {n + m} \right) \times m}}$ is an unmeasurable disturbance.

Further, using the results from \cite{Glushchenko22a}, the corollary of the Kalman-Yakubovich-Popov lemma \eqref{eq2}, and exponential stability conditions, it is proposed to derive an adaptive law for $\hat \theta \left( t \right)$, which is based on equation \eqref{eq16} and ensures that the objective \eqref{eq13} can be achieved when $\Psi \left( t \right) \in {\rm{FE}}$.

In the following proposition it is shown how the components of \eqref{eq16} can be obtained.
\begin{proposition}\label{proposition2}
Let the following equations be at hand:
\begin{equation}\label{eq17}
\begin{array}{l}
{z_A}\left( t \right){\rm{:}} = \varphi \left( t \right)A{\rm{,}}\\
{z_B}\left( t \right){\rm{:}} = \varphi \left( t \right)B{\rm{,}}
\end{array}
\end{equation}

%\vspace{-1cm}  

\begin{equation}\label{eq18}
\begin{array}{c}
{z_{{\Phi _{11}}}}\left( t \right){\rm{:}} = {\Delta _\Phi }\left( t \right){\Phi _{11}}\left( {{\tau _\infty }} \right) + {\varepsilon _{{\Phi _{11}}}}\left( t \right){\rm{,}}\\
{z_{{\Phi _{21}}}}\left( t \right){\rm{:}} = {\Delta _\Phi }\left( t \right){\Phi _{21}}\left( {{\tau _\infty }} \right) + {\varepsilon _{{\Phi _{21}}}}\left( t \right){\rm{,}}
\end{array}
\end{equation}
where $\varphi \left( t \right) \in R{\rm{,}}\;{\Delta _\Phi }\left( t \right) \in \mathbb{R}$ are known regressors, ${z_{{\Phi _{11}}}}\left( t \right){\rm{,}}\;{z_{{\Phi _{21}}}}\left( t \right) \in {\mathbb{R}^{n \times n}}{\rm{,}}\;{z_A}\left( t \right) \in {\mathbb{R}^{n \times n}},\;{z_B}\left( t \right) \in {\mathbb{R}^{n \times m}}$ are known functions, ${\varepsilon _{{\Phi _{11}}}}\left( t \right) \in {\mathbb{R}^{n \times n}}{\rm{,}}\;{\varepsilon _{{\Phi _{21}}}}\left( t \right) \in {\mathbb{R}^{n \times n}}$ are bounded unknown disturbances.

Then the function ${y_\theta }\left( t \right)$ and regressor $\Delta \left( t \right)$ can be obtained using the following equations:
\begin{equation}\label{eq19}
\begin{array}{c}
\begin{array}{*{20}{c}}
{{y_\theta }\left( t \right){\rm{:}} = adj\left\{ {\overline \Delta \left( t \right)} \right\}{{\overline y}_\theta }\left( t \right)}\\
{{\varepsilon _\theta }\left( t \right){\rm{:}} = adj\left\{ {\overline \Delta \left( t \right)} \right\}{{\overline \varepsilon }_\theta }\left( t \right)}
\end{array}{\rm{,}}\;\\
\Delta \left( t \right){\rm{:}} = det \left\{ {\overline \Delta \left( t \right)} \right\} = \Delta _{{K_x}}^n\left( t \right)\Delta _{{K_r}}^m\left( t \right){\rm{,}}\\
\begin{array}{*{20}{c}}
{{{\overline y}_\theta }\left( t \right){\rm{:}} = {{ {\begin{bmatrix}
{{y_{{K_x}}}\left( t \right)}&{{y_{{K_r}}}\left( t \right)}
\end{bmatrix}}}^{\rm{T}}}}\\
{{{\overline \varepsilon }_\theta }\left( t \right){\rm{:}} = {{ {\begin{bmatrix}
{{\varepsilon _{{K_x}}}\left( t \right)}&{{\varepsilon _{{K_r}}}\left( t \right)}
\end{bmatrix}}}^{\rm{T}}}}
\end{array}{\rm{,}}\;\\
\overline \Delta \left( t \right){\rm{:}} = {{blockdiag}}\left\{ {{\Delta _{{K_x}}}\left( t \right){I_{n \times n}}{\rm{,}}\;{\Delta _{{K_r}}}\left( t \right){I_{m \times m}}} \right\}{\rm{,}}
\end{array}
\end{equation}
where ${y_{{K_x}}}\left( t \right) \in {{\mathop{\rm \mathbb{R}}\nolimits} ^{m \times n}}{\rm{,}}\;{\Delta _{{K_x}}}\left( t \right) \in {\mathop{\rm \mathbb{R}}\nolimits} $ and ${y_{{K_r}}}\left( t \right) \in {{\mathop{\rm \mathbb{R}}\nolimits} ^{m \times m}}{\rm{,}}\;{\Delta _{{K_r}}}\left( t \right) \in {\mathop{\rm \mathbb{R}}\nolimits} $ take the form:
\begin{displaymath}
\begin{array}{c}
{y_{{K_x}}}\left( t \right){\rm{:}} =  - {R^{ - 1}}z_B^{\rm{T}}\left( t \right){z_{{\Phi _{21}}}}\left( t \right)adj\left\{ {{z_{{\Phi _{11}}}}\left( t \right)} \right\}{\rm{,}}\\
{\Delta _{{K_x}}}\left( t \right){\rm{:}} = \varphi \left( t \right)det \left\{ {{z_{{\Phi _{11}}}}\left( t \right)} \right\}{\rm{,}}\\
{y_{{K_r}}}\left( t \right){\rm{:}} =  - \varphi \left( t \right){R^{ - 1}}z_B^{\rm{T}}\left( t \right)adj\left\{ {{\Delta _{{K_x}}}\left( t \right)z_A^{\rm{T}}\left( t \right)}\right.
\left.{+ y_{{K_x}}^{\rm{T}}\left( t \right)z_B^{\rm{T}}\left( t \right)} \right\}{z_{{\Phi _{21}}}}\left( t \right)adj\left\{ {{z_{{\Phi _{11}}}}\left( t \right)} \right\}{B_r}{\rm{,}}\\
{\Delta _{{K_r}}}\left( t \right){\rm{:}} = det \left\{ {{\Delta _{{K_x}}}\left( t \right)z_A^{\rm{T}}\left( t \right) + y_{{K_x}}^{\rm{T}}\left( t \right)z_B^{\rm{T}}\left( t \right)} \right\}{\rm{,}}
\end{array}
\end{displaymath}
and the implication ${\varepsilon _{{\Phi _{11}}}}\left( t \right) = {\varepsilon _{{\Phi _{21}}}}\left( t \right) = {0_{n \times n}} \Rightarrow {\varepsilon _\theta }\left( t \right) = {0_{\left( {n + m} \right) \times m}}$ holds for the disturbance ${\varepsilon _\theta }\left( t \right)$.
\end{proposition}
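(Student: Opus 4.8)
The plan is to start from the closed-form expressions for the optimal gains in \eqref{eq7}, \eqref{eq9} and to clear every matrix inverse by means of the adjugate identity ${M^{ - 1}} = adj\left\{ M \right\}/det\left\{ M \right\}$, so that all remaining matrix products involve only $\varphi A$, $\varphi B$, ${\Delta _\Phi }{\Phi _{11}}$ and ${\Delta _\Phi }{\Phi _{21}}$, which by the hypotheses \eqref{eq17}--\eqref{eq18} are supplied, up to the disturbances ${\varepsilon _{{\Phi _{11}}}}$, ${\varepsilon _{{\Phi _{21}}}}$, by the measurable signals ${z_A}$, ${z_B}$, ${z_{{\Phi _{11}}}}$, ${z_{{\Phi _{21}}}}$. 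Everywhere I would use the scalar-multiple rules $adj\left\{ {cM} \right\} = {c^{n - 1}}adj\left\{ M \right\}$ and $det\left\{ {cM} \right\} = {c^n}det\left\{ M \right\}$ for $M \in {\mathbb{R}^{n \times n}}$, which are exactly what makes the spurious powers of ${\Delta _\Phi }$ (and, in the feedforward block, of $\varphi $) cancel.

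First I would build the state-feedback sub-regression. From ${K_x} =  - {R^{ - 1}}{B^{\rm{T}}}P$ and $P = {\Phi _{21}}\Phi _{11}^{ - 1}$ one obtains $det\left\{ {{\Phi _{11}}} \right\}{K_x} =  - {R^{ - 1}}{B^{\rm{T}}}{\Phi _{21}}adj\left\{ {{\Phi _{11}}} \right\}$; multiplying by $\varphi $ and inserting $z_B^{\rm{T}} = \varphi {B^{\rm{T}}}$, ${z_{{\Phi _{21}}}} = {\Delta _\Phi }{\Phi _{21}} + {\varepsilon _{{\Phi _{21}}}}$ and ${z_{{\Phi _{11}}}} = {\Delta _\Phi }{\Phi _{11}} + {\varepsilon _{{\Phi _{11}}}}$ reproduces exactly the claimed ${y_{{K_x}}} =  - {R^{ - 1}}z_B^{\rm{T}}{z_{{\Phi _{21}}}}adj\left\{ {{z_{{\Phi _{11}}}}} \right\}$ and ${\Delta _{{K_x}}} = \varphi \,det\left\{ {{z_{{\Phi _{11}}}}} \right\}$, with ${\varepsilon _{{K_x}}}: = {y_{{K_x}}} - {\Delta _{{K_x}}}{K_x}$ gathering every term carrying at least one factor ${\varepsilon _{{\Phi _{11}}}}$ or ${\varepsilon _{{\Phi _{21}}}}$. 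Setting ${\varepsilon _{{\Phi _{11}}}} = {\varepsilon _{{\Phi _{21}}}} = {0_{n \times n}}$, the scalar-multiple rules collapse ${\varepsilon _{{K_x}}}$ to $ - \varphi \Delta _\Phi ^n{R^{ - 1}}{B^{\rm{T}}}\left( {{\Phi _{21}}adj\left\{ {{\Phi _{11}}} \right\} - det\left\{ {{\Phi _{11}}} \right\}{\Phi _{21}}\Phi _{11}^{ - 1}} \right)$, which vanishes because $adj\left\{ {{\Phi _{11}}} \right\} = det\left\{ {{\Phi _{11}}} \right\}\Phi _{11}^{ - 1}$; hence ${\varepsilon _{{\Phi _{11}}}} = {\varepsilon _{{\Phi _{21}}}} = 0 \Rightarrow {\varepsilon _{{K_x}}} = {0_{m \times n}}$.

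Next I would build the feedforward sub-regression. Using $P = {P^{\rm{T}}}$ and $R = {R^{\rm{T}}}$ gives $K_x^{\rm{T}}{B^{\rm{T}}} =  - PB{R^{ - 1}}{B^{\rm{T}}}$, so the inverse in \eqref{eq9} becomes ${\left( {{A^{\rm{T}}} + K_x^{\rm{T}}{B^{\rm{T}}}} \right)^{ - 1}}$ and ${K_r} =  - {R^{ - 1}}{B^{\rm{T}}}{\left( {{A^{\rm{T}}} + K_x^{\rm{T}}{B^{\rm{T}}}} \right)^{ - 1}}P{B_r}$. Clearing both inverses with $adj\left\{ . \right\}$, substituting $P\,det\left\{ {{\Phi _{11}}} \right\} = {\Phi _{21}}adj\left\{ {{\Phi _{11}}} \right\}$, $z_A^{\rm{T}} = \varphi {A^{\rm{T}}}$, $y_{{K_x}}^{\rm{T}} = {\Delta _{{K_x}}}K_x^{\rm{T}} + \varepsilon _{{K_x}}^{\rm{T}}$, $z_B^{\rm{T}} = \varphi {B^{\rm{T}}}$ and the ${z_\Phi }$ relations, and then matching the powers of $\varphi $ and ${\Delta _\Phi }$ on the two sides (they agree, since ${\Delta _{{K_x}}} = \varphi \Delta _\Phi ^n det\left\{ {{\Phi _{11}}} \right\}$ modulo the disturbances), I would arrive at the stated ${y_{{K_r}}}$ and ${\Delta _{{K_r}}} = det\left\{ {{\Delta _{{K_x}}}z_A^{\rm{T}} + y_{{K_x}}^{\rm{T}}z_B^{\rm{T}}} \right\}$, with ${\varepsilon _{{K_r}}}: = {y_{{K_r}}} - {\Delta _{{K_r}}}{K_r}$; when ${\varepsilon _{{\Phi _{11}}}} = {\varepsilon _{{\Phi _{21}}}} = 0$ (so also ${\varepsilon _{{K_x}}} = 0$) the same $adj\left\{ . \right\} \cdot \left( . \right) = det\left\{ . \right\}I$ cancellation, now applied to both ${A^{\rm{T}}} + K_x^{\rm{T}}{B^{\rm{T}}}$ and ${\Phi _{11}}$, forces ${\varepsilon _{{K_r}}} = {0_{m \times m}}$.

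Finally I would assemble the scalar regression. Since \eqref{eq7} gives $\theta  = {\left[ {K_x^{\rm{T}}\;\;K_r^{\rm{T}}} \right]^{\rm{T}}}$, transposing and stacking the two sub-regressions yields ${{\overline y}_\theta } = \overline \Delta \,\theta  + {{\overline \varepsilon }_\theta }$ with $\overline \Delta  = {\rm{blockdiag}}\left\{ {{\Delta _{{K_x}}}{I_{n \times n}},\;{\Delta _{{K_r}}}{I_{m \times m}}} \right\}$ and ${{\overline \varepsilon }_\theta } = {\left[ {{\varepsilon _{{K_x}}}\;\;{\varepsilon _{{K_r}}}} \right]^{\rm{T}}}$. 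Left-multiplying by $adj\left\{ {\overline \Delta } \right\}$ and using $adj\left\{ {\overline \Delta } \right\}\overline \Delta  = det\left\{ {\overline \Delta } \right\}I$ together with the block-diagonal determinant $det\left\{ {\overline \Delta } \right\} = \Delta _{{K_x}}^n\Delta _{{K_r}}^m$ gives precisely \eqref{eq16} with ${y_\theta } = adj\left\{ {\overline \Delta } \right\}{{\overline y}_\theta }$, $\Delta  = \Delta _{{K_x}}^n\Delta _{{K_r}}^m$ and ${\varepsilon _\theta } = adj\left\{ {\overline \Delta } \right\}{{\overline \varepsilon }_\theta }$; and since ${{\overline \varepsilon }_\theta } = 0$ whenever ${\varepsilon _{{\Phi _{11}}}} = {\varepsilon _{{\Phi _{21}}}} = 0$, the implication ${\varepsilon _{{\Phi _{11}}}} = {\varepsilon _{{\Phi _{21}}}} = {0_{n \times n}} \Rightarrow {\varepsilon _\theta } = {0_{\left( {n + m} \right) \times m}}$ follows at once.

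The real work is the feedforward block: pushing the bookkeeping of the nested $adj\left\{ . \right\}/det\left\{ . \right\}$ operations through $V = {\left( {{A^{\rm{T}}} + K_x^{\rm{T}}{B^{\rm{T}}}} \right)^{ - 1}}P{B_r}$ so that every spurious power of $\varphi $ and ${\Delta _\Phi }$ drops out, and checking term-by-term that the residual ${\varepsilon _{{K_r}}}$ collapses to zero. This hinges on using ${\Delta _{{K_x}}} = \varphi \,det\left\{ {{z_{{\Phi _{11}}}}} \right\}$ (so the $det\left\{ {{z_{{\Phi _{11}}}}} \right\}$ hidden in ${\Delta _{{K_x}}}$ cancels the $adj\left\{ {{z_{{\Phi _{11}}}}} \right\}$ appearing explicitly in ${y_{{K_r}}}$) and on the two cofactor identities for ${A^{\rm{T}}} + K_x^{\rm{T}}{B^{\rm{T}}}$ and ${\Phi _{11}}$.
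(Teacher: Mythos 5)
Your proposal is correct and follows essentially the same route as the paper's Appendix A: clear the inverses in \eqref{eq7}, \eqref{eq9} with $adj\{\cdot\}/det\{\cdot\}$, multiply the resulting static regressions for $K_x$ and $K_r$ by the appropriate powers of $\varphi(t)$ and $\Delta_\Phi(t)$ so the scalar rules $det\{cM\}=c^n det\{M\}$, $adj\{cM\}=c^{n-1}adj\{M\}$ absorb them into $z_A,z_B,z_{\Phi_{11}},z_{\Phi_{21}}$, then stack and left-multiply by $adj\{\overline\Delta(t)\}$. The only difference is presentational: the paper works out explicit closed-form expressions for $\varepsilon_{K_x}(t)$ and $\varepsilon_{K_r}(t)$ (later needed for Proposition \ref{proposition5}), whereas you define them as residuals and verify only that they vanish when $\varepsilon_{\Phi_{11}}=\varepsilon_{\Phi_{21}}=0_{n\times n}$, which is sufficient for the claims of Proposition \ref{proposition2} itself.
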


\begin{proof}
Proof of the proposition and definitions of ${\varepsilon _{{K_x}}}\left( t \right){\rm{,}}\;{\varepsilon _{{K_r}}}\left( t \right)$ are given in Appendix.
\end{proof}

According to the proved Proposition \ref{proposition2}, to obtain the regression \eqref{eq16} with respect to the unknown parameters of the optimal control law \eqref{eq7}, it is necessary and sufficient to obtain the equations \eqref{eq17} and \eqref{eq18} using a measurable vector $\Psi \left( t \right)$. This challenge can be solved on the basis of the well-known dynamic regressor extension and mixing procedure (DREM) \cite{Aranovskiy21}.

The subsections 3.1 and 3.2 of this section are to present the procedure to obtain regressions \eqref{eq17} and \eqref{eq18} respectively. In the subsection 3.3 the correctness of the implication $\Psi \left( t \right) \in {\rm{FE}} \Rightarrow \Delta \left( t \right) \in {\rm{FE}}$ is proved, the properties of $\Delta \left( t \right)$ and ${\varepsilon _\theta }\left( t \right)$ are studied, and the adaptive law is introduced, which ensures that the objective \eqref{eq13} is met when $\Psi \left( t \right) \in {\rm{FE}}$.

\subsection{Equation \texorpdfstring{\eqref{eq17}}{Lg} parametrization}

To obtain the regression equations \eqref{eq17}, the stable filters for $\dot x\left( t \right)$ and $\Psi \left( t \right)$ from \eqref{eq4} are introduced:
\begin{equation}\label{eq20}
\dot {\overline \mu} \left( t \right) =  - l\overline \mu \left( t \right) + \dot x\left( t \right){\rm{,}}\;\overline \mu \left( {t_r^ + } \right) = {0_n}{\rm{,}}
\end{equation}

%\vspace{-1cm} 

\begin{equation}\label{eq21}
\dot {\overline \Psi} \left( t \right) =  - l\overline \Psi  \left( t \right) + \Psi \left( t \right){\rm{,}}\;\overline \Psi  \left( {t_r^ + } \right) = {0_{n + m}}{\rm{,}}
\end{equation}
where $l > 0$ is the filter constant.

The regressor $\overline \Psi  \left( t \right)$ is a solution of the equation \eqref{eq21}, and, according to \cite{Glushchenko22a,Glushchenko22b,Glushchenko22c}, the function $\mu \left( t \right)$ is calculated as follows under the condition that $\dot x\left( t \right)$ is unknown:
\begin{equation}\label{eq22}
\overline \mu \left( t \right) = {e^{ - l\left( {t - t_r^ + } \right)}}\overline \mu \left( {t_r^ + } \right) + x\left( t \right) - {e^{ - l\left( {t - t_r^ + } \right)}}x\left( {t_r^ + } \right) - l\overline x\left( t \right) + l{e^{ - l\left( {t - t_r^ + } \right)}}\overline x\left( {t_r^ + } \right){\rm{,}}
\end{equation}
where $\overline x\left( t \right) = {\begin{bmatrix}{{I_{n \times n}}}&{{0_{n \times m}}}\end{bmatrix}} \overline \Psi  \left( t \right)$ are the first $n$ elements of $\overline \Psi  \left( t \right)$.

Considering \eqref{eq22}, the equation \eqref{eq4} is rewritten as:
\begin{equation}\label{eq23}
\begin{array}{c}
\overline z\left( t \right) = \overline \mu \left( t \right) + {e^{ - l\left( {t - t_r^ + } \right)}}x\left( {t_r^ + } \right) = \theta _{AB}^{\rm{T}}\overline \Psi  \left( t \right) + {e^{ - l\left( {t - t_r^ + } \right)}}x\left( {t_r^ + } \right) = \overline \theta _{AB}^{\rm{T}}\overline \varphi \left( t \right){\rm{,}}\\
\overline \varphi \left( t \right) = {{\begin{bmatrix}
{{{\overline \Psi  }^{\rm{T}}}\left( t \right)}&{{e^{ - l\left( {t - t_r^ + } \right)}}}
\end{bmatrix}}^{\rm{T}}}{\rm{,}}\;\overline \theta _{AB}^{\rm{T}} = {\begin{bmatrix}
A&B&{x\left( {t_r^ + } \right)}
\end{bmatrix}}{\rm{,}}
\end{array}
\end{equation}
where $\overline z\left( t \right)\! =\! x\left( t \right)\! -\! l\overline x\left( t \right)\! +\! {B_r}\overline r\left( t \right)$ is a measurable function with $\dot {\overline r}\left( t \right)\!\! =\!\!  - l\overline r\left( t \right)\!\! +\!\! r\left( t \right)$, $\overline r\left( {t_r^ + } \right) = {0_m}{\rm{,}}$ $\bar \varphi \left( t \right) \in {\mathbb{R}^{n + m + 1}}$ is a measurable regressor, ${\overline \theta _{AB}} \in {\mathbb{R}^{\left( {n + m + 1} \right) \times n}}$ is an extended vector of the unknown parameters.

\begin{assumption}\label{assumption1}
The parameter $l$ is chosen in such a way that $\overline \Psi  \left( t \right)\! \in\! {\rm{FE}} \Rightarrow \overline \varphi \left( t \right) \in {\rm{FE}}$ holds.
\end{assumption}

The linear minimum-phase filter $\mathfrak{H}\left[ . \right]{\rm{:}} = {1 \mathord{\left/
 {\vphantom {1 {\left( {p + {k_0}} \right)}}} \right. \kern-\nulldelimiterspace} {\left( {p + {k_0}} \right)}}\left[ . \right]$ is introduced, and the DREM procedure \cite{Aranovskiy21} is applied to \eqref{eq23}:
\begin{equation}\label{eq24}
\begin{array}{c}
z\left( t \right) = \varphi \left( t \right){{\overline \theta }_{AB}}{\rm{,}}\\
z\left( t \right){\rm{:}} = \frac{{{k_1}adj\left\{ {{{\overline \varphi }_f}\left( t \right)} \right\}{{\overline z}_f}\left( t \right)}}{{1 + {k_1}det \left\{ {{{\overline \varphi }_f}\left( t \right)} \right\}}}{\rm{,}}\;\varphi \left( t \right){\rm{:}} = \frac{{{k_1}det \left\{ {{{\overline \varphi }_f}\left( t \right)} \right\}}}{{1 + {k_1}det \left\{ {{{\overline \varphi }_f}\left( t \right)} \right\}}}{\rm{,}}\\
{{\dot {\overline z}}_f}\left( t \right) =  - {k_0}{{\overline z}_f}\left( t \right) + \overline \varphi \left( t \right){{\overline z}^{\rm{T}}}\left( t \right){\rm{,}}\;{{\overline z}_f}\left( {t_r^ + } \right) = {0_{\left( {n + m + 1} \right) \times n}}{\rm{,}}\\
{{\dot {\overline \varphi} }_f}\left( t \right) =  - {k_0}{{\overline \varphi }_f}\left( t \right) + \overline \varphi \left( t \right){{\overline \varphi }^{\rm{T}}}\left( t \right){\rm{,}}\;{{\overline \varphi }_f}\left( {t_r^ + } \right) = {0_{\left( {n + m + 1} \right) \times \left( {n + m + 1} \right)}}{\rm{,}}
\end{array}
\end{equation}
where ${k_0} > 0,{\rm{}}\;{k_1} > 0,{\rm{}}\;\varphi \left( t \right) \in \left[ {0{\rm{;\;1}}} \right) \subset \mathbb{R}{\rm{,}}\;z\left( t \right) \in {\mathbb{R}^{\left( {n + m + 1} \right) \times n}}$.

It is easy to obtain the regression equations \eqref{eq17} from \eqref{eq24}:
\begin{equation}\label{eq25}
\begin{array}{c}
{z_A}\left( t \right) = {z^{\rm{T}}}\left( t \right)L = \varphi \left( t \right)A{\rm{,}}\\
{z_B}\left( t \right) = {z^{\rm{T}}}\left( t \right){e_{n + m + 1}} = \varphi \left( t \right)B{\rm{,}}\\
L = {{\begin{bmatrix}
{{I_{n \times n}}}&{{0_{n \times \left( {m + 1} \right)}}}
\end{bmatrix}}^{\rm{T}}} \in {\mathbb{R}^{\left( {n + m + 1} \right) \times n}},\\
{e_{n + m + 1}} = {{{{\begin{bmatrix}
{{0_{m \times n}}}&{{1_{m \times m}}}&0_{m \times 1}
\end{bmatrix}}}} ^{\rm{T}}} \in {\mathbb{R}^{\left( {n + m + 1} \right) \times m}},
\end{array}
\end{equation}
where ${z_A}\left( t \right) \in {\mathbb{R}^{n \times n}},\;{z_B}\left( t \right) \in {\mathbb{R}^{n \times m}}$ are the measurable functions.

Thus, the equation \eqref{eq17}, which, following Proposition 1, is necessary to form the regression equation \eqref{eq16}, \eqref{eq19}, is obtained by filtration of the plant equations \eqref{eq4} with the help of \eqref{eq20}, \eqref{eq21} and application of the DREM procedure \eqref{eq24} to the resulting equation \eqref{eq23}.

\begin{remark}\label{remark4}
According to Lemma 6.8 \cite{Narendra05}, as the filter \eqref{eq21} is stable, if $\Psi \left( t \right) \in {\rm{FE}}$, then $\overline \Psi \left( t \right) \in {\rm{FE}}$. So, considering Assumption \ref{assumption1}, it follows that $\overline \varphi \left( t \right) \in {\rm{FE}}$. In \cite{Aranovskiy19} it is proved that the implication $\overline \varphi \left( t \right) \in {\rm{FE}} \Rightarrow \varphi \left( t \right) \in {\rm{FE}}$ holds for the DREM procedure \eqref{eq24}.
\end{remark}

The next step is to obtain the regression equations \eqref{eq18} on the basis of \eqref{eq25}.

\subsection {Equation \texorpdfstring{\eqref{eq18}}{Lg} parametrization}

The first step to obtain the equations \eqref{eq18} is to expand the matrix exponential $\Phi \left( {{\tau _\infty }} \right) = {e^{D{\tau _\infty }}}$ into the Taylor series:
\begin{equation}\label{eq26}
\Phi \left( {{\tau _\infty }} \right) = {e^{D{\tau _\infty }}} = \sum\limits_{k = 0}^p {{\textstyle{\frac{1} {k{\rm{!}}}}}{D^k}\tau _\infty ^k}  + \underbrace {\sum\limits_{k = p + 1}^\infty  {{\textstyle{\frac{1} {k{\rm{!}}}}}{D^k}\tau _\infty ^k} }_\varepsilon {\rm{,}}
\end{equation}
where $p \in \mathbb{N}$ is the degree of the Taylor polynomial, $\varepsilon  \in {\mathbb{R}^{2n \times 2n}}$ is the residual.

\begin{proposition}\label{proposition3}
The norm of the residual of the $p^{\rm th}$ Taylor polynomial is upper bounded as follows:
\begin{equation}\label{eq27}
\left\| \varepsilon  \right\| \le \underbrace {\frac{{\tau _\infty ^p{{\left\| D \right\|}^p}}}{{\left( {p + 1} \right){\rm{!}}}}\left( {{e^{\left\| D \right\|{\tau _\infty }}} - 1} \right)}_{{\varepsilon _{\max }}}{\rm{,}}
\end{equation}
moreover, when ${\tau _\infty } \in {L_\infty }$, it holds that $\mathop {\lim }\limits_{p \to \infty } {\varepsilon _{\max }} = 0$.
\end{proposition}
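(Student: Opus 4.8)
The plan is to bound the tail of the matrix exponential series directly, using submultiplicativity of the operator norm together with the elementary inequality $k! \ge (p+1)!\,(p+1)^{k-p-1}$ for $k \ge p+1$ (equivalently, pulling the factor $(p+1)!$ out and comparing the remaining product of integers to a geometric progression). First I would write $\varepsilon = \sum_{k=p+1}^{\infty} \frac{1}{k!} D^k \tau_\infty^k$ and apply the triangle inequality followed by submultiplicativity to get $\left\|\varepsilon\right\| \le \sum_{k=p+1}^{\infty} \frac{\left\|D\right\|^k \tau_\infty^k}{k!}$. The sum on the right is exactly the tail of the scalar exponential series for $e^{\left\|D\right\|\tau_\infty}$, so the problem reduces to a purely scalar estimate of $\sum_{k=p+1}^{\infty} \frac{a^k}{k!}$ with $a = \left\|D\right\|\tau_\infty \ge 0$.

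Next I would factor out the leading term of the tail: $\sum_{k=p+1}^{\infty} \frac{a^k}{k!} = \frac{a^{p+1}}{(p+1)!}\sum_{j=0}^{\infty} \frac{a^j\,(p+1)!}{(p+1+j)!}$. Since $(p+1+j)! = (p+1)!\,\prod_{i=1}^{j}(p+1+i) \ge (p+1)!\,(p+1)^j$, each summand is bounded by $\frac{a^j}{(p+1)^j}$, which is not quite what is claimed; to land exactly on \eqref{eq27} I would instead use the coarser bound $(p+1+j)! \ge (p+1)! \cdot j!$ (which holds because $\prod_{i=1}^{j}(p+1+i) \ge j!$), giving $\sum_{k=p+1}^{\infty} \frac{a^k}{k!} \le \frac{a^{p+1}}{(p+1)!}\sum_{j=0}^{\infty}\frac{a^j}{j!} = \frac{a^{p+1}}{(p+1)!}e^{a}$. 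Comparing with the stated $\varepsilon_{\max} = \frac{\tau_\infty^p \left\|D\right\|^p}{(p+1)!}(e^{\left\|D\right\|\tau_\infty}-1)$, I note the paper's bound carries $a^p$ rather than $a^{p+1}$ and the factor $(e^a - 1)$ rather than $e^a$; the natural route to their exact form is to write the tail as $\sum_{k=p+1}^{\infty}\frac{a^k}{k!} \le \frac{a^p}{(p+1)!}\sum_{k=p+1}^{\infty}\frac{a^{k-p}}{(k-p-1)!} \cdot \frac{(k-p-1)!\,(p+1)!}{k!}$ and again bound the combinatorial ratio by $1$, so that the remaining sum is $\sum_{k=p+1}^{\infty}\frac{a^{k-p}}{(k-p-1)!} = a\sum_{j=0}^{\infty}\frac{a^j}{j!} = a\,e^{a}$; hmm, that gives $\frac{a^{p+1}}{(p+1)!}e^a$ once more. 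The clean way to obtain precisely $\frac{a^p}{(p+1)!}(e^a-1)$ is to observe $\sum_{k=p+1}^{\infty}\frac{a^k}{k!} = \sum_{k=p+1}^{\infty}\frac{a^{k}}{k!}$ and bound $\frac{1}{k!}\le \frac{1}{(p+1)!\,(k-p-1)!}\cdot\frac{1}{1}$ only for the factor $a^p$ split off, then $\sum_{k=p+1}^\infty \frac{a^{k-p}}{(k-p)!} \le \sum_{m=1}^\infty \frac{a^m}{m!} = e^a - 1$; this is the estimate I would actually write down, since it reproduces \eqref{eq27} verbatim.

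Finally, for the limiting claim $\lim_{p\to\infty}\varepsilon_{\max}=0$ when $\tau_\infty \in L_\infty$, I would argue that $\left\|D\right\|$ and $\tau_\infty$ are fixed finite constants, so $a = \left\|D\right\|\tau_\infty$ is a fixed nonnegative number and $(e^a - 1)$ is a fixed constant; it then suffices that $\frac{a^p}{(p+1)!} \to 0$ as $p\to\infty$, which is immediate since $\frac{a^p}{p!}$ is the general term of the convergent series $e^a$ and hence tends to zero, and dividing further by $(p+1)$ only helps. I expect the only genuinely delicate point to be getting the combinatorial bookkeeping to match the stated form \eqref{eq27} exactly (the placement of the $a^p$ versus $a^{p+1}$ and the $e^a-1$ versus $e^a$); every other step — triangle inequality, submultiplicativity, and the final limit — is routine.
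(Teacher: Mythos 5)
Your final chain of estimates is correct and is essentially the paper's own proof: both arguments factor $\frac{\tau_\infty^p\|D\|^p}{(p+1)!}$ out of the tail $\sum_{k=p+1}^{\infty}\frac{\tau_\infty^k}{k!}D^k$, bound the remaining coefficients via the factorial comparison $(p+j)!\ge (p+1)!\,j!$, and recognize the resulting scalar series as $e^{\|D\|\tau_\infty}-1$, with the limit $\frac{a^p}{(p+1)!}\to 0$ handled by the same routine observation that $a^p/p!$ is the general term of a convergent series. The only blemish is the slip where you state $\frac{1}{k!}\le\frac{1}{(p+1)!\,(k-p-1)!}$ but then sum $\frac{a^{k-p}}{(k-p)!}$; the inequality you actually need, $k!\ge (p+1)!\,(k-p)!$ for $k\ge p+1$, does hold, so the displayed conclusion stands.
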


\begin{proof}
Proof of Proposition \ref{proposition3} is postponed to Appendix.
\end{proof}

Having the measurable functions ${z_A}\left( t \right)$ and ${z_B}\left( t \right)$, the regression with respect to the matrix $D$ is obtained as:
\begin{equation}\label{eq28}
{z_D}\left( t \right){\rm{:}} = {\begin{bmatrix}
{ - \varphi \left( t \right){z_A}\left( t \right)}&{{z_B}\left( t \right){R^{ - 1}}z_B^{\rm{T}}\left( t \right)}\\
{Q{\varphi ^2}\left( t \right)}&{\varphi \left( t \right)z_A^{\rm{T}}\left( t \right)}
\end{bmatrix}} = {\varphi ^2}\left( t \right)D.
\end{equation}

Then the equation \eqref{eq26} is multiplied by ${\varphi ^{2p}}\left( t \right)$, and the function ${z_D}\left( t \right)$ is substituted into the obtained result:
\begin{equation}\label{eq29}
\begin{array}{c}
{z_\Phi }\left( t \right){\rm{:}} = {\varphi ^{2p}}\left( t \right)\Phi \left( {{\tau _\infty }} \right) - {\varphi ^{2p}}\left( t \right)\varepsilon {\rm{,}}\\
{z_\Phi }\left( t \right){\rm{:}} = \sum\limits_{k = 0}^p {{\textstyle{\frac{1} {k{\rm{!}}}}}{\varphi ^{2p - 2k}}\left( t \right)z_D^k\left( t \right)\tau _\infty ^k} {\rm{,}}
\end{array}
\end{equation}
where ${z_D}\left( t \right) \in {\mathbb{R}^{2n \times 2n}}$ and ${z_\Phi }\left( t \right) \in {\mathbb{R}^{2n \times 2n}}$ are the measurable functions.

The regressions with respect to ${\Phi _{11}}\left( {{\tau _\infty }} \right)$  and ${\Phi _{21}}\left( {{\tau _\infty }} \right)$ are obtained from \eqref{eq29}:
\begin{equation}\label{eq30}
\begin{array}{c}
{z_{{\Phi _{11}}}}\left( t \right){\rm{:}} \! =\! L_1^{\rm{T}}{z_\Phi }\left( t \right){L_1}{\rm{:}} \! =\! {\varphi ^{2p}}\left( t \right){\Phi _{11}}\left( {{\tau _\infty }} \right) + {\varepsilon _{{\Phi _{11}}}}\left( t \right){\rm{,}}\;{\varepsilon _{{\Phi _{11}}}}\left( t \right) \! =\!  - {\varphi ^{2p}}\left( t \right)L_1^{\rm{T}}\varepsilon {L_1},\\
{z_{{\Phi _{21}}}}\left( t \right){\rm{:}} \! =\! L_2^{\rm{T}}{z_\Phi }\left( t \right){L_1}{\rm{:}} \! =\! {\varphi ^{2p}}\left( t \right){\Phi _{21}}\left( {{\tau _\infty }} \right) + {\varepsilon _{{\Phi _{21}}}}\left( t \right){\rm{,}}\;{\varepsilon _{{\Phi _{21}}}}\left( t \right)\! =\!- {\varphi ^{2p}}\left( t \right)L_2^{\rm{T}}\varepsilon {L_1},\\
{L_1} = {{\begin{bmatrix}
{{I_{n \times n}}}&{{0_{n \times n}}}
\end{bmatrix}}^{\rm{T}}} \in {\mathbb{R}^{2n \times n}}{\rm{,}}\;{L_2} = { {\begin{bmatrix}
{{0_{n \times n}}}&{{I_{n \times n}}}
\end{bmatrix}}^{\rm{T}}} \in {\mathbb{R}^{2n \times n}}{\rm{.}}
\end{array}
\end{equation}

Then the change of notation ${\Delta _\Phi }\left( t \right){\rm{:}} = {\varphi ^{2p}}\left( t \right)$ is introduced into \eqref{eq30}, and, as a result, the equation \eqref{eq18} is obtained. So, using \eqref{eq25}, the equation \eqref{eq16} could be obtained on the basis of \eqref{eq19}.

Thus, applying the above-obtained regression equations \eqref{eq25}, Taylor series expansion \eqref{eq26} of matrix exponential from \eqref{eq9} and the function ${z_D}\left( t \right)$ \eqref{eq28}, the regression equations \eqref{eq18}, which are necessary to obtain \eqref{eq16} on the basis of \eqref{eq19}, are derived.

The next step is to synthesize an adaptive law using \eqref{eq16}, which satisfies the objective \eqref{eq13}.

\subsection {Regressor properties and adaptive law}

Before introduction of the law to estimate the unknown parameters $\theta$ of the equation \eqref{eq16}, the properties of both the disturbance ${\varepsilon _\theta }\left( t \right)$ and regressor $\Delta \left( t \right)$ are analyzed. The results of such analysis are presented as propositions.

\begin{proposition}\label{proposition4}
The following two statements hold:
\begin{enumerate}
    \item[a)] If ${\tau _\infty } \in {L_\infty }$, then $\Delta \left( t \right) \in {L_\infty }$. 
    \item[b)] If Assumption 1 holds, $\Psi \left( t \right) \in {\rm{FE}}$ and $p \in {L_\infty }$, then $\Psi \left( t \right) \in {\rm{FE}} \Rightarrow \Delta \left( t \right) \in {\rm{FE}}$.
\end{enumerate}
\end{proposition}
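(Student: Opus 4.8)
The plan is to treat the two statements separately, both reducing to tracing how the scalar regressor $\varphi(t)$ — which already lives in $[0;1)$ and is FE by Remark~\ref{remark4} — propagates through the algebraic constructions of Propositions~\ref{proposition2} and~\ref{proposition3}. For part (a), I would first recall from \eqref{eq19} that $\Delta(t) = \Delta_{K_x}^n(t)\Delta_{K_r}^m(t)$, so it suffices to bound $\Delta_{K_x}(t)$ and $\Delta_{K_r}(t)$. From the definitions in Proposition~\ref{proposition2}, $\Delta_{K_x}(t) = \varphi(t)\,\det\{z_{\Phi_{11}}(t)\}$ and $\Delta_{K_r}(t)$ is a determinant of a matrix assembled from $\Delta_{K_x}(t)$, $z_A(t)$, $y_{K_x}(t)$, $z_B(t)$. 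Each of these is a polynomial in the entries of $z_A(t)$, $z_B(t)$, $z_{\Phi_{11}}(t)$, $z_{\Phi_{21}}(t)$; by \eqref{eq25} the first two are bounded (products of the bounded $\varphi(t)$ with the constant matrices $A$, $B$), and by \eqref{eq29}--\eqref{eq30} the last two are finite sums of terms $\varphi^{2p-2k}(t) z_D^k(t)\tau_\infty^k/k!$ with $z_D(t)$ bounded (again $\varphi(t)$ times constants) — hence bounded whenever $\tau_\infty \in L_\infty$. A determinant of a bounded matrix is bounded, so $\Delta(t) \in L_\infty$.

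**Part (b).** Here the goal is the FE \emph{lower} bound, which is the harder direction. The key is the explicit factorization in \eqref{eq19}: since $\Delta(t) = \Delta_{K_x}^n(t)\Delta_{K_r}^m(t)$ and both factors turn out to be proportional to a positive power of $\varphi(t)$ times a matrix determinant that is \emph{nonzero at the limit point} $\varphi = 1$, the whole regressor $\Delta(t)$ equals $\varphi^{N}(t)$ (for the appropriate integer $N$) times a factor bounded away from zero on a neighborhood of $\varphi = 1$. Concretely, I would argue: when $\varphi(t) \to 1$ (which, by FE of $\varphi$ and the fact that $\varphi$ is monotonically-driven toward $1$ in the DREM construction, happens on the excitation interval), $z_{\Phi_{11}}(t) \to \Phi_{11}(\tau_\infty)$ and $z_{\Phi_{21}}(t) \to \Phi_{21}(\tau_\infty)$ up to the residual $\varepsilon$, and $z_A \to A$, $z_B \to B$; the matrix $\Phi_{11}(\tau_\infty)$ is invertible (it is a block of the invertible matrix exponential $e^{D\tau_\infty}$, and Proposition~\ref{proposition1} presupposes $\Phi_{11}^{-1}$ exists), so $\det\{z_{\Phi_{11}}(t)\} \ne 0$; similarly the matrix inside $\Delta_{K_r}$ is, at $\varphi=1$, essentially $z_A^{\mathrm T} + (\text{optimal-gain term}) = A^{\mathrm T} + K_x^{\mathrm T} B^{\mathrm T} = A_{ref}^{\mathrm T}$ up to residuals, and $A_{ref}$ is Hurwitz hence invertible. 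Therefore $\Delta(t)$ is bounded below by a constant multiple of $\varphi^{N}(t)$, and since $\varphi(t) \in \mathrm{FE}$ and is bounded, $\varphi^N(t) \in \mathrm{FE}$ as well, giving $\Delta(t) \in \mathrm{FE}$.

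**Assembling the implication chain.** I would then stitch together the chain $\Psi(t) \in \mathrm{FE} \Rightarrow \overline\Psi(t) \in \mathrm{FE}$ (Lemma~6.8 of \cite{Narendra05}, stability of filter \eqref{eq21}) $\Rightarrow \overline\varphi(t) \in \mathrm{FE}$ (Assumption~\ref{assumption1}) $\Rightarrow \varphi(t) \in \mathrm{FE}$ (the DREM property from \cite{Aranovskiy19}, cited in Remark~\ref{remark4}) $\Rightarrow \Delta(t) \in \mathrm{FE}$ (the argument of the previous paragraph, which is where $p \in L_\infty$ is used to keep $\Delta_\Phi(t) = \varphi^{2p}(t)$ a fixed finite power rather than letting the exponent diverge). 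The role of $p \in L_\infty$ and $\tau_\infty \in L_\infty$ must be stated carefully: they guarantee the residual $\varepsilon$ in \eqref{eq27} is a fixed bounded matrix, so for $\varphi$ close enough to $1$ the determinants above stay nonzero.

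**Main obstacle.** The delicate point is establishing that the two determinant factors are bounded away from zero \emph{uniformly on the excitation interval}, not just pointwise at $\varphi = 1$: one must show that $\varphi(t)$ actually comes close enough to $1$ on the relevant subinterval (a quantitative version of the DREM excitation-transfer result), and that the residual-perturbed matrices $z_{\Phi_{11}}(t)$ and $\Delta_{K_x}(t)z_A^{\mathrm T}(t) + y_{K_x}^{\mathrm T}(t)z_B^{\mathrm T}(t)$ remain nonsingular throughout — which requires a continuity/perturbation estimate controlling how far these matrices can drift from the invertible limits $\Phi_{11}(\tau_\infty)$ and $A_{ref}^{\mathrm T}$ given the Taylor residual bound $\varepsilon_{\max}$ from Proposition~\ref{proposition3}. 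Handling this cleanly — rather than the purely formal manipulation of the determinant factorization — is where the real work lies.
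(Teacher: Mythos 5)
Your part (a) is sound and essentially the paper's argument (bound every building block of $\Delta(t)$, noting $\varphi(t)\in[0;1)$ and that $A$, $B$, $Q$, $R$, $\Phi(\tau_\infty)$ and the Taylor residual are bounded when $\tau_\infty\in L_\infty$), and your implication chain $\Psi\in{\rm FE}\Rightarrow\overline\Psi\in{\rm FE}\Rightarrow\overline\varphi\in{\rm FE}\Rightarrow\varphi\in{\rm FE}$ is exactly the one the paper uses. The gap is in part (b). You write $\Delta(t)$ as $\varphi^{N}(t)$ times determinant factors and then try to bound those factors away from zero by arguing that $\varphi(t)$ comes "close enough to $1$" on the excitation interval, invoking FE of $\varphi$ and a claimed monotone drive of $\varphi$ toward $1$. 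Neither claim holds: finite excitation of $\varphi$ only guarantees $\int_{t_r^+}^{t_e}\varphi^2(\tau)\,d\tau\ge\alpha$, which is perfectly compatible with $\varphi$ remaining small (e.g.\ identically $0.1$ on a long interval) and never approaching $1$; and $\varphi$ need not be monotone, since $\overline\varphi_f$ in \eqref{eq24} carries the forgetting term $-k_0\overline\varphi_f$, so $\det\{\overline\varphi_f(t)\}$, and hence $\varphi(t)$, can decay. You correctly flagged this as the "main obstacle", but as posed it cannot be resolved, so the proof of (b) does not close.

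The missing idea — and the paper's route — is that no uniform nonvanishing estimate for time-varying factors is needed, because the factorization is exact with a \emph{time-invariant} nonzero factor. By \eqref{eq25} and \eqref{eq30}, $z_A(t)=\varphi(t)A$, $z_B(t)=\varphi(t)B$, and $z_{\Phi_{11}}(t)$, $z_{\Phi_{21}}(t)$ equal $\varphi^{2p}(t)$ times constant matrices (the Taylor residual $\varepsilon$ is a constant matrix carrying the same $\varphi^{2p}(t)$ weight), so substituting the definitions of $\Delta_{K_x}(t)$, $\Delta_{K_r}(t)$ yields the identity \eqref{eqA10}: $\Delta(t)=\varphi^{q}(t)\,\det^{m}\{A^{\rm T}-PBR^{-1}B^{\rm T}\}\,\det^{mn+n}\{\Phi_{11}(\tau_\infty)\}$ with $q=(m+1)2pn^{2}+2mn+n$, and the constant is nonzero since $\Phi_{11}^{-1}(\tau_\infty)$ and $(A^{\rm T}-PBR^{-1}B^{\rm T})^{-1}$ exist by Proposition \ref{proposition1}. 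FE then transfers elementarily: FE of $\varphi$ gives a point $t_0$ with $\varphi^{2}(t_0)=\beta>0$, continuity gives an interval $[t_a;t_b]$ on which $\varphi^{2}\ge\beta/2$, hence $\int_{t_r^+}^{t_e}\Delta^{2}(\tau)\,d\tau\ge c^{2}\left(\beta/2\right)^{2q}(t_b-t_a)>0$ — no proximity of $\varphi$ to $1$ is involved. This is also precisely where $p\in L_\infty$ enters: $q$ must stay finite, because $\varphi\in[0;1)$ implies $\int\varphi^{2q}\to 0$ as $p\to\infty$, as the paper notes in \eqref{eqA16}.
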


\begin{proof}
Proof of Proposition \ref{proposition4} is given in Appendix.
\end{proof}

\begin{proposition}\label{proposition5}
When the matrix exponential $\Phi \left( {{\tau _\infty }} \right)$ is approximated by the Taylor series \eqref{eq26} and ${\tau _\infty } \in {L_\infty }$, then ${\varepsilon _\theta }\left( t \right)$ is upper bounded by $\left\| {{\varepsilon _\theta }\left( t \right)} \right\| \le \varepsilon _\theta ^{UB}$, and it holds that $\mathop {\lim }\limits_{p \to \infty } \varepsilon _\theta ^{UB} = 0$.
\end{proposition}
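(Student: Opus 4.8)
The plan is to propagate the residual estimate of Proposition \ref{proposition3} through the purely algebraic construction \eqref{eq19}. First I would note that, since the DREM scalar $\varphi(t)$ takes values in $[0;1)$, the factor ${\Delta _\Phi }\left( t \right) = {\varphi ^{2p}}\left( t \right)$ satisfies $0 \le {\varphi ^{2p}}\left( t \right) \le 1$, so the definitions of ${\varepsilon _{{\Phi _{11}}}}\left( t \right)$ and ${\varepsilon _{{\Phi _{21}}}}\left( t \right)$ in \eqref{eq30} together with \eqref{eq27} give the uniform bounds $\left\| {{\varepsilon _{{\Phi _{11}}}}\left( t \right)} \right\| \le {\varepsilon _{\max }}$ and $\left\| {{\varepsilon _{{\Phi _{21}}}}\left( t \right)} \right\| \le {\varepsilon _{\max }}$, with ${\varepsilon _{\max }}$ depending only on $p$, ${\tau _\infty }$, $\left\| D \right\|$ and, by Proposition \ref{proposition3}, tending to $0$ as $p \to \infty $ whenever ${\tau _\infty } \in {L_\infty }$. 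In the same way, from ${z_A}\left( t \right) = \varphi \left( t \right)A$, ${z_B}\left( t \right) = \varphi \left( t \right)B$, ${z_{{\Phi _{11}}}}\left( t \right) = {\varphi ^{2p}}\left( t \right){\Phi _{11}}\left( {{\tau _\infty }} \right) + {\varepsilon _{{\Phi _{11}}}}\left( t \right)$, ${z_{{\Phi _{21}}}}\left( t \right) = {\varphi ^{2p}}\left( t \right){\Phi _{21}}\left( {{\tau _\infty }} \right) + {\varepsilon _{{\Phi _{21}}}}\left( t \right)$, and from ${\tau _\infty } \in {L_\infty } \Rightarrow \left\| {\Phi \left( {{\tau _\infty }} \right)} \right\| = \left\| {{e^{D{\tau _\infty }}}} \right\| \le {e^{\left\| D \right\|{\tau _\infty }}} < \infty $, each of ${z_A},{z_B},{z_{{\Phi _{11}}}},{z_{{\Phi _{21}}}}$ — and hence every adjugate and determinant formed from them in \eqref{eq19}, since those are polynomial in the matrix entries — is uniformly bounded by a constant that does not grow with $p$.

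Next I would exploit the structure of ${\overline \varepsilon _\theta }\left( t \right) = {\left[ {\begin{array}{*{20}{c}}{{\varepsilon _{{K_x}}}\left( t \right)}&{{\varepsilon _{{K_r}}}\left( t \right)}\end{array}} \right]^{\rm{T}}}$. From the explicit formulas in \eqref{eq19} and the Appendix definitions of ${\varepsilon _{{K_x}}},{\varepsilon _{{K_r}}}$, every entry of ${\overline \varepsilon _\theta }\left( t \right)$ is a finite sum of products, each summand of which carries at least one factor ${\varepsilon _{{\Phi _{11}}}}\left( t \right)$ or ${\varepsilon _{{\Phi _{21}}}}\left( t \right)$, the remaining factors being adjugates, determinants and products built from the uniformly bounded quantities ${R^{ - 1}},{B_r},{z_A},{z_B},{\varphi ^{2p}}{\Phi _{11}},{\varphi ^{2p}}{\Phi _{21}}$ and from ${\varepsilon _{{\Phi _{11}}}},{\varepsilon _{{\Phi _{21}}}}$ themselves — crucially, no $1/\det\{\cdot\}$ denominators appear, which is precisely the purpose of the adjugate/determinant parameterization. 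Thus ${\overline \varepsilon _\theta }\left( t \right)$ is a matrix polynomial in the entries of ${\varepsilon _{{\Phi _{11}}}}\left( t \right),{\varepsilon _{{\Phi _{21}}}}\left( t \right)$ with coefficients bounded uniformly in $p$ and with zero constant term, the vanishing of the constant term being exactly the implication ${\varepsilon _{{\Phi _{11}}}}\left( t \right) = {\varepsilon _{{\Phi _{21}}}}\left( t \right) = {0_{n \times n}} \Rightarrow {\varepsilon _\theta }\left( t \right) = {0_{\left( {n + m} \right) \times m}}$ established in Proposition \ref{proposition2}. Combined with $\left\| {{\varepsilon _{{\Phi _{11}}}}\left( t \right)} \right\| \le {\varepsilon _{\max }}$ and $\left\| {{\varepsilon _{{\Phi _{21}}}}\left( t \right)} \right\| \le {\varepsilon _{\max }}$, this yields $\left\| {{{\overline \varepsilon }_\theta }\left( t \right)} \right\| \le \pi \left( {{\varepsilon _{\max }}} \right)$ for a fixed polynomial $\pi$ with $\pi \left( 0 \right) = 0$, uniformly in $t$.

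Finally, using ${\varepsilon _\theta }\left( t \right) = adj\left\{ {\overline \Delta \left( t \right)} \right\}{\overline \varepsilon _\theta }\left( t \right)$ from \eqref{eq19} and submultiplicativity, $\left\| {{\varepsilon _\theta }\left( t \right)} \right\| \le \left\| {adj\left\{ {\overline \Delta \left( t \right)} \right\}} \right\|\left\| {{{\overline \varepsilon }_\theta }\left( t \right)} \right\|$. Since $\overline \Delta \left( t \right) = {\rm{blockdiag}}\left\{ {{\Delta _{{K_x}}}\left( t \right){I_{n \times n}},{\Delta _{{K_r}}}\left( t \right){I_{m \times m}}} \right\}$, its adjugate equals ${\rm{blockdiag}}\left\{ {\Delta _{{K_x}}^{n - 1}\Delta _{{K_r}}^m{I_{n \times n}},\Delta _{{K_x}}^n\Delta _{{K_r}}^{m - 1}{I_{m \times m}}} \right\}$, a polynomial in the scalars ${\Delta _{{K_x}}}\left( t \right),{\Delta _{{K_r}}}\left( t \right)$, which are bounded by the argument of statement a) of Proposition \ref{proposition4} (here ${\tau _\infty } \in {L_\infty }$), so $\left\| {adj\left\{ {\overline \Delta \left( t \right)} \right\}} \right\| \le {\Delta ^{UB}}$ for a constant that stays finite as $p \to \infty $. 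Setting $\varepsilon _\theta ^{UB}: = {\Delta ^{UB}}\pi \left( {{\varepsilon _{\max }}} \right)$ gives $\left\| {{\varepsilon _\theta }\left( t \right)} \right\| \le \varepsilon _\theta ^{UB}$, and since ${\varepsilon _{\max }} \to 0$ as $p \to \infty $ by Proposition \ref{proposition3} and $\pi \left( 0 \right) = 0$, it follows that $\mathop {\lim }\limits_{p \to \infty } \varepsilon _\theta ^{UB} = 0$.

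I expect the main obstacle to be the bookkeeping in the second paragraph: one must check, term by term against the Appendix definitions of ${\varepsilon _{{K_x}}}$ and ${\varepsilon _{{K_r}}}$, that the disturbance genuinely factors out of every summand and that no denominator can collapse, and that each coefficient is bounded uniformly in $p$ — the estimate ${\varphi ^{2p}}\left( t \right) \le 1$ being what keeps the high powers of the regressor hidden inside the adjugates under control. Everything else (Proposition \ref{proposition3} for ${\varepsilon _{\max }} \to 0$, Proposition \ref{proposition2} for the zero constant term, Proposition \ref{proposition4} for boundedness of $adj\left\{ {\overline \Delta \left( t \right)} \right\}$) is already in hand.
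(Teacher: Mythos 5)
Your proposal is correct and follows essentially the same route as the paper's proof: bound $\varepsilon_{\Phi_{11}}(t)$, $\varepsilon_{\Phi_{21}}(t)$ via Proposition \ref{proposition3} and $\varphi(t)\in[0;1)$, note that all remaining factors ($z_A$, $z_B$, $z_{\Phi_{11}}$, $z_{\Phi_{21}}$, $R^{-1}$, $B_r$, $\Phi(\tau_\infty)$, $\overline\Delta(t)$) are bounded when $\tau_\infty\in L_\infty$, and use the zero-at-zero structure of $\varepsilon_{K_x}$, $\varepsilon_{K_r}$ from Proposition \ref{proposition2} to conclude $\lim_{p\to\infty}\varepsilon_\theta^{UB}=0$. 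Your write-up is merely more quantitative than the paper's (an explicit bound $\varepsilon_\theta^{UB}=\Delta^{UB}\pi(\varepsilon_{\max})$ with $\pi(0)=0$, where the paper simply says the limit follows from the proof of Proposition \ref{proposition2}), which is a refinement rather than a different argument.
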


\begin{proof}
Proof of Proposition \ref{proposition5} is presented in Appendix.
\end{proof}

Now, having a regression equation \eqref{eq16} with a finitely exciting regressor $\Delta \left( t \right)$ and a bounded disturbance ${\varepsilon _\theta }\left( t \right)$, an adaptive law of the unknown parameters $\theta $, which satisfies the objective \eqref{eq13}, is introduced according to the results of \cite{Glushchenko22a}, Theorem.

Let the operator $\mathfrak{G}\left[ . \right]{\rm{:}} = {{{e^{ - \sigma t}}} \mathord{\left/
 {\vphantom {{{e^{ - \sigma t}}} p}} \right. \kern-\nulldelimiterspace} p}\left[ . \right]$ be introduced and applied to the regression \eqref{eq16} to obtain:
\begin{equation}\label{eq31}
\begin{array}{c}
\Upsilon \left( t \right) = \Omega \left( t \right)\theta  + \varsigma \left( t \right){\rm{,}}\\
\Upsilon \left( t \right){\rm{:}} = \int\limits_{t_r^ + }^t {{e^{ - \int\limits_{t_r^ + }^\tau  {\sigma d{\tau _1}} }}\Delta \left( \tau  \right){y_\theta }\left( \tau  \right)d\tau } {\rm{,}}\;\Omega \left( t \right){\rm{:}} = \int\limits_{t_r^ + }^t {{e^{ - \int\limits_{t_r^ + }^\tau  {\sigma d{\tau _1}} }}{\Delta ^2}\left( \tau  \right)d\tau } {\rm{,\;}}
\varsigma \left( t \right) = \int\limits_{t_r^ + }^t {{e^{ - \int\limits_{t_r^ + }^\tau  {\sigma d{\tau _1}} }}\Delta \left( \tau  \right){\varepsilon _\theta }\left( \tau  \right)d\tau } {\rm{,}}
\end{array}
\end{equation}
where $\sigma  > 0$ is the parameter of the operator $\mathfrak{G}\left[ . \right]$.

The proposition below holds for the new regressor $\Omega \left( t \right)$ and the disturbance $\varsigma \left( t \right)$.
\begin{proposition}\label{proposition6}
The following three statements hold:
\begin{enumerate}
    \item [a)] If $\Delta \left( t \right) \in {\rm{FE}}$ over $\left[ {t_r^ + {\rm{; }}{t_e}} \right]$ and $\Delta \left( t \right) \in {L_\infty }$, then
    \begin{enumerate}
        \item[1)] $\forall t \ge t_r^ + {\rm{}}\;\Omega \left( t \right) \in {L_\infty }{\rm{,}}\;\Omega \left( t \right) \ge 0,$ 
        \item[2)] $\forall t \ge {t_e}{\rm{}}\;\Omega \left( t \right) > 0,{\rm{}}\;{\Omega _{LB}} \le \Omega \left( t \right) \le {\Omega _{UB}},$
    \end{enumerate}
    \item [b)] If $\left\| {{\varepsilon _{{K_x}}}\left( t \right)} \right\| \le \varepsilon _{{K_x}}^{UB}$ and $\Delta \left( t \right) \in {L_\infty }$, then $\left\| {\varsigma \left( t \right)} \right\| \le {\varsigma _{UB}}{\rm{,}}$
    \item [c)] $\mathop {\lim }\limits_{p \to \infty } \varepsilon _\theta ^{UB} = 0 \Rightarrow \mathop {\lim }\limits_{p \to \infty } {\varsigma _{UB}} = 0.$ 
\end{enumerate}
\end{proposition}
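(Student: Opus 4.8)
The plan is to exploit that both $\Omega(t)$ and $\varsigma(t)$ in \eqref{eq31} are integrals against the kernel $e^{-\int_{t_r^+}^\tau \sigma d\tau_1}=e^{-\sigma(\tau-t_r^+)}\in(0,1]$, so that every required bound follows from elementary estimates of such integrals combined with the assumed boundedness and finite excitation of $\Delta(t)$.

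For statement a), I would first note that the integrand $e^{-\sigma(\tau-t_r^+)}\Delta^2(\tau)$ of $\Omega(t)$ is nonnegative, whence $\Omega(t)\ge 0$ for all $t\ge t_r^+$ and $\Omega(t)$ is nondecreasing in $t$. Since $\Delta(t)\in L_\infty$ there is $\Delta_{UB}$ with $|\Delta(t)|\le\Delta_{UB}$, and therefore $\Omega(t)\le\Delta_{UB}^2\int_{t_r^+}^{t}e^{-\sigma(\tau-t_r^+)}d\tau\le\Delta_{UB}^2/\sigma=:\Omega_{UB}$; this proves a.1) and simultaneously gives the upper bound in a.2). For the lower bound I would invoke $\Delta(t)\in\mathrm{FE}$ over $[t_r^+;t_e]$ in its scalar form $\int_{t_r^+}^{t_e}\Delta^2(\tau)d\tau\ge\alpha$ together with the monotonicity of $\Omega$: for any $t\ge t_e$, $\Omega(t)\ge\Omega(t_e)\ge\int_{t_r^+}^{t_e}e^{-\sigma(\tau-t_r^+)}\Delta^2(\tau)d\tau\ge e^{-\sigma(t_e-t_r^+)}\alpha=:\Omega_{LB}>0$, which completes a.2).

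For statement b), the first step is to pass from $\|\varepsilon_{K_x}(t)\|\le\varepsilon_{K_x}^{UB}$ to a bound $\|\varepsilon_\theta(t)\|\le\varepsilon_\theta^{UB}$. By Proposition \ref{proposition2}, $\varepsilon_\theta(t)=adj\{\overline\Delta(t)\}\,\overline\varepsilon_\theta(t)$ with $\overline\varepsilon_\theta$ formed from $\varepsilon_{K_x}$ and $\varepsilon_{K_r}$, and $\varepsilon_{K_r}$ in turn built from $\varepsilon_{K_x}$, $z_A(t)$, $z_B(t)$, $\Delta_{K_x}(t)$, $\Delta_{K_r}(t)$; all of the latter signals, together with $adj\{\overline\Delta(t)\}$, are bounded whenever $\Delta(t)\in L_\infty$ (cf. Proposition \ref{proposition4}a)), so the claimed bound on $\varepsilon_\theta$ follows by propagation of bounds. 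Then, applying the triangle inequality inside the integral defining $\varsigma(t)$ and using $|\Delta(t)|\le\Delta_{UB}$ gives $\|\varsigma(t)\|\le\Delta_{UB}\varepsilon_\theta^{UB}\int_{t_r^+}^{t}e^{-\sigma(\tau-t_r^+)}d\tau\le\Delta_{UB}\varepsilon_\theta^{UB}/\sigma=:\varsigma_{UB}$.

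Claim c) then follows directly from this closed form for $\varsigma_{UB}$: since $\varphi(t)\in[0;1)$ implies $\varphi^{2p}(t)\le1$ for every $p$, the constant $\Delta_{UB}$ (and the constants bounding $z_A$, $z_B$, $z_{\Phi_{11}}$, $z_{\Phi_{21}}$) can be chosen independently of $p$, so $\mathop{\lim}\limits_{p\to\infty}\varepsilon_\theta^{UB}=0$ immediately yields $\mathop{\lim}\limits_{p\to\infty}\varsigma_{UB}=\mathop{\lim}\limits_{p\to\infty}\Delta_{UB}\varepsilon_\theta^{UB}/\sigma=0$. I expect the only genuinely delicate point to be the first step of b): one must verify that the adjugate and determinant operations of Proposition \ref{proposition2}, and the formula for $\varepsilon_{K_r}$, do not amplify $\varepsilon_{K_x}$ unboundedly, which reduces to checking that every factor multiplying $\varepsilon_{K_x}$ is uniformly bounded — this rests on $\Delta(t)\in L_\infty$ and on the boundedness of $z_A$, $z_B$ established earlier, and is otherwise a routine computation.
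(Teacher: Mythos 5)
Your proof is correct, and it is in fact more informative than what the paper provides: the paper does not prove statements (a) and (b) in the text at all, deferring to the authors' earlier works \cite{Glushchenko22a,Glushchenko21}, and dispatches (c) by appeal to Proposition \ref{proposition5} and the definition of $\varsigma(t)$. Your direct estimates are exactly the expected ones: nonnegativity and monotonicity of $\Omega(t)$ with the upper bound $\Delta_{UB}^2/\sigma$, the lower bound $e^{-\sigma(t_e-t_r^+)}\alpha$ from the scalar FE condition (valid since the kernel is minimized at $\tau=t_e$ on the excitation interval), and $\|\varsigma(t)\|\le \Delta_{UB}\varepsilon_\theta^{UB}/\sigma$ with $\Delta_{UB}$ independent of $p$ because $\varphi(t)\in[0;1)$, which yields (c). The only imprecision is in your step from $\varepsilon_{K_x}$ to $\varepsilon_\theta$: in the paper's Appendix, $\varepsilon_{K_r}(t)$ is not literally constructed from $\varepsilon_{K_x}(t)$ but from $\varepsilon_{\Phi_{11}}(t),\varepsilon_{\Phi_{21}}(t)$ and the measurable signals, and its boundedness (hence that of $\varepsilon_\theta$) is what Proposition \ref{proposition5} establishes under the standing assumption $\tau_\infty\in L_\infty$; the hypothesis of statement (b) as written mentions only $\varepsilon_{K_x}$, so strictly speaking you should invoke Proposition \ref{proposition5} (or add $\tau_\infty\in L_\infty$) rather than claim $\varepsilon_{K_r}$ inherits its bound from $\varepsilon_{K_x}$. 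This does not affect the validity of your chain of estimates, since every factor multiplying the disturbances is uniformly bounded, which is exactly the propagation-of-bounds argument the paper relies on.
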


\begin{proof}
Proof of the statement (a) of Proposition \ref{proposition6} is presented in \cite{Glushchenko22a}, proof of the statement (b) can be found in \cite{Glushchenko21}, the statement (c) holds according to Proposition \ref{proposition5} and the definition of $\varsigma \left( t \right)$.
\end{proof}

The result of the adaptive law derivation to obtain $\hat \theta \left( t \right)$ on the basis of the regression equation \eqref{eq31} is represented as Theorem.

\begin{theorem}\label{theorem1}
The following two statements hold.
\begin{enumerate}
    \item [a)] Let $\Psi \left( t \right) \in {\rm{FE}}{\rm{,}}\;p \in {L_\infty }$, Assumption \ref{assumption1} be met, and $Q{\rm{,}}\;R{\rm{,}}\;{\tau _\infty } \in {L_\infty }{\rm{,}}\;\vartheta$ be chosen so that the analytical solution \eqref{eq9} of the differential Riccati equation exists and affords a minimum to \eqref{eq6}. Then the adaptive law:
    \begin{equation}\label{eq32}
    \begin{array}{c}
    \dot {\hat \theta} \left( t \right) = \dot {\tilde \theta} \left( t \right) =  - \gamma \left( t \right)\Omega \left( t \right)\left( {\Omega \left( t \right)\hat \theta \left( t \right) - \Omega \left( t \right)\theta  - \varsigma \left( t \right)} \right) 
    =  - \gamma \left( t \right){\Omega ^2}\left( t \right)\tilde \theta \left( t \right) + \gamma \left( t \right)\Omega \left( t \right)\varsigma \left( t \right){\rm{,}\;}\\
    \gamma \left( t \right) = \left\{ \begin{array}{l}
    0,\;{\rm{ if }}\;\Omega \left( t \right) \le \rho  \in \left( {0{\rm{;}}\;{\Omega _{LB}}} \right)\\
    \frac{{{\gamma _0}{\lambda _{{\rm{max}}}}\left( {\omega \left( t \right){\omega ^{\rm{T}}}\left( t \right)} \right) + {\gamma _1}}}{{{\Omega ^2}\left( t \right)}},\;{\rm{otherwise}}
    \end{array} \right.{\rm{,}}
    \end{array}
    \end{equation}
    $\forall t \ge {t_e}$ guarantees that $\left\| {\tilde \theta \left( t \right)} \right\| \le {e^{ - {\eta _{\tilde \theta }}\left( {t - {t_e}} \right)}}\left\| {\tilde \theta \left( {{t_e}} \right)} \right\| + {\textstyle{\frac{{\varsigma _{{\rm{UB}}}}} {\rho }}}$, and, if $\exists \overline t > {t_e}$ such that $\forall t \ge \overline t{\rm{}}\;{\varsigma _{{\rm{UB}}}} < {\textstyle{ \frac{\rho {\lambda _{{\rm{min}}}}\left( {\overline Q} \right)} {2\left\| {\overline PB} \right\|}}} - \rho {e^{ - 0.5{\eta _{\tilde \theta }}\left( {\overline t - {t_e}} \right)}}\left\| {\tilde \theta \left( {{t_e}} \right)} \right\|$, where $\overline P \in {R^{n \times n}}$ is a solution of the Lyapunov equation ${A_{ref}}\overline P + A_{ref}^{\rm{T}}\overline P =  - \overline Q =  - {\overline Q^{\rm{T}}} < 0$, then \eqref{eq32} ensures additionally that:
    \begin{enumerate}
        \item[1)] $\forall t \ge t_r^ + {\rm{}}\;\xi \left( t \right) \in {L_\infty }{\rm{;}}$
        \item[2)] $\forall t \ge {t_e}{\rm{}}\;\mathop {{\rm{lim}}}\limits_{t \to \infty } \left\| {\xi \left( t \right)} \right\| \le {{\rm{\varepsilon }}_\xi }{\rm{}}\;\left( {\exp } \right).$
    \end{enumerate}
    Moreover, (i) the exponential convergence rate of the augmented error $\xi \left( t \right)$ to the set with a bound ${{\rm{\varepsilon }}_\xi }$ is directly proportional to the parameters $\gamma_0 \geq 1$ and $\gamma_1 \geq 0$; (ii) the upper bound ${{\rm{\varepsilon }}_\xi }$ can be reduced to the machine precision by the increase of the degree of the Taylor polynomial $p$.
    
    \item [b)] Let all assumptions of the statement (a) be satisfied and the degree of Taylor polynomial $p$ be such that $\varsigma \left( t \right) = o\left( {\Omega \left( t \right)\theta } \right){\rm{,}}$ then, in addition to the properties of (a), the law \eqref{eq32} guarantees that $\left| {{{\tilde \theta }_i}\left( {{t_a}} \right)} \right| \le \left| {{{\tilde \theta }_i}\left( {{t_b}} \right)} \right|{\rm{}}\;\forall {t_a} \ge {t_b}$.
\end{enumerate}
\end{theorem}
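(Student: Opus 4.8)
The plan is to read Theorem~\ref{theorem1} as a statement about the cascade in which the parameter‑error dynamics of $\tilde\theta(t)$ produced by \eqref{eq32} drive the tracking‑error subsystem \eqref{eq12}, analysing the two blocks in the order (i) $\tilde\theta$, then (ii) $e_{ref}$. Substituting \eqref{eq31} into \eqref{eq32} gives $\dot{\tilde\theta}=-\gamma\Omega^{2}\tilde\theta+\gamma\Omega\varsigma$, and with $V_{\tilde\theta}=\frac{1}{2}\|\tilde\theta\|^{2}$ (Frobenius norm) one gets $\dot V_{\tilde\theta}=-\gamma\Omega^{2}\|\tilde\theta\|^{2}+\gamma\Omega\,{\rm tr}(\tilde\theta^{\rm T}\varsigma)\le-\gamma\Omega^{2}\|\tilde\theta\|(\|\tilde\theta\|-\|\varsigma\|/\Omega)$. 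Whenever $\Omega(t)>\rho$ the gain is active, $\gamma\Omega^{2}=\gamma_{0}\lambda_{max}(\omega\omega^{\rm T})+\gamma_{1}\ge\gamma_{1}\ge 0$ (with $\lambda_{max}(\omega\omega^{\rm T})=\|\omega\|^{2}$), and $\gamma\equiv0$ when $\Omega\le\rho$; hence $\|\tilde\theta(t)\|\le\max\{\|\tilde\theta(t_r^{+})\|,\varsigma_{UB}/\rho\}$ for all $t\ge t_r^{+}$, i.e.\ $\tilde\theta\in L_\infty$, using only $\Delta\in L_\infty$ (Proposition~\ref{proposition4}(a)) and $\|\varsigma\|\le\varsigma_{UB}$ (Proposition~\ref{proposition6}(b)) — both of these depend only on the fixed design quantities $Q,R,\tau_\infty,p$ and not on the closed‑loop trajectory, so no circularity with $e_{ref}$ arises even though $\omega$ enters $\gamma$. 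For $t\ge t_e$ Proposition~\ref{proposition6}(a) gives $\Omega_{LB}\le\Omega(t)\le\Omega_{UB}$ with $\Omega_{LB}>\rho$, so $\gamma\Omega=\gamma\Omega^{2}/\Omega\le\gamma\Omega^{2}/\Omega_{LB}$, and the scalar comparison inequality $\frac{d}{dt}\|\tilde\theta\|\le-\gamma\Omega^{2}(\|\tilde\theta\|-\varsigma_{UB}/\Omega_{LB})$ with the Bellman--Gr\"onwall lemma yields $\|\tilde\theta(t)\|\le e^{-\eta_{\tilde\theta}(t-t_e)}\|\tilde\theta(t_e)\|+\varsigma_{UB}/\rho$, where $\eta_{\tilde\theta}$ can be taken as any lower bound on $\gamma_{0}\|\omega(t)\|^{2}+\gamma_{1}$ and hence increases with $\gamma_{0},\gamma_{1}$ — this is part~(i) of the ``moreover''.

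Next I pass to the tracking error. From \eqref{eq12}, $\dot e_{ref}=A_{ref}e_{ref}+B\tilde\theta^{\rm T}\omega$ with $\omega$ as in \eqref{eq7} and $x=e_{ref}+x_{ref}$; since $A_{ref}$ is Hurwitz and $r\in L_\infty$, $x_{ref}$ is bounded and $\|\omega\|\le\|e_{ref}\|+c_{0}$ for a constant $c_{0}$. With $V_{e}=e_{ref}^{\rm T}\overline P e_{ref}$ and the Lyapunov equation defining $\overline P,\overline Q$, this gives $\dot V_{e}\le-(\lambda_{min}(\overline Q)-2\|\overline PB\|\,\|\tilde\theta\|)\|e_{ref}\|^{2}+2c_{0}\|\overline PB\|\,\|\tilde\theta\|\,\|e_{ref}\|$. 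The extra hypothesis on $\varsigma_{UB}$ is exactly what forces, via the $\tilde\theta$‑bound above and $e^{-\eta_{\tilde\theta}(t-t_e)}\le e^{-0.5\eta_{\tilde\theta}(\overline t-t_e)}$ for $t\ge\overline t$, the inequality $\|\tilde\theta(t)\|<\lambda_{min}(\overline Q)/(2\|\overline PB\|)$ for all $t\ge\overline t$, so the quadratic coefficient becomes a positive constant $\delta$; then $\dot V_{e}\le-\delta\|e_{ref}\|^{2}+2c_{0}\|\overline PB\|\,\|\tilde\theta(t)\|\,\|e_{ref}\|$, and with $W=\sqrt{V_{e}}$, $\dot W\le-\frac{\delta}{2\lambda_{max}(\overline P)}W+c_{1}\|\tilde\theta(t)\|$. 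Since $\|\tilde\theta(t)\|$ decays exponentially to $\varsigma_{UB}/\rho$, $W$ and hence $\|e_{ref}(t)\|$ converge exponentially to a ball of radius proportional to $\varsigma_{UB}$; together with the $\tilde\theta$‑bound this gives exponential convergence of $\xi(t)$ to a ball of radius $\varepsilon_\xi$ proportional to $\varsigma_{UB}$ (statement (a.2)), the rate being the minimum of the two block rates, both of which grow with $\gamma_{0},\gamma_{1}$. Boundedness on $[t_r^{+};\overline t]$ follows from the Bellman--Gr\"onwall lemma applied to the affine‑in‑$e_{ref}$ bound on $\dot e_{ref}$ together with $\tilde\theta\in L_\infty$, so $\xi\in L_\infty$ on $[t_r^{+};\infty)$ (statement (a.1)); finally $\varepsilon_\xi\propto\varsigma_{UB}$ and $\lim_{p\to\infty}\varsigma_{UB}=0$ (Proposition~\ref{proposition6}(c)) give part~(ii).

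For statement~(b) I would exploit that $\Delta$, and therefore $\Omega$ and $\gamma$, are scalar, so \eqref{eq32} decouples componentwise, $\dot{\tilde\theta}_i=-\gamma\Omega^{2}\tilde\theta_i+\gamma\Omega\varsigma_i$, whence $\frac{d}{dt}\tilde\theta_i^{2}=-2\gamma\Omega^{2}\tilde\theta_i^{2}+2\gamma\Omega\varsigma_i\tilde\theta_i$; the hypothesis $\varsigma(t)=o(\Omega(t)\theta)$ is then invoked to show that the indefinite second term is dominated in magnitude by the negative first term — concretely $|\varsigma_i(t)|\le\Omega(t)|\tilde\theta_i(t)|$ as long as $|\tilde\theta_i(t)|$ still exceeds the terminal $\varsigma_{UB}/\rho$ level — so $\frac{d}{dt}\tilde\theta_i^{2}\le0$ and $|\tilde\theta_i(t_a)|\le|\tilde\theta_i(t_b)|$ for all $t_a\ge t_b$. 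The main obstacle is the bookkeeping in the second paragraph: the coefficient $\lambda_{min}(\overline Q)-2\|\overline PB\|\,\|\tilde\theta\|$ is only \emph{eventually} sign‑definite, so the threshold on $\varsigma_{UB}$ has to be matched precisely to the decay estimate for $\tilde\theta$, which is what ties $\rho$, $\eta_{\tilde\theta}$, $\overline t$, $\overline P$, $\overline Q$ and $\varsigma_{UB}$ together; a secondary difficulty is converting the asymptotic ``$o$'' in (b) into the uniform pointwise inequality $|\varsigma_i|\le\Omega|\tilde\theta_i|$ along the closed‑loop trajectory.
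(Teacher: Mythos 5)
Your argument for part (a) is correct but follows a genuinely different route from the paper. You treat the closed loop as a cascade: first the parameter block, where the comparison inequality $\tfrac{d}{dt}\|\tilde\theta\|\le-\kappa_1\bigl(\|\tilde\theta\|-\varsigma_{\rm UB}/\rho\bigr)$ with $\kappa_1=\gamma_0\lambda_{\max}(\omega\omega^{\rm T})+\gamma_1$ gives the decay estimate independently of whether $\omega$ is bounded; then the tracking block via $V_{e}=e_{ref}^{\rm T}\overline P e_{ref}$, the hypothesis on $\varsigma_{\rm UB}$, and an ISS estimate driven by the exponentially decaying $\|\tilde\theta\|$. The paper instead builds a composite Lyapunov function $V_\xi=e_{ref}^{\rm T}Me_{ref}+\tfrac12 tr(\tilde\theta^{\rm T}\tilde\theta)$ with $M,N,K$ from the KYP corollary \eqref{eq2}, completes the square so that the cross term $2\tilde\theta^{\rm T}\omega e_{ref}^{\rm T}NK$ is absorbed by the $\gamma_0\lambda_{\max}(\omega\omega^{\rm T})$ part of the gain (this is precisely where $\gamma_0\ge 1$ is used), splits $t<t_e$ and $t\ge t_e$, and only then invokes the same $\overline P$-based estimate (A38)--(A44) to show $x$, hence $\lambda_{\max}(\omega\omega^{\rm T})$, is bounded so that $\varepsilon_\xi\in L_\infty$ with the explicit bound (A45). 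Your cascade avoids the KYP machinery, the $\kappa_1\in L_\infty$ versus $\kappa_1\notin L_\infty$ case split, and any need for $\gamma_0\ge1$; what it gives up is the paper's closed-form expression for $\varepsilon_\xi$ and its direct justification of the specific normalization of the gain. One imprecision: your claim that both block rates "grow with $\gamma_0,\gamma_1$" is not right for the $e_{ref}$ block, whose rate $\delta/(2\lambda_{\max}(\overline P))$ is fixed by $\overline P,\overline Q$; the overall cascade rate therefore saturates (the paper's own $\eta_{\xi_2}=\min\{\mu\lambda_{\min}/\lambda_{\max};\gamma_1\}$ has the same feature, so this does not affect the substance of claim (i)).

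For part (b) there is a genuine gap in your version as stated: the pointwise domination $|\varsigma_i(t)|\le\Omega(t)|\tilde\theta_i(t)|$ does not follow from the hypothesis $\varsigma(t)=o\bigl(\Omega(t)\theta\bigr)$, which compares $\varsigma$ with the \emph{true} parameters, not with the shrinking error $\tilde\theta_i$; indeed such an inequality must fail once $|\tilde\theta_i|$ drops below the residual level. The paper's own proof does not attempt this domination: it reads the hypothesis as licensing the replacement $\Upsilon=\Omega\theta+o(\Omega\theta)\buildrel\Delta\over=\Omega\theta$, i.e.\ it drops $\varsigma$ from \eqref{eq32}, solves the resulting scalar homogeneous ODE elementwise, $\tilde\theta_i(t)=e^{-\int_{t_r^+}^{t}\gamma\Omega^2 d\tau}\tilde\theta_i(t_r^+)$, and concludes monotonicity from ${\rm sign}(\gamma\Omega^2)\ge0$. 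You should either adopt that reading (monotonicity modulo the neglected $o(\cdot)$ term) or supply an additional argument; as written, your stronger pointwise claim is not justified.
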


\begin{proof}
The proof of Theorem and the definition of ${{\rm{\varepsilon }}_\xi }$ are postponed to Appendix.
\end{proof}

Therefore, it follows from Theorem that the proposed adaptive law \eqref{eq32} allows one to meet the objective \eqref{eq13}. It ensures the convergence of the augmented tracking error $\xi \left( t \right)$ to a set with the bound ${{\rm{\varepsilon }}_\xi }$, which can be reduced by the improvement of the Taylor polynomial degree. As, in practice, the maximum value of such degree $p$ \eqref{eq26} depends only on the computational power of the hardware-in-use, then it can be assumed that the value ${{\rm{\varepsilon }}_\xi }$ can be reduced to the machine precision.

The method proposed in this study differs from the known approaches to the synthesis of the ALQ control laws in the following way.
\begin{enumerate}
    \item[\textbf{C1}] The regression equation without overparametrization \eqref{eq16} is obtained, which relates the measurable signals $\Delta \left( t \right){\rm{,}}\;{y_\theta }\left( t \right)$ to the unknown matrix $\theta $ of the optimal control law parameters.
    
    \item[\textbf{C2}] The regressor of the parametrization \eqref{eq16} is finitely exciting $\Delta \left( t \right) \in {\rm{FE}}$ when a relatively weak Assumption \ref{assumption1} is met. Moreover, in accordance with Proposition 4, if Assumption \ref{assumption1} holds, then the fact that $\Delta \left( t \right) \in {\rm{FE}}$ follows analytically and directly from $\Psi \left( t \right) \in {\rm{FE}}$.
    
    \item[\textbf{C3}] When $\Psi \left( t \right) \in {\rm{FE}}$, the adaptive law \eqref{eq32} ensures the convergence of the identification error $\tilde \theta \left( t \right)$ into a set, which bound depends only on the computational power of the hardware-in-use.
    
    \item[\textbf{C4}] Global exponential stability of the closed loop, which includes the plant and the developed adaptive self-tuning regulator, is guaranteed.
\end{enumerate}

{Estimating the computational complexity of the proposed method, it should be noted that its upper bound is defined by the complexity of the most computationally intensive operations. They include the calculation of the $adj\left\{ {\overline \varphi_f (t)} \right\}$ and $det \left\{ {\overline \varphi_f (t)} \right\}$ (this regressor has the highest dimension), as well as matrix multiplication (the matrices $z_D (t)$ and $z_\Phi (t)$ have the highest dimension of $2n \times 2n$). Considering the first two operations, their complexity is $O\left( {{(n+m+1)^3}} \right)$, for square matrices multiplication -- $O\left( {{(2n)^3}} \right)$. For non-square matrices such estimation is a bit different, but without losing generality, we can assume that $2n$ is the greatest value among all dimensions, then the estimation $O\left( {{(2n)^3}} \right)$ can be used for non-square matrices as a majorant. Thus, the upper bound of the computational complexity of the proposed identification algorithm depends on whether $n+m+1 > 2n$ or not. In the first case it is $O\left( {{(n+m+1)^3}} \right)$, in the second one -- $O\left( {{(2n)^3}} \right)$. Considering ALQ-STR methods, only authors of \cite{Jha19a} made such estimations and stated that the complexity of their method is $O\left( {{n^6}} \right)$}.

\newpage{The block diagram of the developed adaptive control system is shown in Fig. 1.}
\begin{figure}[h]
\centerline{\includegraphics[height=20pc]{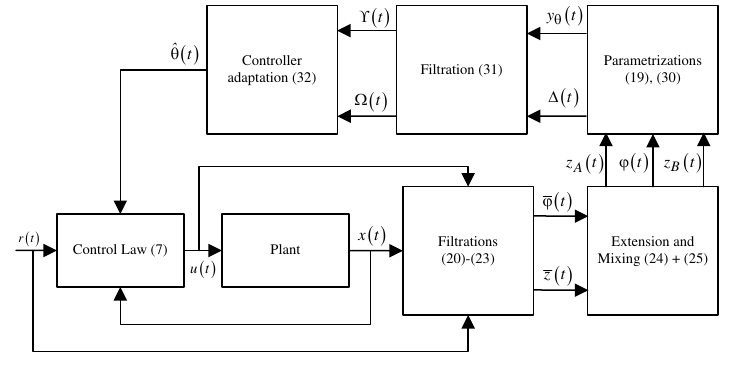}}
\caption{{{Block diagram of the proposed adaptive control system}}\label{fig1}}
\end{figure}

\subsection {Tuning guidelines}

The arbitrary parameters of the proposed ALQ control law \eqref{eq11}, \eqref{eq32} can be divided into three main groups:
\begin{enumerate}
    \item[i)] the parameters ${\tau _\infty }$ and $p$, which determine the value of the augmented error ${\varepsilon _\xi }$;
    \item[ii)] the parameters $r{\rm{,}}\;l{\rm{,}}\;{k_0}{\rm{,}}\;{k_1}$ and $\sigma$, which ensure the propagation of the regressor excitation through parametrizations \eqref{eq23}, \eqref{eq25}, \eqref{eq31};
    \item[iii)] the coefficients ${\gamma _0}{\rm{,}}\;{\gamma _1}$, which adjust the convergence rate of the augmented error $\xi \left( t \right)$, and $\rho $, which defines the initial time instant of the control law \eqref{eq11} adjustment.
\end{enumerate}

Here, the detailed tuning guidelines to choose the values of all these parameters are summarized as follows.

\subsubsection{First group}
The parameter ${\tau _\infty }$ according to \eqref{eq9} determines the value of the equation \eqref{eq8} solution at time instant ${\tau _\infty }$. Following Proposition \ref{proposition1}, only the static value $P = P\left( {{\tau _\infty } \to \infty } \right)$ allows one to obtain the parameters ${K_x}{\rm{,}}\;{K_r}$, which affords a minimum to the optimization problem \eqref{eq6}. Therefore, given the requirement ${\tau _\infty } \in {L_\infty }$ from Propositions \ref{proposition3}, \ref{proposition4}, \ref{proposition5} and Theorem \ref{theorem1}, the parameter ${\tau _\infty }$ should, first of all, satisfy the inequality ${\tau _\infty } \ge {\tau _{ss}}$, where ${\tau _{ss}}$ is some unknown value, which defines the duration of the transient of the Riccati differential equation \eqref{eq8} solution. Otherwise, when ${\tau _\infty } < {\tau _{ss}}$, the adaptive law \eqref{eq32} allows one to obtain some matrix $\mathord{\buildrel{\lower3pt\hbox{$\scriptscriptstyle\smile$}} 
\over \theta }$, which is not a solution of the optimization problem \eqref{eq6} over an infinite time interval. But it stabilizes the system \eqref{eq4} and is calculated on the basis of the matrix $P\left( {{\tau _{ss}}} \right)$, which is a boundary condition of the problem of a linear-quadratic function minimization over a finite time interval $\left[ {t_r^ + {\rm{;}}\;{\tau _{ss}}} \right]$:
\begin{equation}\label{eq33}
J = \frac{1}{2}\left[ {{x^{\rm{T}}}\left( t \right)P\left( {{\tau _{ss}}} \right)x\left( t \right) + \int\limits_{t_r^ + }^{{\tau _{ss}}} {{x^{\rm{T}}}\left( t \right)Qx\left( t \right) + {u^{\rm{T}}}\left( t \right)Ru\left( t \right)} {\rm{}\;}dt} \right].
\end{equation}

In such case, the difference $\theta  - \mathord{\buildrel{\lower3pt\hbox{$\scriptscriptstyle\smile$}} 
\over \theta } $ influences the steady-state value of the augmented error ${\varepsilon _\xi }$. Hence, if a priori information about the system \eqref{eq6} and ${\tau _{ss}}$ is unavailable, then a significantly high value of ${\tau _\infty }$ should be chosen to ensure that the steady-state solution of the Riccati equation \eqref{eq8} is achievable.

However, if $\exists i{\rm{,}}\;j \in \overline {1,2n}$ ${\lambda _i}\left( D \right) \gg {\lambda _j}\left( D \right)$, then, according to Remark \ref{remark3}, an arbitrary increase of ${\tau _\infty }$ leads to singularity of the analytical solutions \eqref{eq9}. Moreover, as it follows from \eqref{eq27}, it also results in a growth of the approximation error $\varepsilon $ of the expansion \eqref{eq26} and, consequently, the value of ${{\rm{\varepsilon }}_\xi }$. Therefore, an arbitrary increase of ${\tau _\infty }$ is unacceptable, and its value in the general case should be small enough to guarantee the existence of analytical solutions \eqref{eq9}.

In practice, as a rule, {\it a priori} information about the structure of matrices $A{\rm{,}}\;B$ and the order of magnitude of their individual elements is sufficient to select ${\tau _\infty }$ value. As noted in Remark \ref{remark3}, it is possible to influence the duration of transients ${\tau _{ss}}$ of the Riccati equation solution by appropriate choice of $Q{\rm{,}}\;R$ and $\vartheta$.

For any values of ${\tau _\infty }$, in accordance with Proposition \ref{proposition3} and Theorem \ref{theorem1}, the increase of the parameter $p$ allows one to reduce $\varepsilon$ and ${\varepsilon _\xi }$. Moreover, following statement (b) of the Theorem, if the value of $p$ is chosen such that \linebreak $\varsigma \left( t \right) = o\left( {\Omega \left( t \right)\theta } \right){\rm{,}}$ then the proposed adaptive law \eqref{eq32} ensures the monotonic transients of the parameter error elements ${\tilde \theta _i}\left( t \right)$. It is therefore advisable to make the value of $p$ be as high as the computational power of the hardware-in-use allows. On the other hand, in accordance with Proposition \ref{proposition4}, in the limit an arbitrary increase of $p$ leads to the regressor $\Delta \left( t \right)$ excitation vanishing. So, the requirement of boundedness of the $p$ value should be imposed even in case of the unlimited computational power. Therefore, in the general case, it is rational to improve $p$ up to some finite limit, at which at least \linebreak $\varsigma \left( t \right) = o\left( {\Omega \left( t \right)\theta } \right)$ is ensured.

\subsubsection{Second group}

First of all, to provide the finite excitation of the initial regressor $\Psi \left( t \right) \in {\rm{FE}}$, the reference signal $r$ for the proposed adaptive control system is to be different from zero $r \ne {0_m}$. This condition follows from the fact that, when $r = {0_m}$, then:
\begin{equation}\label{eq34}
\begin{array}{c}
\exists {\ell _i} \ne 0{\rm{}\;}\sum\limits_{i = 1}^{m + n} {{\ell _i}{\Psi _i}\left( t \right)}  = \sum\limits_{i = 1}^n {{\ell _i}{x_i}\left( t \right)}  + \sum\limits_{i = 1}^m {{\ell _i}{u_i}\left( t \right)}
 = \sum\limits_{i = 1}^n {{\ell _i}{x_i}\left( t \right)}  + \sum\limits_{i = 1}^m {{\ell _i}k_x^ix\left( t \right)}  = 0,{\rm{}\;}K_x^{\rm{T}} = {\begin{bmatrix}
{k_x^1}&{k_x^2}& \ldots &{k_x^m}
\end{bmatrix}},
\end{array}
\end{equation}
where ${\Psi _i}\left( t \right)$  is the $i^{\rm th}$ element of the regressor $\Psi \left( t \right)$, ${\ell _i} \ne 0$  is an arbitrary constant. In accordance with the results of \cite{Wang21}, Proposition 1, the linear dependence beetween regressor elements \eqref{eq34} is the sufficient condition for $\Psi \left( t \right) \notin {\rm{FE}}$.

Therefore, the choice of a nonzero reference value $r$ for the proposed adaptive control system is a condition to propagate the excitation of the regressor through the parameterizations \eqref{eq23}, \eqref{eq25}, and \eqref{eq31}. It should be noted that $r \ne {0_m}$ is a necessary but not sufficient condition for $\Psi \left( t \right) \in {\rm{FE}}$, since, even when $r \ne {0_m}$, it holds that $\Psi \left( t \right) \notin {\rm{FE}}$ if $\exists {\ell _i} > 0$ such that:
\begin{equation}\label{eq35}
\begin{array}{c}
{\rm{}\;}\sum\limits_{i = 1}^{m + n} {{\ell _i}{\Psi _i}\left( t \right)}  = \sum\limits_{i = 1}^n {{\ell _i}{x_i}\left( t \right)}  + \sum\limits_{i = 1}^m {{\ell _i}{u_i}\left( t \right)}
 = \sum\limits_{i = 1}^n {{\ell _i}{x_i}\left( t \right)}  + \sum\limits_{i = 1}^m {{\ell _i}k_x^ix\left( t \right)}  + \sum\limits_{i = 1}^m {{\ell _i}k_r^ir}  =  - \sum\limits_{i = 1}^m {{\ell _i}k_r^ir}  + \sum\limits_{i = 1}^m {{\ell _i}k_r^ir}  = {\rm{0}}{\rm{,}\;}\\
{K_r} = {\begin{bmatrix}
{k_r^1}&{k_r^2}& \ldots &{k_r^m}
\end{bmatrix}}
\end{array}
\end{equation}

In this case, the requirement $\Psi \left( t \right) \in {\rm{FE}}$ seems possible to be met by addition of a noise or harmonic component to the control signal $u\left( t \right)$ in accordance with the results of \cite{Adetola06}.

The parameter $l$ should be chosen so that the decrease rate of the exponent ${e^{ - l\left( {t - t_r^ + } \right)}}$ from \eqref{eq23} is substantially less than the assumed settling time of transients in \eqref{eq4}, i.e. $l \ll {\textstyle{\frac{{t_{ss}}} {\left( {3 \div 5} \right)}}}$, where ${t_{ss}}$ is the assumed transient duration in the system \eqref{eq4}. Otherwise, if a fast decrease rate of ${e^{ - l\left( {t - t_r^ + } \right)}}$ is chosen, the excitation time range of $\overline \varphi \left( t \right)$ may be substantially shorter than the one of the original regressor $\Psi \left( t \right)$, and the entire adaptive control scheme will quickly lose its awareness to the new data from the system. In fact, Assumption \ref{assumption1} is needed to guarantee that there can be found no function ${\overline \Psi _i}\left( t \right){\rm{,}\;}i = \overline {1,n + m} $ in $\overline \Psi \left( t \right)$ such that:
\begin{equation}\label{eq36}
{\overline \Psi _i}\left( t \right) = {e^{ - l\left( {t - t_r^ + } \right)}}\;\;{\rm{ or }}\;\;\exists {\ell _i} \ne 0{\rm{}\;}\sum\limits_{i = 1}^{m + n} {{\ell _i}{{\overline \Psi }_i}\left( t \right)}  + {\ell _{m + n + 1}}{e^{ - l\left( {t - t_r^ + } \right)}} = 0,
\end{equation}
where ${\overline \Psi _i}\left( t \right)$  is the $i^{\rm th}$ element of the regressor $\overline \Psi \left( t \right)$.

Therefore, Assumption \ref{assumption1} is not restrictive and usually met for almost all values of $l$.

The parameters $l$ and ${k_0}$ define the rate of the excitation propagation throughout the parameterizations \eqref{eq20}, \eqref{eq25} and must be chosen on the basis of the required response of the adaptive control system.

The parameter ${k_1}$ is used to make the regressor $\varphi \left( t \right)$ values be of the desired magnitude order over the excitation time range. For example, let it be known that $\forall t \in \left[ {t_r^ + {\rm{;}}\;{t_e}} \right]{\rm{}\;}{10^{ - 5}} \le det \left\{ {H\left[ {\overline \varphi \left( t \right){{\overline \varphi }^{\rm{T}}}\left( t \right)} \right]} \right\} \le {10^{ - 3}}$, then the choice ${k_1} = {10^5}$ guarantees that $\forall t \in \left[ {t_r^ + {\rm{;}}\;{t_e}} \right]{\rm{}\;}0.5 \le \varphi \left( t \right) \le 0.99$. Such normalization of the regressor $\varphi \left( t \right)$ makes it sufficiently easier to choose the parameter $\rho$ of the adaptive law \eqref{eq32} and can be very useful from the point of view of the computational stability of the proposed control system for its successful implementation on the embedded hardware.

The parameter $\sigma $ defines the time range, when the filter \eqref{eq31} is sensitive to the new input data, and ensures the boundedness of the regressor $\Omega \left( t \right)$, function $\Upsilon \left( t \right)$ and disturbance $\varsigma \left( t \right)$. The parameter $\sigma$ is advised to be calculated as $\sigma  = {\textstyle{\frac{{t_{ss}}} {\left( {3 \div 5} \right)}}}$, as in such case the sensitivity of the proposed adaptive control scheme to new data on ${y_\theta }\left( t \right)$ and $\Delta \left( t \right)$ is ensured strictly over the assumed time range, when the transients take place in the system \eqref{eq3}.

\subsubsection{Third group}
The parameter ${\gamma _0}$ is to be chosen following the condition ${\gamma _0} \ge 1$, which, according to the above proved results, guarantees the correctness of the results of Theorem. The increase of ${\gamma _0}$ allows one to improve the convergence rate of the error $\xi \left( t \right)$. The parameter ${\gamma _1}$ sets the value of the minimum rate of the error $\xi \left( t \right)$ convergence. The value of $\rho $ is chosen on the basis of a priori information about the minimum possible value of the regressor ${\Omega _{LB}}$ and determines the time instant when the parameters of the control law \eqref{eq11} is started to be adjusted. For example, if it is known that $\forall t \in \left[ {t_r^ + {\rm{;}}\;{t_e}} \right]{\rm{}}\;0.5 \le \varphi \left( t \right) \le 0.99$, then the parameter $\rho$ has to satisfy the condition: $\rho  < {\Omega _{LB}} \le {\textstyle{\frac{0.5} {\sigma}}}\left( {{t_e} - t_r^ + } \right)$.

\begin{remark}\label{remark5}
The main result of this study is the parameterization \eqref{eq16}-\eqref{eq19}, which reduces the problem of the adaptive optimal control \eqref{eq13} to the one of identification of the linear regression equation \eqref{eq16} unknown parameters. In order to solve it and achieve the objective \eqref{eq13}, various modern adaptive laws can be applied, which guarantee that $\Delta \left( t \right) \in {\rm{FE}} \Rightarrow \mathop {{\rm{lim}}}\limits_{t \to \infty } \left\| {\tilde \theta \left( t \right)} \right\| = 0$. Following the principle of certainty equivalence, it means that \eqref{eq13} is met. In this paper, of all known adaptive laws ensuring that $\Delta \left( t \right) \in {\rm{FE}} \Rightarrow \mathop {{\rm{lim}}}\limits_{t \to \infty } \left\| {\tilde \theta \left( t \right)} \right\| = 0$, the law \eqref{eq31}, \eqref{eq32}, which has been proposed earlier by the authors and described in details in \cite{Glushchenko22a,Glushchenko22b,Glushchenko22c,Glushchenko21}, is successfully applied.
\end{remark}

\begin{remark}\label{remark6}
Following the proof of Theorem, if $\exists t \ge \overline t{\rm{}}\;{\varsigma _{{\rm{UB}}}} + {w_{\max }} < {\textstyle{\frac{\rho {\lambda _{{\rm{min}}}}\left( {\overline Q} \right)} {2\left\| {\overline PB} \right\|}}} - \rho {e^{ - 0.5{\eta _{\tilde \theta }}\left( {\overline t - {t_e}} \right)}}\left\| {\tilde \theta \left( {{t_e}} \right)} \right\|$, then the developed adaptive control system is robust to the external bounded disturbance $\left\| {w\left( t \right)} \right\| \le {w_{\max }}$, which is added to the right hand side of the regression equation \eqref{eq31}. The above inequality means that the identification of the control law parameters, which stabilize the system \eqref{eq4}, is possible on the basis of the perturbed regression $\Upsilon \left( t \right) = \Omega \left( t \right)\theta  + w\left( t \right) + \varsigma \left( t \right)$.
\end{remark}

\begin{remark}\label{remark7}
The proposed adaptive law \eqref{eq32} provides stability of ${e_{ref}}\left( t \right)$ only when $\Psi \left( t \right) \in {\rm{FE}}$. Therefore, to implement \eqref{eq32} in practice, the {\it a priori} information is required that this condition holds. If $\Psi \left( t \right) \notin {\rm{FE}}$ for a particular plant \eqref{eq4} and a particular reference $r$, then it is possible to hold $\Psi \left( t \right) \in {\rm{FE}}$ artificially by addition of the dither noise to the control or the reference signal according to \cite{Adetola06,Cao07}. {It should be additionally noted that such noise should be added for a finite time interval only in order to ensure FE.} The actual problem is to obtain conditions on the reference $r$, under which the requirement $\Psi \left( t \right) \in {\rm{FE}}$ is met for the whole class of controllable systems \eqref{eq4}.
\end{remark}

\begin{remark}\label{remark8}
As follows from equations \eqref{eq23} and \eqref{eq31}, due to the decrease of the functions ${e^{ - l\left( {t - t_r^ + } \right)}}$ and ${e^{ - \sigma \left( {t - t_r^ + } \right)}}$, the developed adaptive system gradually loses the ability to receive, process, and use the vector of signals $\Psi \left( t \right)$ measured from the system for the control law parameters adjustment. It is possible to overcome this disadvantage if the time moment $t_r^ + $ is estimated dynamically by some external algorithm, e.g. $t_r^ +  \in \Im  = \left\{ {{t_1}{\rm{,}}\;{t_2}{\rm{,}}\; \ldots {\rm{,}}\;{t_{j - 1}}{\rm{,}}\;{t_j}} \right\}{\rm{,}}\forall t \in \left[ {{t_{j - 1}}{\rm{;}}\;{t_j}} \right){\rm{,}}\;t_r^ +  = {t_{j - 1}}$. Then, all filters \eqref{eq20}-\eqref{eq24}, \eqref{eq31} are reset at each time instant ${t_j}$, and the adaptive system regains its awareness to the signals $\Psi \left( t \right)$ and ensures that the objective \eqref{eq13} is achieved. From the point of view of most practical scenarios, it is very useful and convenient to assign a new value to $t_r^ + $ at each time instant when the reference $r$ value is changed.
\end{remark}
\newpage
\section{Numerical experiments}\label{sec4}

The experimental validation of the proved properties of the proposed adaptive control system has been conducted in Matlab/Simulink using the Euler numerical integration method with a time-invariant step size $\partial \tau  = {10^{ - 4}}$ seconds.

\subsection {\texorpdfstring{$\cancel{\exists }i{\rm{,}}j \in \overline {1,2n} {\rm{}\;}{\lambda _i}\left( D \right) \gg {\lambda _j}\left( D \right)$}{Lg}}

The parameters of the plant \eqref{eq3} were chosen to be the same as in \cite{Jha17,Jha14}:
\begin{equation}\label{eq37}
\forall t \ge 0{\rm{}\;}{\dot x_p}\left( t \right) = {\begin{bmatrix}
0&1\\
{ - 1}&{ - 1}
\end{bmatrix}} {x_p}\left( t \right) + {\begin{bmatrix}
0\\
1
\end{bmatrix}} u\left( t \right){\rm{,}}
\end{equation}
therefore, the interested reader, if necessary, can compare the experimental results, which are presented below, with the ones in \cite{Jha17,Jha14}.

The first experiment was to demonstrate what effect the parameter ${\tau _\infty }$ value had on the control quality when the ideal law \eqref{eq7} was implemented. For this purpose, we chose the following values of the reference $r$, initial conditions $x\left( 0 \right)$, matrices $Q{\rm{,}}\;R$ and the parameter $\vartheta$:
\begin{equation}\label{eq38}
r = 1,{\rm{}}\;x\left( 0 \right) = {0_3}{\rm{,}}\;Q = {I_{3 \times 3}}{\rm{,}}\;R = 1,{\rm{}}\;\vartheta  = 1.
\end{equation}

Applying the eigen-decomposition to $D$, it is easy to show that $\cancel{\exists }i{\rm{,}}\;j \in \overline {1,2}$ ${\lambda _i}\left( D \right) \gg {\lambda _j}\left( D \right)$:
\begin{equation}\label{eq39}
{V^{ - 1}}DV = diag\left\{ \begin{array}{c}
{\rm{0}}{\rm{.79}} + {\rm{0}}{\rm{.92}}{\mathop{\rm i}\nolimits} \\
{\rm{0}}{\rm{.79}}  - {\rm{0}}{\rm{.92i}}\\
{\rm{0}}{\rm{.67}}\\
 - {\rm{0}}{\rm{.79  +  0}}{\rm{.92}}{\mathop{\rm i}\nolimits} \\
 - {\rm{0}}{\rm{.79}} - {\rm{0}}{\rm{.92}}{\mathop{\rm i}\nolimits} \\
 - {\rm{0}}{\rm{.67}}
\end{array} \right\}{\rm{,}}
\end{equation}
where $V \in {\mathbb{R}^{6 \times 6}}$ is the matrix, which constitutes of the eigenvectors of $D$, $i$ is the imaginary unit.

Figure 2 depicts the transients of the Riccati equation solution $P\left( t \right)$, in which the values of the matrix elements $P\left( {{\tau _\infty }} \right)$ are marked with crosses. They were analytically calculated with the help of \eqref{eq9} for different values of ${\tau _\infty }$.

\begin{figure}[h]
\begin{center}
\includegraphics[scale=0.3]{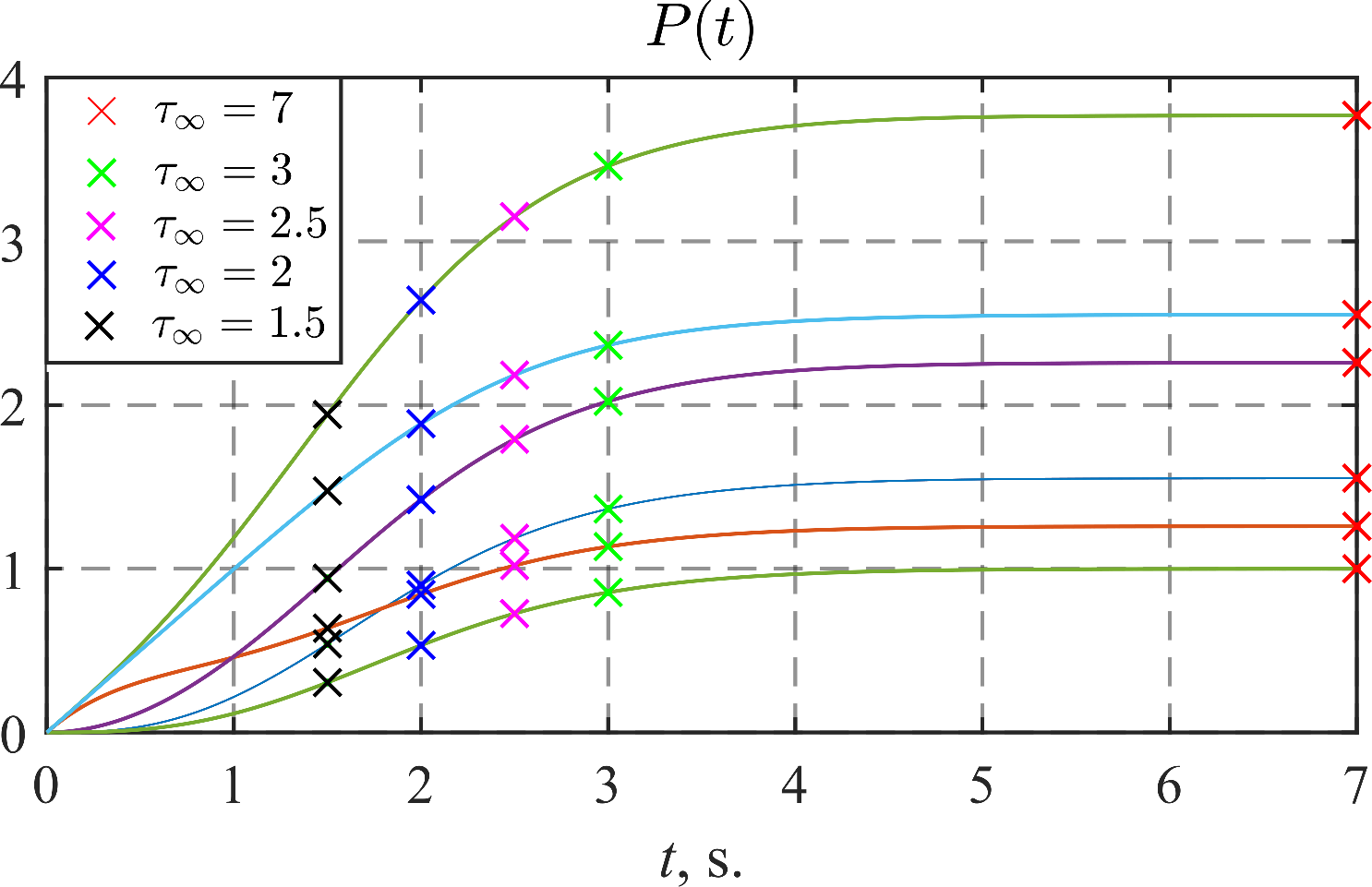}    % The printed column width is 8.4 cm.
\caption{Solution of Riccati equation $P\left( t \right)$ and matrices $P\left( {{\tau _\infty }} \right)$} 
\end{center}
\end{figure}

 Figure 3(a) presents the transients of the output $z\left( t \right)$ of the closed-loop system \eqref{eq10}, which controller parameters ${K_x}$ and ${K_r}$ were calculated using matrices $P\left( {{\tau _\infty }} \right)$ related to the different values of ${\tau _\infty }$. The transients of the cost function \eqref{eq6}, which were obtained using such controller parameters, are shown in Figure 3(b).
 
\begin{figure}
\begin{center}
\includegraphics[scale=0.25]{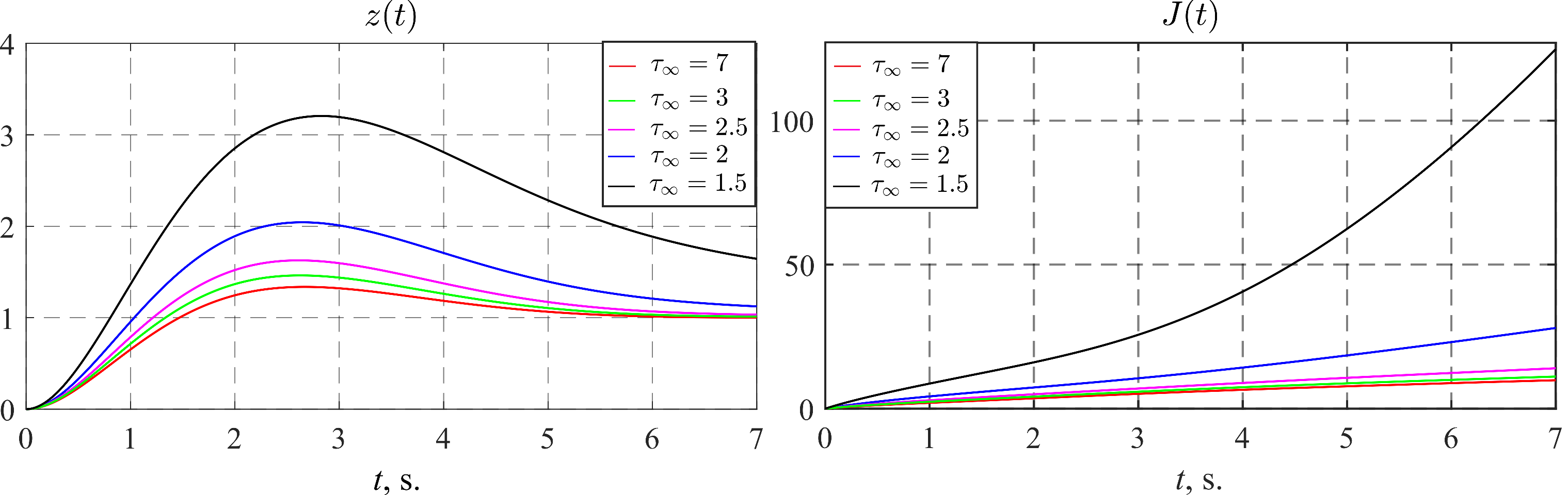}    % The printed column width is 8.4 cm.
\caption{{Dependencies on ${\tau _\infty }$values of: (a) output $z\left( t \right)$ and (b) cost function $J\left( t \right)$}} 
\end{center}
\end{figure}

The experimental results confirmed the recommendations on the choice of ${\tau _\infty }$ given in Remark \ref{remark2} and at the end of Section 3. Indeed, as it was noted in Proposition \ref{proposition1}, when ${\tau _\infty } \to \infty $, the analytical solution \eqref{eq9} of the Riccati differential equation \eqref{eq8} coincided with the steady-state one and allowed us to obtain the control law, which minimized the linear-quadratic cost function \eqref{eq6}. The transients of the system output $z\left( t \right)$ clarified what effect the parameter value of ${\tau _\infty }$ had on the required control quality of the system \eqref{eq4} with parameter uncertainty.

Table 1 presents the experimentally obtained relationship between $\left\| \varepsilon  \right\|$ and the values of the parameters $p{\rm{,}}\;{\tau _\infty }$.

\begin{table}[th]
\centering
\caption{Relationship between $\left\| \varepsilon  \right\|$ and  $p{\rm{,}}\;{\tau _\infty }$.}
{\begin{tabular}{ccccc} \toprule
 & \multicolumn{4}{c}{$p$} \\ \cmidrule{2-5}
 ${\tau _\infty }$ & 20 & 25 & 30 & 35 \\ \midrule
 1.5 & $2.21\cdot 10^{-14}$ & $2.922 \cdot 10^{-15}$ & $2.922 \cdot 10^{-15}$ & $2.922 \cdot 10^{-15}$\\
 2 &	$9.033 \cdot 10^{-12}$	& $4.28 \cdot 10^{-15}$ & $4.278 \cdot 10^{-15}$	& $4.278 \cdot 10^{-15}$\\
 2.5 & $9.798 \cdot 10^{-10}$ & $2.508 \cdot 10^{-14}$ & $7.158 \cdot 10^{-15}$ & $7.158 \cdot 10^{-15}$\\ 
 3 & $4.511 \cdot 10^{-8}$ & $2.698 \cdot 10^{-12}$ & $1.597 \cdot 10^{-14}$	& $1.597 \cdot 10^{-14}$\\
 7 & 2.416 & 0.0105 & $2.769 \cdot 10^{-5}$ & $2.234 \cdot 10^{-8}$\\\bottomrule
\end{tabular}}
\label{table1}
\end{table}

The ideal values of the matrix exponential were calculated with the help of the Matlab/Simulink function $expm(.)$, the value of the error was calculated as follows: $\varepsilon  = {\rm{expm}}\left( {D{\tau _\infty }} \right) - \sum\limits_{k = 0}^p {{\textstyle{\frac{1} {k{\rm{!}}}}}{D^k}\tau _\infty ^k} $.

The dependence presented in Table 1 confirmed the conclusions from Proposition \ref{proposition3}. Indeed, if ${\tau _\infty } \in {L_\infty }$, then the higher the value of $p$, the lower the error of the Taylor-series-based \eqref{eq26} approximation of the matrix exponential $\Phi \left( {{\tau _\infty }} \right)$.

Now, having an experimental confirmation of the important theoretical conclusions concerning the parameters ${\tau _\infty }$ and $p$, the next step was to evaluate the performance of the proposed adaptive control system. For this purpose, we chose the following values of the initial states of the system \eqref{eq4}, the control law parameters \eqref{eq11}, the reference $r$, the matrices $Q{\rm{,}\;}R$, the arbitrary parameters of the parameterizations \eqref{eq23}, \eqref{eq24}, \eqref{eq31} and the adaptive law \eqref{eq32} as follows:
\begin{equation}\label{eq40}
\begin{array}{c}
x\left( 0 \right) = {\begin{bmatrix}
{ - 1}&1&0
\end{bmatrix}}^{\rm T} {\rm{,}}\;r = {e^{ - 7t}}{\rm{,}}\;\hat \theta \left( 0 \right) = {{\begin{bmatrix}
{0.1}&{0.1}&{0.1}&{0.1}
\end{bmatrix}}^{\rm{T}}}{\rm{,}}\;Q = {I_{3 \times 3}}{\rm{,}}\;\\R = 1,\;
l = 2.5,{\rm{}}\;{k_0} = 10,{\rm{}}\;{k_1} = 1.8 \cdot {10^{35}}{\rm{,}}\;\sigma  = {\textstyle{\frac{5} {7}}}{\rm{,}}\;\rho  = {10^{35}}{\rm{,}}\;{\gamma _0} = 1,{\rm{}}\;{\gamma _1} = 10.
\end{array}
\end{equation}

The parameters $r{\rm{,}}\;l{\rm{,}}\;{k_0}{\rm{,}}\;{k_1}{\rm{,}}\;\sigma {\rm{,}}\;\rho {\rm{,}}\;{\gamma _0}{\rm{,}}\;{\gamma _1}$ values were set in accordance with the recommendations, which were made in Section 3. Using the results, which are presented in Figire 2, Figure 3 and Table 1, the values of the parameters ${\tau _\infty }$ and $p$ were chosen as: ${\tau _\infty } = 7,{\rm{}}\;p = 35$.

The transients of $\Omega \left( t \right){\rm{,}}\;{\textstyle{\frac{\varsigma \left( t \right)} {\Omega \left( t \right)\theta }}}$ are depicted in Figure 4.
\begin{figure}
\begin{center}
\includegraphics[scale=0.25]{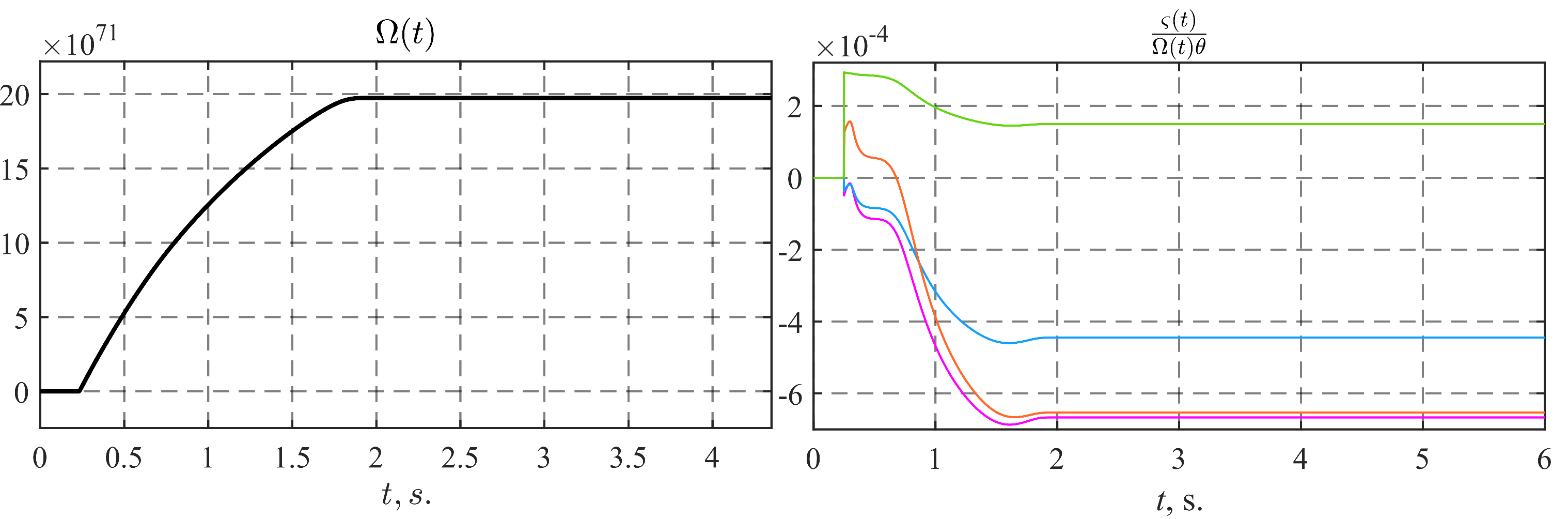}    % The printed column width is 8.4 cm.
\caption{Transients of $\Omega \left( t \right)$ and ${\textstyle{\frac{\varsigma \left( t \right)} {\Omega \left( t \right)\theta }}}$} 
\end{center}
\end{figure}  

The transients of ${\textstyle{\frac{\varsigma \left( t \right)} {\Omega \left( t \right)\theta }}}$ demonstrated that the chosen values ${\tau _\infty } = 7,{\rm{}}\;p = 35$ satisfied the equality $\varsigma \left( t \right) = o\left( {\Omega \left( t \right)\theta } \right){\rm{,}}$ while the curve of $\Omega \left( t \right)$ proved the conclusions made in Proposition \ref{proposition6}.

Figure 5 depicts the transients of the plant states $x\left( t \right)$ and adjustable parameters $\hat \theta \left( t \right)$ of the control law \eqref{eq11}.
\begin{figure}[th]
\begin{center}
\includegraphics[scale=0.25]{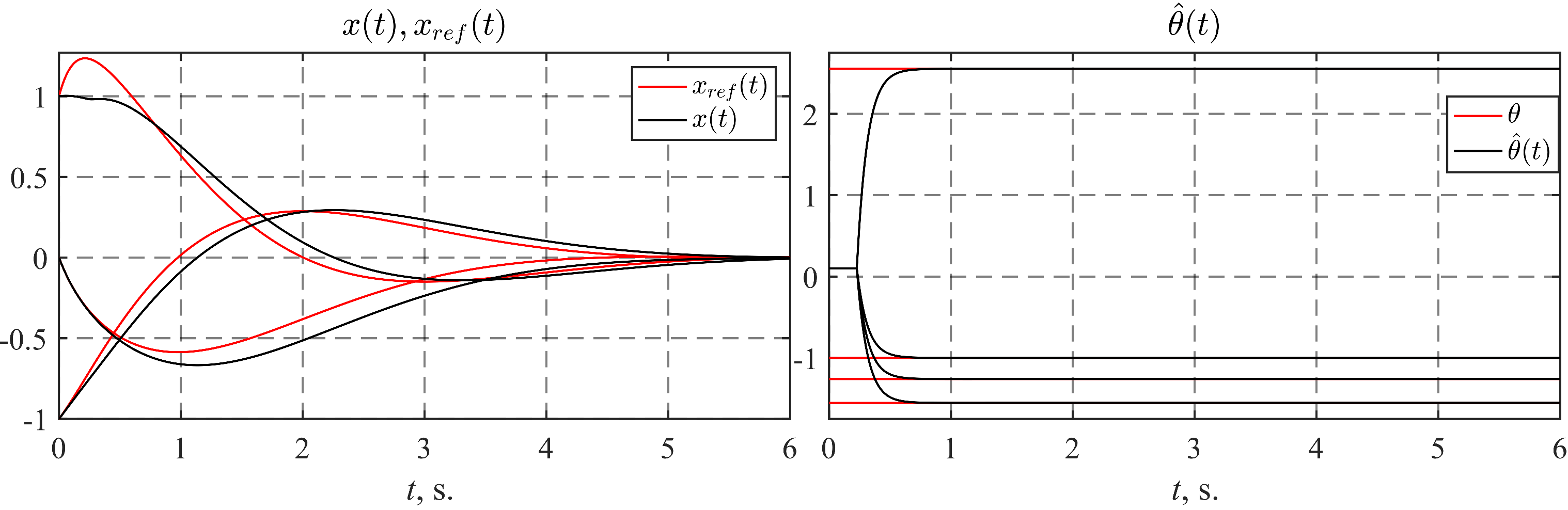}    % The printed column width is 8.4 cm.
\caption{Transients of (a) plant states $x\left( t \right)$ and (b) adjustable parameters $\hat \theta \left( t \right)$} 
\end{center}
\end{figure}  

The experimental results validated the theoretical conclusions and the remarks given in the paper. The proposed adaptive control system guaranteed that the suboptimal control objective \eqref{eq13} could be achieved for a plant with parameter uncertainty and provided monotonicity of transients of the control law adjustable parameters if the equality $\varsigma \left( t \right) = o\left( {\Omega \left( t \right)\theta } \right)$ was met. Experiments fully confirmed the recommendations on how to choose the arbitrary parameters of the adaptive system.

\subsection {\texorpdfstring{$\exists i{\rm{,}}j \in \overline {1,2n} {\rm{}\;}{\lambda _i}\left( D \right) \gg {\lambda _j}\left( D \right)$}{Lg}}

The plant \eqref{eq3} parameters were chosen in the same way as in \cite{Jha18a,Jha18b,Jha19b,Jha18c}:
\begin{equation}\label{eq41}
\forall t \ge 0{\rm{}}\;{\dot x_p}\left( t \right) = {\begin{bmatrix}
0&1&0\\
0&0&{4.438}\\
0&{ - 12}&{ - 24}
\end{bmatrix}} {x_p}\left( t \right) + {\begin{bmatrix}
0\\
0\\
{20}
\end{bmatrix}} u\left( t \right){\rm{,}}
\end{equation}
therefore, the interested reader, if necessary, can compare the experimental results, which are presented below, with the ones in \cite{Jha18a,Jha18b,Jha19b,Jha18c}.

The first experiment was to investigate the influence of parameter $\vartheta$ value on the existence of the analytical solution \eqref{eq9} of the Riccati equation \eqref{eq8} and optimization problem \eqref{eq6}. For this purpose, we chose the following values of the reference $r$, initial conditions $x\left( 0 \right)$ and matrices $Q{\rm{,}}\;R$:
\begin{equation}\label{eq42}
r = 1,{\rm{}}\;x\left( 0 \right) = {0_4}{\rm{,}}\;Q = {I_{4 \times 4}}{\rm{,}}\;R = 1.
\end{equation}

The eigenvalues of the matrix $D$ are calculated for $\vartheta  = 1$ and $\vartheta  = 100$:
\begin{equation}\label{eq43}
\begin{array}{c}
\vartheta  = 1,{\rm{}}\;{V^{ - 1}}DV = diag\left\{ \begin{array}{c}
 - {\rm{29}}{\rm{.27}}\\
 - {\rm{0}}{\rm{.78}} + {\rm{0}}{\rm{.51}}{\mathop{\rm i}\nolimits} \\
 - {\rm{0}}{\rm{.78}} - {\rm{0}}{\rm{.51}}{\mathop{\rm i}\nolimits} \\
 - {\rm{3}}{\rm{.43}}\\
{\rm{29}}{\rm{.27}}\\
{\rm{3}}{\rm{.43}}\\
{\rm{0}}{\rm{.78}} + {\rm{0}}{\rm{.51}}{\mathop{\rm i}\nolimits} \\
{\rm{0}}{\rm{.78}} - {\rm{0}}{\rm{.51}}{\mathop{\rm i}\nolimits} 
\end{array} \right\}{\rm{,\;}}
\vartheta  = 100,{\rm{}\;}{V^{ - 1}}DV = diag\left\{ \begin{array}{c}
 - {\rm{29}}{\rm{.27}}\\
 - {\rm{7}}{\rm{.11}}\\
 - {\rm{3}}{\rm{.44}} + {\rm{5}}{\rm{.55}}{\mathop{\rm i}\nolimits} \\
 - {\rm{3}}{\rm{.44}} - {\rm{5}}{\rm{.55}}{\mathop{\rm i}\nolimits} \\
{\rm{29}}{\rm{.27}}\\
{\rm{7}}{\rm{.11}}\\
{\rm{3}}{\rm{.44}} + {\rm{5}}{\rm{.55}}{\mathop{\rm i}\nolimits} \\
{\rm{3}}{\rm{.44}} - {\rm{5}}{\rm{.55}}{\mathop{\rm i}\nolimits} 
\end{array} \right\}{\rm{,}}
\end{array}
\end{equation}
where $V \in {\mathbb{R}^{8 \times 8}}$ is the matrix, which constitutes of $D$ eigenvectors, $i$ is the imaginary unit.

The equation \eqref{eq43} shows that if the parameter $\vartheta $ value is improved, then the difference between the eigenvalues of the matrix $D$ is reduced. This results in improvement of the transients rate of the Riccati equation \eqref{eq8} solution and, thereby, reduction of ${\tau _{ss}}$ value, which defines when the steady-state mode is achieved.

Figure 6 presents transients of the norms of the Riccati equation solution $P\left( t \right)$ and its dynamic analytical \eqref{eq9} estimate \linebreak $\hat P\left( t \right) = {\Phi _{21}}\left( t \right)\Phi _{11}^{ - 1}\left( t \right)$ for various values of the parameter $\vartheta$.
\begin{figure}[th]
\begin{center}
\includegraphics[scale=0.25]{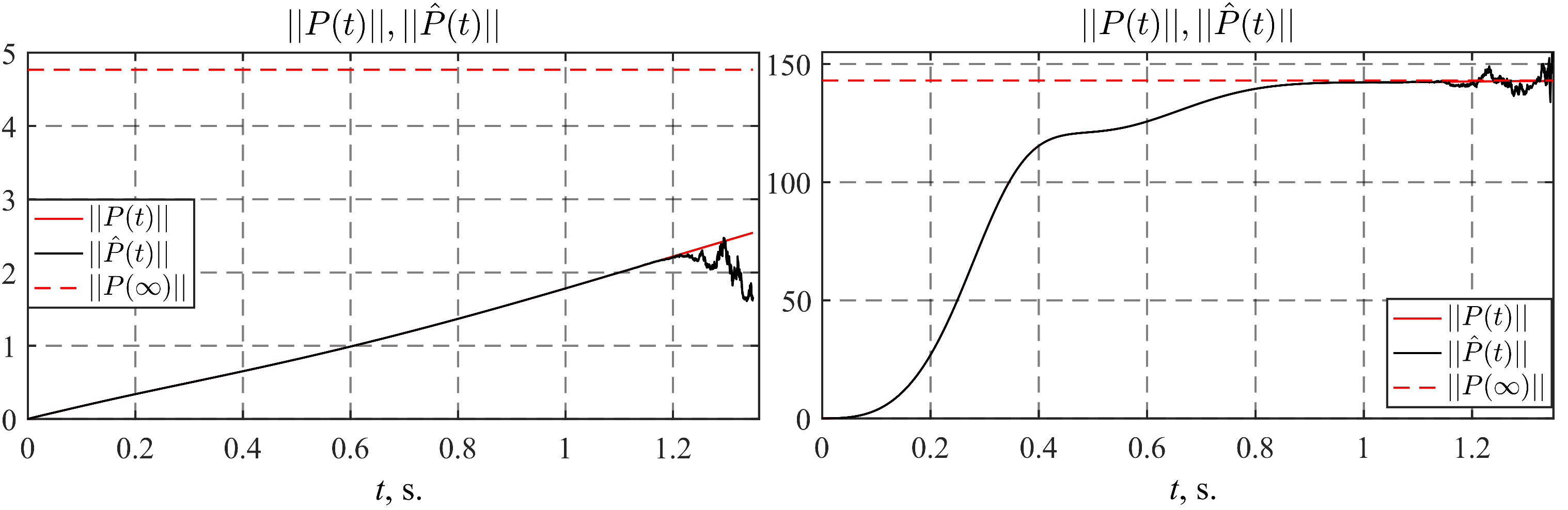}    % The printed column width is 8.4 cm.
\caption{Transients of $P\left( t \right)$ and $\hat P\left( t \right)$ when (a) $\vartheta  = 1$; (b) $\vartheta  = 100$} 
\end{center}
\end{figure}

It follows from Figure 6 that, when $\vartheta  = 1$, it was impossible to calculate analytically \eqref{eq9} the steady-state solution of the Riccati equation whatever ${\tau _\infty }$ was chosen, because the singularity of solutions \eqref{eq9} had occurred significantly earlier than the transient processes in equation \eqref{eq8} were completed {(see vibrations of black curves on the right-hand side of left figure)}. In such case, if ${\tau _\infty }$ was chosen according to Remark \ref{remark3}, the proposed adaptive control system allowed one to identify only the parameters of the stabilizing control law. At the same time, when $\vartheta  = 100$, there existed the value of ${\tau _\infty }$, which made it possible to calculate the steady-state solution of the Riccati equation analytically \eqref{eq9}, because the singularity of the analytical solutions \eqref{eq9} occurred later than the transients of the Riccati equation \eqref{eq8} solution had reached the steady state. When the value of $\vartheta $ was further improved, this effect became even stronger.

Figure 7 depicts the transients of the output $z\left( t \right)$, which were obtained using various values of the parameters $\vartheta $ and ${\tau _\infty }$.
\begin{figure}[th]
\begin{center}
\includegraphics[scale=0.25]{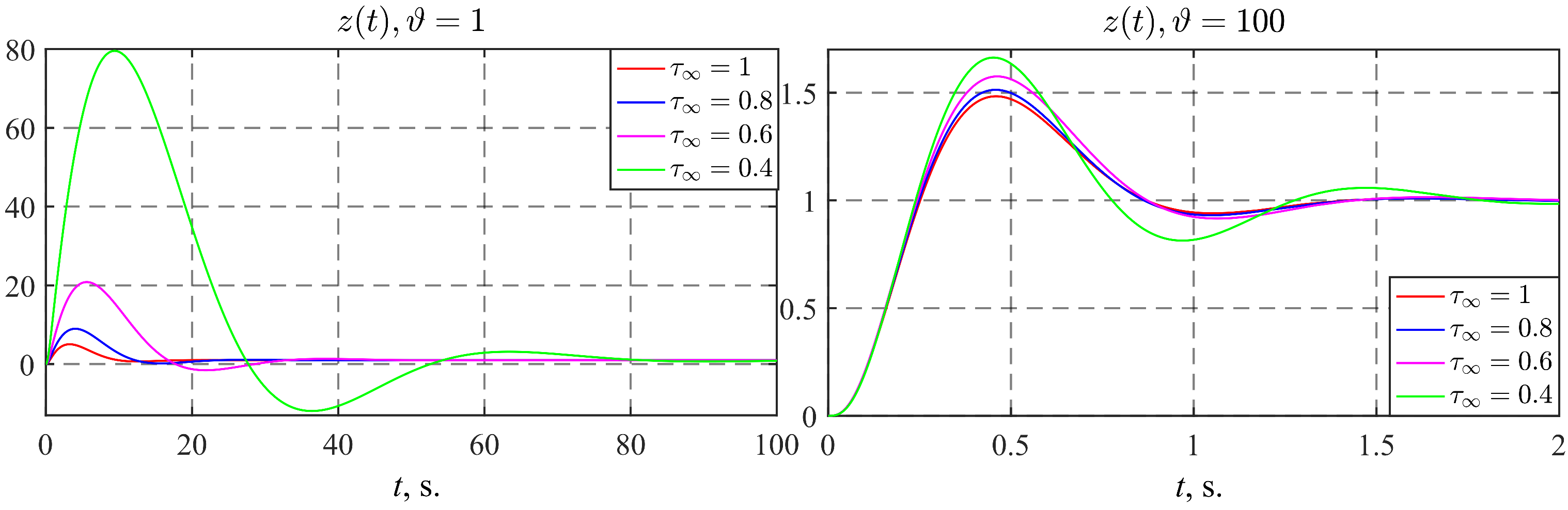}    % The printed column width is 8.4 cm.
\caption{Transients of output $z\left( t \right)$ for various values of $\vartheta $ and ${\tau _\infty }$} 
\end{center}
\end{figure}

\newpage
Figure 7 illustrates that in terms of the required control quality for the plant with parameter uncertainty \eqref{eq4}, it was advantageous to choose high values of $\vartheta$.

The next step was to evaluate the performance of the developed adaptive control system. A part of the arbitrary parameters was set according to \eqref{eq42}, and the remaining ones were chosen as follows:
\begin{equation}\label{eq44}
\begin{array}{c}
\hat \theta \left( 0 \right) = {{\begin{bmatrix}
{{0_{1 \times 4}}}&{10}
\end{bmatrix}}^{\rm{T}}}{\rm{,}\;}\vartheta {\rm{ = 100}}{\rm{,}\;}l = 10,{\rm{}\;}{k_0} = 10,{\rm{}\;}{k_1} = 1.8 \cdot {10^{35}}{\rm{,}\;}\\
p = 85,{\rm{}\;}{\tau _\infty }{\rm{ = 1}}{\rm{,}}\;\sigma  = {\textstyle{\frac{5} {7}}}{\rm{,}}\;\rho  = {10^{35}}{\rm{,}}\;{\gamma _0} = 1,{\rm{}\;}{\gamma _1} = 1,
\end{array}
\end{equation}

The transients of $\left\| {e_{ref} \left( t \right)} \right\|$ presented in Figure 8(a), the curve of $\left\| {\tilde \theta \left( t \right)} \right\|$ – in Figure 8(b).
\begin{figure}[th]
\begin{center}
\includegraphics[scale=0.25]{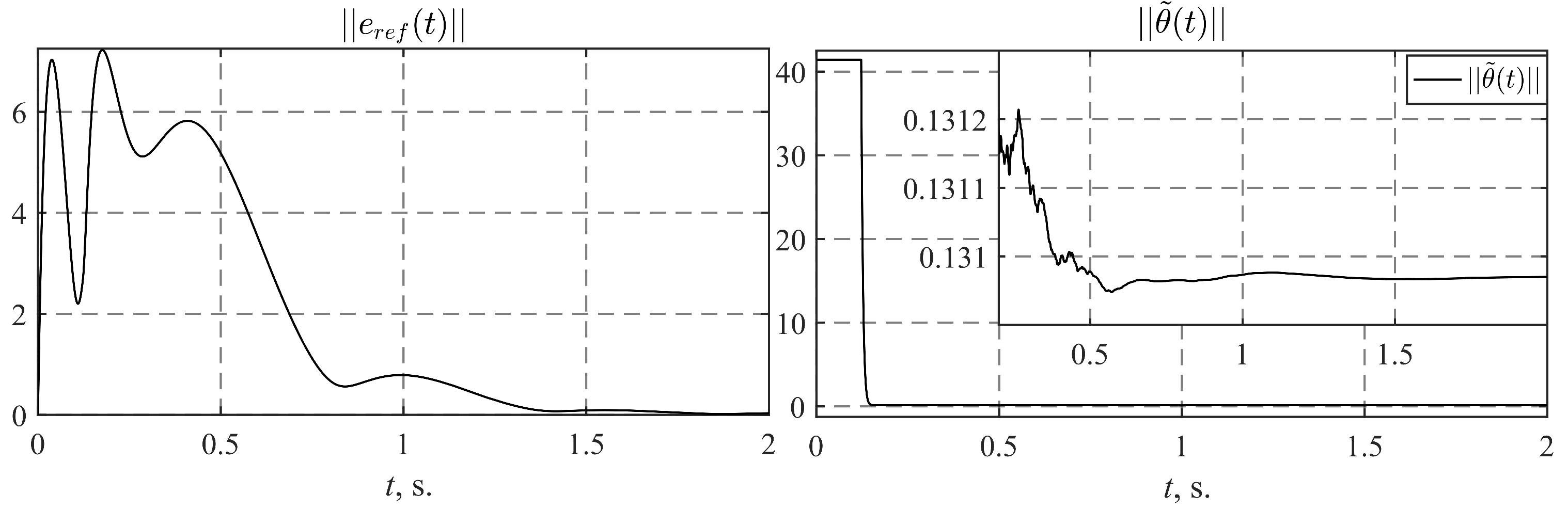}    % The printed column width is 8.4 cm.
\caption{Transients of (a) $\left\| {e_{ref} \left( t \right)} \right\|$; (b) $\left\| {\tilde \theta \left( t \right)} \right\|$} 
\end{center}
\end{figure}
  
Figure 8 confirms the statements of Theorem. The developed adaptive control system provided identification of the optimal control law \eqref{eq7} unknown parameters $\theta $ with bounded error and guaranteed the achievement of the objective \eqref{eq13}.

Thus, the main disadvantage of the proposed system is the possible singularity of the analytical solution \eqref{eq9} if \linebreak $\exists i{\rm{,}}\;j \in \overline {1,2n} {\rm{}}\;{\lambda _i}\left( D \right) \gg {\lambda _j}\left( D \right)$ and the value of ${\tau _\infty }$ is high. However, the simulation results presented in Figure 6 demonstrate that if a sufficiently small value of ${\tau _\infty }$ is chosen and appropriate value of $\vartheta $ is found, this disadvantage can be effectively overcome. On the whole, the developed system makes it possible to effectively solve the problem of synthesis of the adaptive suboptimal control system to meet \eqref{eq13}.

\section{Conclusion}\label{sec5}

The adaptive control system is proposed to estimate the unknown parameters of the control law, which are optimal in terms of a linear-quadratic cost function. The global stability of the closed-loop adaptive control system and the convergence of the tracking error to the set with adjustable bound are proved. Recommendations on the choice of the adaptive system arbitrary parameters are given. The conducted numerical experiments confirmed all theoretical results and conclusions.

Scope of the further research and improvements of the proposed adaptive suboptimal control system can be structured as follows (in descending order of importance): 1) to relax the requirement $\Psi \left( t \right) \in {\rm{FE}}$ to a weaker semi-persistent excitation \linebreak (${\Psi \left( t \right) \in}$ s-PE) or semi-finite excitation (${\Psi \left( t \right) \in}$ s-FE) {, which are simpler to be satisfied by real world signals in comparison with FE}; 2) to expand these results to systems with piecewise constant parameters {with guarantee of exponential stability and convergence of both tracking and parameter errors to a compact set with adjustable bound}; 3) not to use the analytical solution \eqref{eq9} of the Riccati equation \eqref{eq8} in \eqref{eq16} and derive a more reliable parameterization of \eqref{eq16} type in order to prevent possible singularity of solutions in \eqref{eq9} in case of ${\lambda _i}\left( D \right) \gg {\lambda _j}\left( D \right)$ and wrong choice of ${\tau _\infty }$ value {(the solution of the first problem in this list will be of great importance to achieve the goal under consideration)}; 4) to approximate the fundamental matrix $\Phi \left( {{\tau _\infty }} \right)$ by more computationally stable method in comparison with the applied Taylor series expansion (for example, with the help of the Padé approximation \cite{Moler78}).

%\backmatter

\section*{Acknowledgments}
This research was financially supported in part by Grants Council of the President of the Russian Federation (project MD-1787.2022.4). 

\subsection*{Conflict of interest}

The authors declare no potential conflict of interests.

\appendix

\section{Proof of Proposition 2.} 

In order to prove the proposition, it is assumed that the vector ${\theta _{AB}}$ and matrices ${\Phi _{11}}\left( {{\tau _\infty }} \right){\rm{,}}\;{\Phi _{21}}\left( {{\tau _\infty }} \right)$ are known. It is also taken into consideration that the following equalities hold and $\forall A \in {\mathbb{R}^{n \times n}}$:
\begin{equation}\label{eqA1}
\begin{array}{c}
{c^n}det \left\{ A \right\} = det \left\{ {cA} \right\}{\rm{,}}\;\;{c^{n - 1}}adj\left\{ A \right\} = adj\left\{ {cA} \right\}{\rm{,}}\\
adj\left\{ A \right\}A = Aadj\left\{ A \right\} = det \left\{ A \right\}{I_{n \times n}}.
\end{array}
\end{equation}

Then:

a) the definition of the matrix $P$ (see \eqref{eq9}) is multiplied by ${\Phi _{11}}\left( {{\tau _\infty }} \right)adj\left\{ {{\Phi _{11}}\left( {{\tau _\infty }} \right)} \right\}$ to obtain the static regression equation with respect to $P$:
\begin{equation}\label{eqA2}
\begin{array}{c}
{y_P} = {\Delta _P}P{\rm{,}}\\
{y_P}{\rm{:}} = {\Phi _{21}}\left( {{\tau _\infty }} \right)adj\left\{ {{\Phi _{11}}\left( {{\tau _\infty }} \right)} \right\}{\rm{,}\;}{\Delta _P}{\rm{:}} = det \left\{ {{\Phi _{11}}\left( {{\tau _\infty }} \right)} \right\}{\rm{,}}
\end{array}
\end{equation}
where ${\Delta _P} \ne 0$ as the matrix $\Phi _{11}^{ - 1}\left( {{\tau _\infty }} \right)$ exists in accordance with Proposition \ref{proposition1}.

b) the definition of the matrix $V$ (see \eqref{eq9}) is multiplied by $adj\left\{ {{A^{\rm{T}}} - PB{R^{ - 1}}{B^{\rm{T}}}} \right\}\left( {{A^{\rm{T}}} - PB{R^{ - 1}}{B^{\rm{T}}}} \right)$ to obtain the static regression equation with respect to $V$:
\begin{equation}\label{eqA3}
\begin{array}{c}
{y_V} = {\Delta _V}V\\
{y_V}{\rm{:}} = adj\left\{ {{A^{\rm{T}}} - PB{R^{ - 1}}{B^{\rm{T}}}} \right\}P{B_r}{\rm{,}}\;\;{\Delta _V}{\rm{:}} = det \left\{ {{A^{\rm{T}}} - PB{R^{ - 1}}{B^{\rm{T}}}} \right\},
\end{array}
\end{equation}
where ${\Delta _V} \ne 0$ as the matrix ${\left( {{A^{\rm{T}}} - PB{R^{ - 1}}{B^{\rm{T}}}} \right)^{ - 1}}$ exists in accordance with Proposition \ref{proposition1}.

c) the equation to calculate ${K_x}$ (see \eqref{eq7}) is multiplied by ${\Delta _P}$ to obtain the static regression equation with respect to ${K_x}$:
\begin{equation}\label{eqA4}
\begin{array}{c}
{y_{{K_x}}} = {\Delta _P}{K_x}{\rm{,}}\\
{y_{{K_x}}}{\rm{:}} =  - {R^{ - 1}}{B^{\rm{T}}}{\Phi _{21}}\left( {{\tau _\infty }} \right)adj\left\{ {{\Phi _{11}}\left( {{\tau _\infty }} \right)} \right\}{\rm{,}}
\end{array}
\end{equation}

d) considering \eqref{eqA2} and \eqref{eqA3}, the equation to calculate ${K_r}$ (see \eqref{eq7}) is multiplied by ${\Delta _V}\Delta _P^n$ to obtain the static regression equation with respect to ${K_r}$:
\begin{equation}\label{eqA5}
\begin{array}{c}
{y_{{K_r}}} = {\Delta _{{K_r}}}{K_r}{\rm{,}}\\
{y_{{K_r}}}{\rm{:}} =  - {R^{ - 1}}{B^{\rm{T}}}V{\Delta _V}\Delta _P^n =  - \Delta _P^n{R^{ - 1}}{B^{\rm{T}}}adj\left\{ {{A^{\rm{T}}} - PB{R^{ - 1}}{B^{\rm{T}}}} \right\}P{B_r} = \\
 =  - {\Delta _P}{R^{ - 1}}{B^{\rm{T}}}adj\left\{ {{A^{\rm{T}}}{\Delta _P} - {\Delta _P}PB{R^{ - 1}}{B^{\rm{T}}}} \right\}P{B_r} = \\
 =  - {R^{ - 1}}{B^{\rm{T}}}adj\left\{ \begin{array}{c}
{A^{\rm{T}}}det \left\{ {{\Phi _{11}}\left( {{\tau _\infty }} \right)} \right\} - \\
 - {\Phi _{21}}\left( {{\tau _\infty }} \right)adj\left\{ {{\Phi _{11}}\left( {{\tau _\infty }} \right)} \right\}B{R^{ - 1}}{B^{\rm{T}}}
\end{array} \right\}{\Phi _{21}}\left( {{\tau _\infty }} \right)adj\left\{ {{\Phi _{11}}\left( {{\tau _\infty }} \right)} \right\}{B_r}{\rm{,}}\\
{\Delta _{{K_r}}}{\rm{:}} = {\Delta _V}\Delta _P^n{\rm{.}}
\end{array}
\end{equation}

Then, to obtain the dynamic regression equation with respect to ${K_x}$, the equation \eqref{eqA4} is multiplied by $\varphi \left( t \right)\Delta _\Phi ^n\left( t \right)$ and, using \eqref{eqA1}, the equations \eqref{eq17} and \eqref{eq18} are substituted into the obtained result:
\begin{equation}\label{eqA6}
\begin{array}{c}
{y_{{K_x}}}\left( t \right) = {\Delta _{{K_x}}}\left( t \right){K_x} + {\varepsilon _{{K_x}}}\left( t \right){\rm{,}}\\
{y_{{K_x}}}\left( t \right){\rm{:}} = \Delta _\Phi ^n\left( t \right)\varphi \left( t \right){y_{{K_x}}} =  - \Delta _\Phi ^n\left( t \right)\varphi \left( t \right){R^{ - 1}}{B^{\rm{T}}}{\Phi _{21}}\left( {{\tau _\infty }} \right)adj\left\{ {{\Phi _{11}}\left( {{\tau _\infty }} \right)} \right\} = \\
 =  - \Delta _\Phi ^{n - 1}\left( t \right){R^{ - 1}}z_B^{\rm{T}}\left( t \right){\Delta _\Phi }\left( t \right){\Phi _{21}}\left( {{\tau _\infty }} \right)adj\left\{ {{\Phi _{11}}\left( {{\tau _\infty }} \right)} \right\}
 =  - {R^{ - 1}}z_B^{\rm{T}}\left( t \right){\Delta _\Phi }\left( t \right){\Phi _{21}}\left( {{\tau _\infty }} \right)adj\left\{ {{\Delta _\Phi }\left( t \right){\Phi _{11}}\left( {{\tau _\infty }} \right)} \right\} = \\
 =  - {R^{ - 1}}z_B^{\rm{T}}\left( t \right)\left( {{z_{{\Phi _{21}}}}\left( t \right) - {\varepsilon _{{\Phi _{21}}}}\left( t \right)} \right)adj\left\{ {{z_{{\Phi _{11}}}}\left( t \right) - {\varepsilon _{{\Phi _{11}}}}\left( t \right)} \right\}
 \buildrel \Delta \over =  - {R^{ - 1}}z_B^{\rm{T}}\left( t \right){z_{{\Phi _{21}}}}\left( t \right)adj\left\{ {{z_{{\Phi _{11}}}}\left( t \right)} \right\}{\rm{,}}\\
{\Delta _{{K_x}}}\left( t \right){\rm{:}} = \Delta _\Phi ^n\left( t \right)\varphi \left( t \right){\Delta _P} = \Delta _\Phi ^n\left( t \right)\varphi \left( t \right)det \left\{ {{\Phi _{11}}\left( {{\tau _\infty }} \right)} \right\}
 = \varphi \left( t \right)det \left\{ {{\Delta _\Phi }\left( t \right){\Phi _{11}}\left( {{\tau _\infty }} \right)} \right\} \buildrel \Delta \over = \varphi \left( t \right)det \left\{ {{z_{{\Phi _{11}}}}\left( t \right)} \right\}{\rm{,}}\\
{\varepsilon _{{K_x}}}\left( t \right){\rm{:}} = {R^{ - 1}}z_B^{\rm{T}}\left( t \right){z_{{\Phi _{21}}}}\left( t \right)adj\left\{ { - {\varepsilon _{{\Phi _{11}}}}\left( t \right)} \right\}
-{R^{ - 1}}z_B^{\rm{T}}\left( t \right){\varepsilon _{{\Phi _{21}}}}\left( t \right)adj\left\{ {{z_{{\Phi _{11}}}}\left( t \right) - {\varepsilon _{{\Phi _{11}}}}\left( t \right)} \right\} + \\
 + \varphi \left( t \right)\left( {det \left\{ {{\Delta _\Phi }\left( t \right){\Phi _{11}}\left( {{\tau _\infty }} \right)} \right\} - det \left\{ {{z_{{\Phi _{11}}}}\left( t \right)} \right\}} \right){K_x}.
\end{array}
\end{equation}

From now on the sign $ \buildrel \Delta \over = $ omits some intermediate routine derivations, which can be found in \cite{Glushchenko22d}, and presents the final result.

Following the same strategy, the equation \eqref{eqA5} is multiplied by ${\varphi ^{2n}}\left( t \right)\Delta _\Phi ^{{n^2}}\left( t \right)$, and the equations \eqref{eq17} and \eqref{eq18} are substituted into the obtained result to obtain the dynamic regression equation with respect to ${K_r}$:
\begin{equation}\label{eqA7}
\begin{array}{c}
{y_{{K_r}}}\left( t \right) = {\Delta _{{K_r}}}{K_r}\left( t \right) + {\varepsilon _{{K_r}}}\left( t \right){\rm{,}}\\
{y_{{K_r}}}\left( t \right){\rm{:}} = {\varphi ^{2n}}\left( t \right)\Delta _\Phi ^{{n^2}}\left( t \right){y_{{K_r}}}
 =  - {\varphi ^2}\left( t \right){\varphi ^{2\left( {n - 1} \right)}}\left( t \right)\Delta _\Phi ^{n\left( {n - 1} \right)}\left( t \right)\Delta _\Phi ^n\left( t \right){R^{ - 1}}{B^{\rm{T}}} \times \\
 \times adj\left\{ \begin{array}{c}
{A^{\rm{T}}}det \left\{ {{\Phi _{11}}\left( {{\tau _\infty }} \right)} \right\} - \\
 - {\Phi _{21}}\left( {{\tau _\infty }} \right)adj\left\{ {{\Phi _{11}}\left( {{\tau _\infty }} \right)} \right\}B{R^{ - 1}}{B^{\rm{T}}}
\end{array} \right\}{\Phi _{21}}\left( {{\tau _\infty }} \right)adj\left\{ {{\Phi _{11}}\left( {{\tau _\infty }} \right)} \right\}{B_r} = \\
 =  - {\varphi ^2}\left( t \right){\varphi ^{2\left( {n - 1} \right)}}\left( t \right)\Delta _\Phi ^{n\left( {n - 1} \right)}\left( t \right){R^{ - 1}}{B^{\rm{T}}} adj\left\{ \begin{array}{c}
{A^{\rm{T}}}det \left\{ {{\Phi _{11}}\left( {{\tau _\infty }} \right)} \right\} - \\
 - {\Phi _{21}}\left( {{\tau _\infty }} \right)adj\left\{ {{\Phi _{11}}\left( {{\tau _\infty }} \right)} \right\}B{R^{ - 1}}{B^{\rm{T}}}
\end{array} \right\} \times \\
 \times {\Delta _\Phi }\left( t \right){\Phi _{21}}\left( {{\tau _\infty }} \right)adj\left\{ {{\Delta _\Phi }\left( t \right){\Phi _{11}}\left( {{\tau _\infty }} \right)} \right\}{B_r} = \\
 =  - {\varphi ^2}\left( t \right){\varphi ^{2\left( {n - 1} \right)}}\left( t \right){R^{ - 1}}{B^{\rm{T}}}adj\left\{ \begin{array}{c}
{A^{\rm{T}}}\Delta _\Phi ^n\left( t \right)det \left\{ {{\Phi _{11}}\left( {{\tau _\infty }} \right)} \right\} - \\
 - \Delta _\Phi ^n\left( t \right){\Phi _{21}}\left( {{\tau _\infty }} \right)adj\left\{ {{\Phi _{11}}\left( {{\tau _\infty }} \right)} \right\}B{R^{ - 1}}{B^{\rm{T}}}
\end{array} \right\}\times\\
 \times {\Delta _\Phi }\left( t \right){\Phi _{21}}\left( {{\tau _\infty }} \right)adj\left\{ {{\Delta _\Phi }\left( t \right){\Phi _{11}}\left( {{\tau _\infty }} \right)} \right\}{B_r} = \\
 =  - \varphi \left( t \right){\varphi ^{2\left( {n - 1} \right)}}\left( t \right){R^{ - 1}}z_B^{\rm{T}}\left( t \right)  adj\left\{ \begin{array}{c}
{A^{\rm{T}}}det \left\{ {{\Delta _\Phi }\left( t \right){\Phi _{11}}\left( {{\tau _\infty }} \right)} \right\} - \\
 - {\Delta _\Phi }\left( t \right){\Phi _{21}}\left( {{\tau _\infty }} \right)adj\left\{ {{\Delta _\Phi }\left( t \right){\Phi _{11}}\left( {{\tau _\infty }} \right)} \right\}B{R^{ - 1}}{B^{\rm{T}}}
\end{array} \right\} \times \\
 \times {\Delta _\Phi }\left( t \right){\Phi _{21}}\left( {{\tau _\infty }} \right)adj\left\{ {{\Delta _\Phi }\left( t \right){\Phi _{11}}\left( {{\tau _\infty }} \right)} \right\}{B_r} = \\
 =  - \varphi \left( t \right){R^{ - 1}}z_B^{\rm{T}}\left( t \right) adj\left\{ \begin{array}{c}
\varphi \left( t \right)z_A^{\rm{T}}\left( t \right)det \left\{ {{\Delta _\Phi }\left( t \right){\Phi _{11}}\left( {{\tau _\infty }} \right)} \right\} - \\
 - {\Delta _\Phi }\left( t \right){\Phi _{21}}\left( {{\tau _\infty }} \right)adj\left\{ {{\Delta _\Phi }\left( t \right){\Phi _{11}}\left( {{\tau _\infty }} \right)} \right\}{z_B}\left( t \right){R^{ - 1}}z_B^{\rm{T}}\left( t \right)
\end{array} \right\} \times \\
 \times {\Delta _\Phi }\left( t \right){\Phi _{21}}\left( {{\tau _\infty }} \right)adj\left\{ {{\Delta _\Phi }\left( t \right){\Phi _{11}}\left( {{\tau _\infty }} \right)} \right\}{B_r} \buildrel \Delta \over = \\
 \buildrel \Delta \over =  - \varphi \left( t \right){R^{ - 1}}z_B^{\rm{T}}\left( t \right)adj\left\{ \begin{array}{c}
\varphi \left( t \right)z_A^{\rm{T}}\left( t \right)det \left\{ {{z_{{\Phi _{11}}}}\left( t \right)} \right\} - \\
 - {z_{{\Phi _{21}}}}\left( t \right)adj\left\{ {{z_{{\Phi _{11}}}}\left( t \right)} \right\}{z_B}\left( t \right){R^{ - 1}}z_B^{\rm{T}}\left( t \right)
\end{array} \right\} {z_{{\Phi _{21}}}}\left( t \right)adj\left\{ {{z_{{\Phi _{11}}}}\left( t \right)} \right\}{B_r}{\rm{ = }}\\
{\rm{ = }} - \varphi \left( t \right){R^{ - 1}}z_B^{\rm{T}}\left( t \right)adj\left\{ {z_A^{\rm{T}}\left( t \right){\Delta _{{K_x}}}\left( t \right) + y_{{K_x}}^{\rm{T}}\left( t \right)z_B^{\rm{T}}\left( t \right)} \right\}  {z_{{\Phi _{21}}}}\left( t \right)adj\left\{ {{z_{{\Phi _{11}}}}\left( t \right)} \right\}{B_r}{\rm{,}}
\end{array}
\end{equation}

\begin{displaymath}
\begin{array}{c}
{\Delta _{{K_r}}}\left( t \right) = {\varphi ^{2n}}\left( t \right)\Delta _\Phi ^{{n^2}}\left( t \right){\Delta _V}\Delta _P^n = {\varphi ^{2n}}\left( t \right)\Delta _\Phi ^{{n^2}}\left( t \right)det \left\{ {{A^{\rm{T}}} - PB{R^{ - 1}}{B^{\rm{T}}}} \right\}\Delta _P^n = \\
 = \Delta _\Phi ^{{n^2}}\left( t \right)det \left\{ {{\varphi ^2}\left( t \right){A^{\rm{T}}} - {\varphi ^2}\left( t \right)PB{R^{ - 1}}{B^{\rm{T}}}} \right\}\Delta _P^n = \\
 = \Delta _\Phi ^{{n^2}}\left( t \right)det \left\{ {{\Delta _P}\varphi \left( t \right)z_A^{\rm{T}}\left( t \right) - {\Delta _P}P{z_B}\left( t \right){R^{ - 1}}z_B^{\rm{T}}\left( t \right)} \right\} = \\
 = \Delta _\Phi ^{{n^2}}\left( t \right)det \left\{ \begin{array}{c}
det \left\{ {{\Phi _{11}}\left( {{\tau _\infty }} \right)} \right\}\varphi \left( t \right)z_A^{\rm{T}}\left( t \right) - \\
 - {\Phi _{21}}\left( {{\tau _\infty }} \right)adj\left\{ {{\Phi _{11}}\left( {{\tau _\infty }} \right)} \right\}{z_B}\left( t \right){R^{ - 1}}z_B^{\rm{T}}\left( t \right)
\end{array} \right\} = \\
 = det \left\{ \begin{array}{c}
\Delta _\Phi ^n\left( t \right)det \left\{ {{\Phi _{11}}\left( {{\tau _\infty }} \right)} \right\}\varphi \left( t \right)z_A^{\rm{T}}\left( t \right) - \\
 - \Delta _\Phi ^n\left( t \right){\Phi _{21}}\left( {{\tau _\infty }} \right)adj\left\{ {{\Phi _{11}}\left( {{\tau _\infty }} \right)} \right\}{z_B}\left( t \right){R^{ - 1}}z_B^{\rm{T}}\left( t \right)
\end{array} \right\} = \\
 = det \left\{ \begin{array}{c}
det \left\{ {{\Delta _\Phi }\left( t \right){\Phi _{11}}\left( {{\tau _\infty }} \right)} \right\}\varphi \left( t \right)z_A^{\rm{T}}\left( t \right) - \\
 - {\Delta _\Phi }\left( t \right){\Phi _{21}}\left( {{\tau _\infty }} \right)adj\left\{ {{\Delta _\Phi }\left( t \right){\Phi _{11}}\left( {{\tau _\infty }} \right)} \right\}{z_B}\left( t \right){R^{ - 1}}z_B^{\rm{T}}\left( t \right)
\end{array} \right\} \buildrel \Delta \over = \\
 \buildrel \Delta \over = det \left\{ {det \left\{ {{z_{{\Phi _{11}}}}\left( t \right)} \right\}\varphi \left( t \right)z_A^{\rm{T}}\left( t \right) - {z_{{\Phi _{21}}}}\left( t \right)adj\left\{ {{z_{{\Phi _{11}}}}\left( t \right)} \right\}{z_B}\left( t \right){R^{ - 1}}z_B^{\rm{T}}\left( t \right)} \right\} = \\
 = det \left\{ {{\Delta _{{K_x}}}\left( t \right)z_A^{\rm{T}}\left( t \right) + y_{{K_x}}^{\rm{T}}\left( t \right)z_B^{\rm{T}}\left( t \right)} \right\},
\end{array}
\end{displaymath}
where ${\varepsilon _{{K_r}}}\left( t \right)$ is obtained via simple, but tedious speculations:
\begin{displaymath}
\begin{array}{c}
{\varepsilon _{{K_r}}}\left( t \right) = det \left\{ {det \left\{ {{z_{{\Phi _{11}}}}\left( t \right)} \right\}\varphi \left( t \right)z_A^{\rm{T}}\left( t \right) - } \right.
{z_{{\Phi _{21}}}}\left( t \right)adj\left\{ {{z_{{\Phi _{11}}}}\left( t \right)} \right\}{z_B}\left( t \right){R^{ - 1}}z_B^{\rm{T}}\left( t \right) + \\
 + \left( {det \left\{ {{z_{{\Phi _{11}}}}\left( t \right) - {\varepsilon _{{\Phi _{11}}}}\left( t \right)} \right\} - det \left\{ {{z_{{\Phi _{11}}}}\left( t \right)} \right\}} \right)\varphi \left( t \right)z_A^{\rm{T}}\left( t \right)
 - {z_{{\Phi _{21}}}}\left( t \right)adj\left\{ { - {\varepsilon _{{\Phi _{11}}}}\left( t \right)} \right\}{z_B}\left( t \right){R^{ - 1}}z_B^{\rm{T}}\left( t \right) + \\
 + \left. {{\varepsilon _{{\Phi _{21}}}}\left( t \right)adj\left\{ {{\Delta _\Phi }\left( t \right){\Phi _{11}}\left( {{\tau _\infty }} \right)} \right\}{z_B}\left( t \right){R^{ - 1}}z_B^{\rm{T}}\left( t \right)} \right\}{K_r}
 - det \left\{ \begin{array}{c}
det \left\{ {{z_{{\Phi _{11}}}}\left( t \right)} \right\}\varphi \left( t \right)z_A^{\rm{T}}\left( t \right) - \\
 - {z_{{\Phi _{21}}}}\left( t \right)adj\left\{ {{z_{{\Phi _{11}}}}\left( t \right)} \right\}{z_B}\left( t \right){R^{ - 1}}z_B^{\rm{T}}\left( t \right)
\end{array} \right\}{K_r} - \\
 - \varphi \left( t \right){R^{ - 1}}z_B^{\rm{T}}\left( t \right) adj\left\{ \begin{array}{c}
\varphi \left( t \right)z_A^{\rm{T}}\left( t \right)det \left\{ {{\Delta _\Phi }\left( t \right){\Phi _{11}}\left( {{\tau _\infty }} \right)} \right\} - \\
 - {\Delta _\Phi }\left( t \right){\Phi _{21}}\left( {{\tau _\infty }} \right)adj\left\{ {{\Delta _\Phi }\left( t \right){\Phi _{11}}\left( {{\tau _\infty }} \right)} \right\}{z_B}\left( t \right){R^{ - 1}}z_B^{\rm{T}}\left( t \right)
\end{array} \right\} \times \\
 \times {\varepsilon _{{\Phi _{21}}}}\left( t \right)adj\left\{ {{\Delta _\Phi }\left( t \right){\Phi _{11}}\left( {{\tau _\infty }} \right)} \right\}{B_r} + \varphi \left( t \right){R^{ - 1}}z_B^{\rm{T}}\left( t \right) adj\left\{ \begin{array}{c}
\varphi \left( t \right)z_A^{\rm{T}}\left( t \right)det \left\{ {{\Delta _\Phi }\left( t \right){\Phi _{11}}\left( {{\tau _\infty }} \right)} \right\} - \\
 - {\Delta _\Phi }\left( t \right){\Phi _{21}}\left( {{\tau _\infty }} \right)adj\left\{ {{\Delta _\Phi }\left( t \right){\Phi _{11}}\left( {{\tau _\infty }} \right)} \right\}{z_B}\left( t \right){R^{ - 1}}z_B^{\rm{T}}\left( t \right)
\end{array} \right\} \times \\
 \times {z_{{\Phi _{21}}}}\left( t \right)adj\left\{ { - {\varepsilon _{{\Phi _{11}}}}\left( t \right)} \right\}{B_r} + \varphi \left( t \right){R^{ - 1}}z_B^{\rm{T}}\left( t \right) adj\left\{ {{\varepsilon _{{\Phi _{21}}}}\left( t \right)adj\left\{ {{\Delta _\Phi }\left( t \right){\Phi _{11}}\left( {{\tau _\infty }} \right)} \right\}{z_B}\left( t \right){R^{ - 1}}z_B^{\rm{T}}\left( t \right)} \right\} {z_{{\Phi _{21}}}}\left( t \right)adj\left\{ {{z_{{\Phi _{11}}}}\left( t \right)} \right\}{B_r} + \\
 + \varphi \left( t \right){R^{ - 1}}z_B^{\rm{T}}\left( t \right)adj\left\{ { - {z_{{\Phi _{21}}}}\left( t \right)adj\left\{ { - {\varepsilon _{{\Phi _{11}}}}\left( t \right)} \right\}{z_B}\left( t \right){R^{ - 1}}z_B^{\rm{T}}\left( t \right)} \right\} {z_{{\Phi _{21}}}}\left( t \right)adj\left\{ {{z_{{\Phi _{11}}}}\left( t \right)} \right\}{B_r} + \varphi \left( t \right){R^{ - 1}}z_B^{\rm{T}}\left( t \right) \times \\
 \times adj\left\{ {\varphi \left( t \right)z_A^{\rm{T}}\left( t \right)\left( {det \left\{ {{z_{{\Phi _{11}}}}\left( t \right) - {\varepsilon _{{\Phi _{11}}}}\left( t \right)} \right\} - det \left\{ {{z_{{\Phi _{11}}}}\left( t \right)} \right\}} \right)} \right\} {z_{{\Phi _{21}}}}\left( t \right)adj\left\{ {{z_{{\Phi _{11}}}}\left( t \right)} \right\}{B_r}.
\end{array}
\end{displaymath}

More details on derivation of the equations \eqref{eqA6} and \eqref{eqA7} and ${\varepsilon _{{K_r}}}\left( t \right)$ can be found in the supplementary material \cite{Glushchenko22d}.

Having combined the equations \eqref{eqA6} and \eqref{eqA7}, the dynamic regression equation  \eqref{eq16},  \eqref{eq19} with respect to the vector of the unknown parameters $\theta $ is obtained.

It follows from the definitions of the disturbances ${\varepsilon _{{K_x}}}\left( t \right)$ and ${\varepsilon _{{K_r}}}\left( t \right)$ that
\begin{displaymath}
{\varepsilon _{{\Phi _{11}}}}\left( t \right) = {\varepsilon _{{\Phi _{21}}}}\left( t \right) = {0_{n \times n}} \Rightarrow \left\{ \begin{array}{l}
{\varepsilon _{{K_r}}}\left( t \right) = {0_{m \times m}}\\
{\varepsilon _{{K_x}}}\left( t \right) = {0_{m \times n}}
\end{array} \right.{\rm{,}}
\end{displaymath}
so the implication ${\varepsilon _{{\Phi _{11}}}}\left( t \right) = {\varepsilon _{{\Phi _{21}}}}\left( t \right) = {0_{n \times n}} \Rightarrow {\varepsilon _\theta }\left( t \right) = {0_{\left( {n + m} \right) \times m}}$ holds, which completes the proof of the Proposition.

\section{Proof of Proposition 3.}

The upper bound of $\varepsilon $ is written as:
\begin{equation}\label{eqA8}
\begin{array}{c}
\left\| \varepsilon  \right\| = \left\| {\sum\limits_{k = p + 1}^\infty  {{\textstyle{1 \over {k{\rm{!}}}}}{D^k}\tau _\infty ^k} } \right\| = \left\| {\frac{{\tau _\infty ^{p + 1}}}{{\left( {p + 1} \right){\rm{!}}}}{D^{p + 1}} + \frac{{\tau _\infty ^{p + 2}}}{{\left( {p + 2} \right){\rm{!}}}}{D^{p + 2}} + \frac{{\tau _\infty ^{p + 3}}}{{\left( {p + 3} \right){\rm{!}}}}{D^{p + 3}} +  \ldots  + } \right\| \le \\
 \le \frac{{\tau _\infty ^p}}{{\left( {p + 1} \right){\rm{!}}}}{\left\| D \right\|^p}\left\| {{\tau _\infty }D + \frac{{\tau _\infty ^2}}{{\left( {p + 2} \right)}}{D^2} + \frac{{\tau _\infty ^3}}{{\left( {p + 2} \right)\left( {p + 3} \right)}}{D^3} +  \ldots  + } \right\| \le \\
 \le \frac{{\tau _\infty ^p}}{{\left( {p + 1} \right){\rm{!}}}}{\left\| D \right\|^p}\left\| {{\tau _\infty }D + \frac{{\tau _\infty ^2}}{{2{\rm{!}}}}{D^2} + \frac{{\tau _\infty ^3}}{{3{\rm{!}}}}{D^3} +  \ldots  + } \right\| \le \frac{{\tau _\infty ^p}}{{\left( {p + 1} \right){\rm{!}}}}{\left\| D \right\|^p}\left( {{e^{\left\| D \right\|{\tau _\infty }}} - 1} \right){\rm{,}}
\end{array}
\end{equation}
whence it follows that the equation \eqref{eq27} holds.

When ${\tau _\infty } \in {L_\infty }$, it follows from \eqref{eqA8} that:
\begin{equation}\label{eqA9}
\begin{array}{c}
{\varepsilon _{{\rm{max}}}} = \frac{{{c_2}c_1^p}}{{\left( {p + 1} \right){\rm{!}}}} = \frac{{{c_2}c_1^p}}{{\left( {p + 1} \right)p{\rm{!}}}} = \frac{{{c_2}}}{{p + 1}}\left( {\frac{{{c_1}}}{1}\frac{{{c_1}}}{2}\frac{{{c_1}}}{3} \ldots \frac{{{c_1}}}{p}} \right) \le \frac{{c_1^k{c_2}}}{{p + 1}}{\left( {\frac{{{c_1}}}{k}} \right)^{p - k}}{\rm{,}}\\
{c_1} = {\tau _\infty }\left\| D \right\|{\rm{,}\;}{c_2} = {e^{\left\| D \right\|{\tau _\infty }}} - 1,{\rm{}}\;k > {c_1}{\rm{,}}
\end{array}
\end{equation}

Therefore, it holds that $\mathop {\lim }\limits_{p \to \infty } {\varepsilon _{\max }} = 0$, as was to be proved in the proposition.

\section{Proof of Proposition 4.} 

To prove the statement (a), the definitions of ${\Delta _\Phi }\left( t \right){\rm{,}}\;{\Delta _{{K_x}}}\left( t \right){\rm{,}}\;{\Delta _{{K_r}}}\left( t \right){\rm{,}}\;{\Delta _V}{\rm{,}}\;{\Delta _P}$ are substituted into the equation of $\Delta \left( t \right)$:
\begin{equation}\label{eqA10}
\begin{array}{c}
\Delta \left( t \right){\rm{:}} = \Delta _{{K_x}}^n\left( t \right)\Delta _{{K_r}}^m\left( t \right) = \\
 = \Delta _\Phi ^{{n^2}}\left( t \right){\varphi ^n}\left( t \right){det ^n}\left\{ {{\Phi _{11}}\left( {{\tau _\infty }} \right)} \right\}{\varphi ^{2mn}}\left( t \right)\Delta _\Phi ^{m{n^2}}\left( t \right){det ^m}\left\{ {{A^{\rm{T}}} - PB{R^{ - 1}}{B^{\rm{T}}}} \right\}
 {det ^{mn}}\left\{ {{\Phi _{11}}\left( {{\tau _\infty }} \right)} \right\} = \\
 = {\varphi ^{2mn + n}}\left( t \right)\Delta _\Phi ^{m{n^2} + {n^2}}\left( t \right){det ^m}\left\{ {{A^{\rm{T}}} - PB{R^{ - 1}}{B^{\rm{T}}}} \right\}{det ^{mn + n}}\left\{ {{\Phi _{11}}\left( {{\tau _\infty }} \right)} \right\} = \\
 = {\varphi ^{2mn + n}}\left( t \right){\varphi ^{2p\left( {m + 1} \right){n^2}}}\left( t \right){det ^m}\left\{ {{A^{\rm{T}}} - PB{R^{ - 1}}{B^{\rm{T}}}} \right\}{det ^{mn + n}}\left\{ {{\Phi _{11}}\left( {{\tau _\infty }} \right)} \right\} = \\
 = {\varphi ^q}\left( t \right){det ^m}\left\{ {{A^{\rm{T}}} - L_2^{\rm{T}}\Phi \left( {{\tau _\infty }} \right){L_1}{{\left( {L_1^{\rm{T}}\Phi \left( {{\tau _\infty }} \right){L_1}} \right)}^{ - 1}}B{R^{ - 1}}{B^{\rm{T}}}} \right\}{det ^{mn + n}}\left\{ {L_1^{\rm{T}}\Phi \left( {{\tau _\infty }} \right){L_1}} \right\}.
\end{array}
\end{equation}
where $q = \left( {m + 1} \right)2p{n^2} + 2mn + n{\rm{.}}$

As $\varphi \left( t \right) \in \left[ {0{\rm{;\;1}}} \right)$ according to \eqref{eq24} and, following \eqref{eq30}, the matrices ${L_1}{\rm{,}}\;{L_2}$ are bounded, then in order to prove that $\Delta \left( t \right) \in {L_\infty }$ it is necessary and sufficient to show that $R{\rm{,}}\;A{\rm{,}}\;B{\rm{,}}\;\Phi \left( {{\tau _\infty }} \right)$ are bounded. In their turn, the matrices $A$ and $B$ are bounded as the pair $\left( {A{\rm{,}}\;B} \right)$ is controllable, and $R \in {L_\infty }$ is bounded because of its definition. The matrix exponential $\Phi \left( {{\tau _\infty }} \right) = {e^{D{\tau _\infty }}}$ is bounded if $D$ and ${\tau _\infty }$ are bounded. The matrix $D$ is bounded as $A$ and $B$ are bounded, $Q$ and $R$ are bounded because of their definition. Then $\Delta \left( t \right) \in {L_\infty }$ if ${\tau _\infty } \in {L_\infty }$, as was to be proved in the statement (a).

The next step is to prove the statement (b). 

According to Assumption \ref{assumption1} and Remark \ref{remark4}, the implication $\Psi \left( t \right) \in {\rm{FE}} \Rightarrow \varphi \left( t \right) \in {\rm{FE}}$ holds. Then, to prove Proposition \ref{proposition4}, it is enough to show that $\varphi \left( t \right) \in {\rm{FE}} \Rightarrow \Delta \left( t \right) \in {\rm{FE}}$ holds. Considering \eqref{eqA10}, the finite excitation definition \eqref{eq1} is written for $\Delta \left( t \right)$:
\begin{equation}\label{eqA11}
\int\limits_{t_r^ + }^{{t_e}} {{\Delta ^2}\left( \tau  \right)} {\rm{}\;}d\tau  = {det} ^{2m}\left\{ {{A^{\rm{T}}} - PB{R^{ - 1}}{B^{\rm{T}}}} \right\}{det }^{2mn + 2n}\left\{ {{\Phi _{11}}\left( {{\tau _\infty }} \right)} \right\}\int\limits_{t_r^ + }^{{t_e}} {{\varphi ^{2q}}\left( \tau  \right)} {\rm{}\;}d\tau.
\end{equation}

As, following Proposition \ref{proposition1}, $\exists \Phi _{11}^{ - 1}\left( {{\tau _\infty }} \right){\rm{,}}\;\exists {\left( {{A^{\rm{T}}} - PB{R^{ - 1}}{B^{\rm{T}}}} \right)^{ - 1}}$, then the inequality ${det ^{2m}}\left\{ {{A^{\rm{T}}} - PB{R^{ - 1}}{B^{\rm{T}}}} \right\}{det ^{2mn + 2n}}\left\{ {{\Phi _{11}}\left( {{\tau _\infty }} \right)} \right\} > 0$ holds. Therefore, to prove the proposition, it remains to show that the following inequality is true:
\begin{equation}\label{eqA12}
\int\limits_{t_r^ + }^{{t_e}} {{\varphi ^{2q}}\left( \tau  \right)} {\rm{}}\;d\tau  \ge \overline \alpha  > 0
\end{equation}
where $\overline \alpha $  is the excitation level of the regressor ${\varphi ^q}\left( t \right)$.

To that end, the finite excitation definition \eqref{eq1} is written for $\varphi \left( t \right)$:
\begin{equation}\label{eqA13}
\int\limits_{t_r^ + }^{{t_e}} {{\varphi ^2}\left( \tau  \right)d\tau }  \ge \alpha.
\end{equation}

It is easy to check that the inequality \eqref{eqA13} holds when $\exists {t_0} \in \left[ {t_r^ + {\rm{;}}\;{t_e}} \right]{\rm{}}\;{\varphi ^2}\left( {{t_0}} \right) > 0$. Indeed, let ${\varphi ^2}\left( {{t_0}} \right) = \beta  > 0$, then, as the function ${\varphi ^2}\left( t \right)$ is continuous, there exists such a neighborhood of ${t_0}$ that for each time segment $\left[ {{t_a}{\rm{;}}\;{t_b}} \right]$, which entirely belongs to this neighborhood, the inequality ${\varphi ^2}\left( t \right) \ge {\textstyle{\beta  \over 2}}$ holds. Then, an equation, which is equivalent to \eqref{eqA13}, also holds:
\begin{equation}\label{eqA14}
\begin{array}{c}
\int\limits_{t_r^ + }^{{t_e}} {{\varphi ^2}\left( \tau  \right)d\tau }  = \int\limits_{t_r^ + }^{{t_a}} {{\varphi ^2}\left( \tau  \right)d\tau }  + \int\limits_{{t_a}}^{{t_b}} {{\varphi ^2}\left( \tau  \right)d\tau }  + \int\limits_{{t_b}}^{{t_e}} {{\varphi ^2}\left( \tau  \right)d\tau } 
 \ge \int\limits_{{t_a}}^{{t_b}} {{\varphi ^2}\left( \tau  \right)d\tau }  \ge {\textstyle{\beta  \over 2}}\int\limits_{{t_a}}^{{t_b}} {1d\tau }  = {\textstyle{\beta  \over 2}}\left( {{t_b} - {t_a}} \right) = \alpha {\rm{,}}
\end{array}
\end{equation}

Therefore, when $\varphi \left( t \right) \in {\rm{FE}}$, the inequality ${\varphi ^{2q}}\left( t \right) \ge {\textstyle{{{\beta ^{2q}}} \over {{2^{2q}}}}}$ is true over $\left[ {{t_a}{\rm{;}}\;{t_b}} \right]$. As a result, the equations \eqref{eqA11} and \eqref{eqA12} are rewritten as:
\begin{equation}\label{eqA15}
\begin{array}{c}
\int\limits_{t_r^ + }^{{t_e}} {{\varphi ^{2q}}\left( \tau  \right)} {\rm{}\;}d\tau  = \int\limits_{t_r^ + }^{{t_a}} {{\varphi ^{2q}}\left( \tau  \right)d\tau }  + \int\limits_{{t_a}}^{{t_b}} {{\varphi ^{2q}}\left( \tau  \right)d\tau }  + \int\limits_{{t_b}}^{{t_e}} {{\varphi ^{2q}}\left( \tau  \right)d\tau }  \ge \\
 \ge \int\limits_{{t_a}}^{{t_b}} {{\varphi ^{2q}}\left( \tau  \right)d\tau }  \ge {\left( {{\textstyle{\beta  \over 2}}} \right)^{2q}}\int\limits_{{t_a}}^{{t_b}} {1d\tau }  = {\left( {{\textstyle{\beta  \over 2}}} \right)^{2q}}\left( {{t_b} - {t_a}} \right) = \overline \alpha  > 0,\\
\int\limits_{t_r^ + }^{{t_e}} {{\Delta ^2}\left( \tau  \right)} {\rm{}\;}d\tau  \ge {det ^{2m}}\left\{ {{A^{\rm{T}}} - PB{R^{ - 1}}{B^{\rm{T}}}} \right\}{det ^{2mn + 2n}}\left\{ {{\Phi _{11}}\left( {{\tau _\infty }} \right)} \right\}\overline \alpha  > 0,
\end{array}
\end{equation}
whence the validity of the implication $\Psi \left( t \right) \in {\rm{FE}} \Rightarrow \varphi \left( t \right) \in {\rm{FE}} \Rightarrow \Delta \left( t \right) \in {\rm{FE}}$ follows.

The necessity to have a finite degree of the Taylor polynomial to implement the series expansion is shown by the contrary:
\begin{equation}\label{eqA16}
\begin{array}{c}
\mathop {{\rm{lim}}}\limits_{p \to \infty } \int\limits_{t_r^ + }^{{t_e}} {{\Delta ^2}\left( \tau  \right)} {\rm{}\;}d\tau  
 = {det ^{2m}}\left\{ {{A^{\rm{T}}} - PB{R^{ - 1}}{B^{\rm{T}}}} \right\}{det ^{2mn + 2n}}\left\{ {{\Phi _{11}}\left( {{\tau _\infty }} \right)} \right\}\mathop {{\rm{lim}}}\limits_{p \to \infty } \int\limits_{t_r^ + }^{{t_e}} {{\varphi ^{\left( {m + 1} \right)4p{n^2} + 4mn + 2n{\rm{.}}}}\left( \tau  \right)} {\rm{}}\;d\tau .
\end{array}
\end{equation}

As $\varphi \left( t \right) \in \left[ {0{\rm{;\;1}}} \right)$ by virtue of the definition \eqref{eq24}, then, when $p \to \infty $, it holds that $\mathop {{\rm{lim}}}\limits_{p \to \infty } \int\limits_{t_r^ + }^{{t_e}} {{\Delta ^2}\left( \tau  \right)} {\rm{}}\;d\tau  = 0$. So, it is necessary to have $p \in {L_\infty }$ to prove that $\Psi \left( t \right) \in {\rm{FE}} \Rightarrow \Delta \left( t \right) \in {\rm{FE}}$. This completes the proof of the Proposition.

\section{Proof of Proposition 5.}

It follows from the definition of ${\varepsilon _\theta }\left( t \right)$ \eqref{eq19} that, if ${\varepsilon _{{K_r}}}\left( t \right){\rm{,}}\;{\varepsilon _{{K_x}}}\left( t \right)$ and $\overline \Delta \left( t \right)$ are bounded, then the disturbance ${\varepsilon _\theta }\left( t \right)$ is bounded $\left\| {{\varepsilon _\theta }\left( t \right)} \right\| \le \varepsilon _\theta ^{UB}$. When ${\tau _\infty } \in {L_\infty }$, the regressor $\overline \Delta \left( t \right){\rm{:}} = {{blockdiag}}\left\{ {{\Delta _{{K_x}}}\left( t \right){I_{n \times n}}{\rm{,}\;}{\Delta _{{K_r}}}\left( t \right){I_{m \times m}}} \right\}$ is bounded because, following the proof of Proposition \ref{proposition4}, when ${\tau _\infty } \in {L_\infty }$,  then ${\Delta _{{K_x}}}\left( t \right)$ and ${\Delta _{{K_r}}}\left( t \right)$ are bounded. In its turn, it follows from the definitions \eqref{eqA6} and \eqref{eqA7} that ${\varepsilon _{{K_r}}}\left( t \right){\rm{,}}\;{\varepsilon _{{K_x}}}\left( t \right)$ are bounded if $A{\rm{,}}\;B{\rm{,}}\;R{\rm{,}}\;{B_r}{\rm{,}}\;{L_1}{\rm{,}}\;{L_2}$ and $\Phi \left( {{\tau _\infty }} \right){\rm{,}\;}\varepsilon $ are bounded.

The matrices $A$ and $B$ are bounded because of the fact that the pair $\left( {A{\rm{,}}\;B} \right)$ is controllable, while ${L_1}{\rm{,}}\;{L_2}$  and $R{\rm{,}}\;{B_r}$ are bounded according to their definitions. The matrix exponential $\Phi \left( {{\tau _\infty }} \right) = {e^{D{\tau _\infty }}}$ is bounded if $D$ and ${\tau _\infty }$ are bounded. The matrix $D$ is bounded as $A$, $B$ and $Q$, $R$ are bounded. The Taylor-series-based approximation error $\varepsilon $ is bounded following the proof of Proposition \ref{proposition3}. Then $\left\| {{\varepsilon _\theta }\left( t \right)} \right\| \le \varepsilon _\theta ^{UB}$ when ${\tau _\infty } \in {L_\infty }$, as was to be proved in the statement (a).

As, according to \eqref{eq30} and because of $\mathop {\lim }\limits_{p \to \infty } {\varepsilon _{\max }} = 0$ and $\varphi \left( t \right) \in \left[ {0{\rm{;\;1}}} \right)$, the following is true for the disturbances ${\varepsilon _{{\Phi _{11}}}}\left( t \right)$ and ${\varepsilon _{{\Phi _{21}}}}\left( t \right)$:
\begin{equation}\label{eqA17}
\begin{array}{c}
\mathop {\lim }\limits_{p \to \infty } {\varepsilon _{{\Phi _{11}}}}\left( t \right) = \mathop {\lim }\limits_{p \to \infty } \left( { - {\varphi ^{2p}}\left( t \right)L_1^{\rm{T}}\varepsilon {L_1}} \right) = 0,\\
\mathop {\lim }\limits_{p \to \infty } {\varepsilon _{{\Phi _{21}}}}\left( t \right) = \mathop {\lim }\limits_{p \to \infty } \left( { - {\varphi ^{2p}}\left( t \right)L_2^{\rm{T}}\varepsilon {L_2}} \right) = 0,
\end{array}
\end{equation}
then, following the proof of Proposition \ref{proposition2}, it is easy to check the correctness of the equality $\mathop {\lim }\limits_{p \to \infty } {\varepsilon _\theta } = \mathop {\lim }\limits_{p \to \infty } \varepsilon _\theta ^{UB} = 0$. This completes the proof of the proposition.

\section {Proof of Theorem.} 

(a) The following function is considered to prove Theorem:
\begin{equation}\label{eqA18}
{V_{\tilde \theta }} = tr\left( {{{\tilde \theta }^{\rm{T}}}\tilde \theta } \right)
\end{equation}

Following Proposition \ref{proposition6}, the derivative of \eqref{eqA18} with respect to \eqref{eq32} $\forall t \ge {t_e}$ is written as:
\begin{equation}\label{eqA19}
\begin{array}{c}
{{\dot V}_{\tilde \theta }} = 2tr\left( { - {{\tilde \theta }^{\rm{T}}}\gamma {\Omega ^2}\tilde \theta  + {{\tilde \theta }^{\rm{T}}}\gamma \Omega \varsigma } \right) = 2tr\left\{ { - {{\tilde \theta }^{\rm{T}}}{\textstyle{{\left( {{\gamma _0}{\lambda _{{\rm{max}}}}\left( {\omega {\omega ^{\rm{T}}}} \right) + {\gamma _1}} \right){\Omega ^2}} \over {{\Omega ^2}}}}\tilde \theta}\right.
\left.{+ {{\tilde \theta }^{\rm{T}}}{\textstyle{{\left( {{\gamma _0}{\lambda _{{\rm{max}}}}\left( {\omega {\omega ^{\rm{T}}}} \right) + {\gamma _1}} \right)\Omega } \over {{\Omega ^2}}}}\varsigma } \right\} \le \\
 \le  - 2tr\left\{ {{{\tilde \theta }^{\rm{T}}}\left( {{\gamma _0}{\lambda _{{\rm{max}}}}\left( {\omega {\omega ^{\rm{T}}}} \right) + {\gamma _1}} \right)\tilde \theta  - {{\tilde \theta }^{\rm{T}}}{\textstyle{{{\gamma _0}{\lambda _{{\rm{max}}}}\left( {\omega {\omega ^{\rm{T}}}} \right) + {\gamma _1}} \over \rho }}\varsigma } \right\}
 \le  - 2\left( {{\gamma _0}{\lambda _{{\rm{max}}}}\left( {\omega {\omega ^{\rm{T}}}} \right) + {\gamma _1}} \right){\left\| {\tilde \theta } \right\|^2} + 2{\textstyle{{{\gamma _0}{\lambda _{{\rm{max}}}}\left( {\omega {\omega ^{\rm{T}}}} \right) + {\gamma _1}} \over \rho }}{\varsigma _{{\rm{UB}}}}\left\| {\tilde \theta } \right\|.
\end{array}
\end{equation}

Assuming that $a = \sqrt {{\gamma _0}{\lambda _{{\rm{max}}}}\left( {\omega {\omega ^{\rm{T}}}} \right) + {\gamma _1}} \left\| {\tilde \theta } \right\|{\rm{,}}$ $b = {\textstyle{{{\varsigma _{{\rm{UB}}}}\sqrt {{\gamma _0}{\lambda _{{\rm{max}}}}\left( {\omega {\omega ^{\rm{T}}}} \right) + {\gamma _1}} } \over \rho }}{\rm{,}}$ and applying the inequality $ - {a^2} + ab \le  - {\textstyle{1 \over 2}}{a^2} + {\textstyle{1 \over 2}}{b^2}$, it is obtained from \eqref{eqA19} that:
\begin{equation}\label{eqA20}
{\dot V_{\tilde \theta }}\left( t \right) \le  - \underbrace {\left( {{\gamma _0}{\lambda _{{\rm{max}}}}\left( {\omega \left( t \right){\omega ^{\rm{T}}}\left( t \right)} \right) + {\gamma _1}} \right)}_{{\kappa _1}\left( t \right)}{\left\| {\tilde \theta } \right\|^2} + {\textstyle{{{\kappa _1}\left( t \right)} \over {{\rho ^2}}}}\varsigma _{{\rm{UB}}}^2
\end{equation}

The general solution of the differential equation with time-varying parameters \eqref{eqA20} $\forall t \ge {t_e}$ is written as:
\begin{equation}\label{eqA21}
{V_{\tilde \theta }}\left( t \right) \le {e^{ - \int\limits_{{t_e}}^t {{\kappa _1}\left( \tau  \right)d\tau } }}V\left( {{t_e}} \right) + {e^{ - \int\limits_{{t_e}}^t {{\kappa _1}\left( \tau  \right)d\tau } }}\int\limits_{{t_e}}^t {{e^{\int\limits_{t_r^+}^\tau  {{\kappa _1}\left( s \right)ds} }}{\textstyle{{{\kappa _1}\left( {{\tau _1}} \right)\varsigma _{{\rm{UB}}}^2} \over {{\rho ^2}}}}d{\tau _1}}.
\end{equation}

Since that point two different situations are possible: ${\kappa _1}\left( t \right) \in {L_\infty }$ and ${\kappa _1}\left( t \right) \notin {L_\infty }$. Let both of them be considered.

As $x\left( t \right)$ and $r$ are continuous, then the functions ${\lambda _{{\rm{max}}}}\left( {\omega \left( t \right){\omega ^{\rm{T}}}\left( t \right)} \right)$ and ${\kappa _1}\left( t \right)$ are continuous as well. Then, when ${\kappa _1}\left( t \right) \notin {L_\infty }$, the following exponential upper bound of ${\kappa _1}\left( t \right)$ holds:
\begin{displaymath}
{\kappa _1}\left( t \right) \le {\overline a_0}{e^{{{\overline a}_1}t}}{\rm{,}\;}{\overline a_0} > 0,{\rm{}\;}{\overline a_1} > 0.
\end{displaymath}

Considering ${\kappa _1}\left( t \right) \le {\overline a_0}{e^{{{\overline a}_1}t}}$, the inequality \eqref{eqA21} is solved to obtain:
\begin{equation}\label{eqA22}
{V_{\tilde \theta }}\left( t \right) \le {e^{ - \frac{{{{\overline a}_0}}}{{{{\overline a}_1}}}{e^{{a_1}\left( {t - {t_e}} \right)}}}}V\left( {{t_e}} \right) + {\textstyle{{\varsigma _{{\rm{UB}}}^2} \over {{\rho ^2}}}} \le {e^{ - {{\overline a}_0}\overline a_1^{ - 1}\left( {t - {t_e}} \right)}}V\left( {{t_e}} \right) + {\textstyle{{\varsigma _{{\rm{UB}}}^2} \over {{\rho ^2}}}}.
\end{equation}

If ${\kappa _1}\left( t \right) \in {L_\infty }$, then it holds that ${\lambda _{{\rm{max}}}}\left( {\omega \left( t \right){\omega ^{\rm{T}}}\left( t \right)} \right) \le \lambda _{{\rm{max}}}^{{\rm{UB}}}$, and the solution of \eqref{eqA21} is written as:
\begin{equation}\label{eqA23}
{V_{\tilde \theta }}\left( t \right) \le {e^{ - \left( {{\gamma _0}\lambda _{{\rm{max}}}^{{\rm{UB}}} + {\gamma _1}} \right)\left( {t - {t_e}} \right)}}V\left( {{t_e}} \right) + {\textstyle{{\varsigma _{{\rm{UB}}}^2} \over {{\rho ^2}}}}.
\end{equation}

Thus, regardless of the boundedness of ${\lambda _{{\rm{max}}}}\left( {\omega \left( t \right){\omega ^{\rm{T}}}\left( t \right)} \right)$ and ${\kappa _1}\left( t \right)$, the error $\tilde \theta \left( t \right)$ exponentially converges to a set with a bound:
\begin{equation}\label{eqA24}
\forall t \ge {t_e}{\rm{:\;}}\left\| {\tilde \theta \left( t \right)} \right\| \le {e^{ - 0.5{\eta _{\tilde \theta }}\left( {t - {t_e}} \right)}}\left\| {\tilde \theta \left( {{t_e}} \right)} \right\| + {\textstyle{{{\varsigma _{{\rm{UB}}}}} \over \rho }}{\rm{,}}
\end{equation}
where ${\eta _{\tilde \theta }} \ge {\rm{min}}\left\{ {{\textstyle{{{{\overline a}_0}} \over {{{\overline a}_1}}}}{\rm{,}\;}{\gamma _0}\lambda _{{\rm{max}}}^{{\rm{UB}}} + {\gamma _1}} \right\}$ is a lower bound of the error $\tilde \theta \left( t \right)$ convergence rate.

The next step is to analyze the properties of the augmented tracking error $\xi \left( t \right)$.

As ${A_{ref}}$ is a Hurwitz matrix and $B$ is of full column rank, then according to the corollary of the KYP lemma \eqref{eq2}, there exist $M{\rm{,}}\;N{\rm{,}}\;D{\rm{,}}\;K$, which satisfy \eqref{eq2}. Then the quadratic function to analyze the stability of \eqref{eq12} is chosen:
\begin{equation}\label{eqA25}
\begin{array}{c}
{V_\xi } = e_{ref}^{\rm{T}}M{e_{ref}} + {\textstyle{1 \over 2}}tr\left( {{{\tilde \theta }^{\rm{T}}}\tilde \theta } \right){\rm{,}\;}H = {{blockdiag}}\left\{ {M{\rm{,}\;}{\textstyle{1 \over 2}}I} \right\}\\
\underbrace {{\lambda _{{\rm{min}}}}\left( H \right)}_{{\lambda _{\mathop{\rm m}\nolimits} }}{\left\| \xi  \right\|^2} \le V\left( {\left\| \xi  \right\|} \right) \le \underbrace {{\lambda _{{\rm{max}}}}\left( H \right)}_{{\lambda _M}}{\left\| \xi  \right\|^2}{\rm{,}}
\end{array}
\end{equation}

Applying $tr\left( {AB} \right) = BA$, the derivative of \eqref{eqA25} with respect to \eqref{eq12}, \eqref{eq32} is written as:
\begin{equation}\label{eqA26}
\begin{array}{c}
{{\dot V}_\xi } = e_{ref}^{\rm{T}}\left( {A_{ref}^{\rm{T}}M + M{A_{ref}}} \right){e_{ref}} + 2e_{ref}^{\rm{T}}MB{{\tilde \theta }^{\rm{T}}}\omega
-tr\left( {{{\tilde \theta }^{\rm{T}}}\gamma {\Omega ^2}\tilde \theta  + {{\tilde \theta }^{\rm{T}}}\gamma \Omega \varsigma } \right) = \\
 =  - \mu e_{ref}^{\rm{T}}M{e_{ref}} + tr\left( { - {K^{\rm{T}}}{N^{\rm{T}}}{e_{ref}}e_{ref}^{\rm{T}}NK + 2{{\tilde \theta }^{\rm{T}}}\omega e_{ref}^{\rm{T}}NK}\right.
 \left.{-{{\tilde \theta }^{\rm{T}}}\left[ {\gamma {\Omega ^2}\tilde \theta  + \gamma \Omega \varsigma } \right]} \right).
\end{array}
\end{equation}

Without the loss of generality, $D$ is chosen so as $K{K^{\rm{T}}} = {K^{\rm{T}}}K = {I_{m \times m}}$. Then, completing the square in \eqref{eqA26}, it is obtained:
\begin{equation}\label{eqA27}
\begin{array}{c}
{{\dot V}_\xi } =  - \mu e_{ref}^{\rm{T}}M{e_{ref}} + tr\left( { - {K^{\rm{T}}}{N^{\rm{T}}}{e_{ref}}e_{ref}^{\rm{T}}NK + 2{{\tilde \theta }^{\rm{T}}}\omega e_{ref}^{\rm{T}}NK} \right.
\left. { \pm {{\tilde \theta }^{\rm{T}}}\omega {\omega ^{\rm{T}}}\tilde \theta  - {{\tilde \theta }^{\rm{T}}}\left[ {\gamma {\Omega ^2}\tilde \theta  + \gamma \Omega \varsigma } \right]} \right) =  - \mu e_{ref}^{\rm{T}}M{e_{ref}} +\\
+tr\left\{ { - {{\left( {e_{ref}^{\rm{T}}NK - {{\tilde \theta }^{\rm{T}}}\omega } \right)}^2} + {{\tilde \theta }^{\rm{T}}}\omega {\omega ^{\rm{T}}}\tilde \theta  - {{\tilde \theta }^{\rm{T}}}\left[ {\gamma {\Omega ^2}\tilde \theta  + \gamma \Omega \varsigma } \right]} \right\}
 \le  - \mu e_{ref}^{\rm{T}}M{e_{ref}} + tr\left\{ {{{\tilde \theta }^{\rm{T}}}\omega {\omega ^{\rm{T}}}\tilde \theta  - {{\tilde \theta }^{\rm{T}}}\left[ {\gamma {\Omega ^2}\tilde \theta  + \gamma \Omega \varsigma } \right]} \right\}.
\end{array}
\end{equation}

Two different cases are to be considered: $t < {t_e}{\rm{;}}\;t \ge {t_e}$.

Firstly, let $t < {t_e}$. In accordance with Proposition \ref{proposition6}, in the worst case scenario $\forall t < {t_e}$ it holds that $\Omega \left( t \right) < \rho  \Leftrightarrow \gamma \Omega \left( t \right) = 0$ and $\left\| {\tilde \theta \left( t \right)} \right\| = \left\| {\tilde \theta \left( {t_r^ + } \right)} \right\|$. Then $\forall t < {t_e}$ the equation \eqref{eqA27} is rewritten as:
\begin{equation}\label{eqA28}
\begin{array}{c}
{{\dot V}_\xi } \le  - \mu e_{ref}^{\rm{T}}M{e_{ref}} + tr\left\{ {{{\tilde \theta }^{\rm{T}}}\left( {t_r^ + } \right)\omega {\omega ^{\rm{T}}}\tilde \theta \left( {t_r^ + } \right) \pm {{\tilde \theta }^{\rm{T}}}\tilde \theta } \right\} \le \\
 \le  - \mu e_{ref}^{\rm{T}}M{e_{ref}} - tr\left\{ {{{\tilde \theta }^{\rm{T}}}\tilde \theta } \right\} + tr\left\{ {{{\tilde \theta }^{\rm{T}}}\left( {t_r^ + } \right)\omega {\omega ^{\rm{T}}}\tilde \theta \left( {t_r^ + } \right) + {{\tilde \theta }^{\rm{T}}}\left( {t_r^ + } \right)\tilde \theta \left( {t_r^ + } \right)} \right\}.
\end{array}
\end{equation}

Let the notion of the maximum eigenvalue of $\omega \left( t \right){\omega ^{\rm{T}}}\left( t \right)$ over $\left[ {t_r^+{\rm{;}}\;{t_e}} \right)$ be introduced:
\begin{equation}\label{eqA29}
\delta  = {\rm{sup}}\mathop {{\rm{max}}}\limits_{\forall t < {t_e}} {\lambda _{{\rm{max}}}}\left( {\omega \left( t \right){\omega ^{\rm{T}}}\left( t \right)} \right) \le {\rm{max}}\left\{ {{{\overline a}_0}{e^{{{\overline a}_1}{t_e}}}{\rm{,}\;}\lambda _{{\rm{max}}}^{{\rm{UB}}}} \right\}.
\end{equation}

Taking into consideration \eqref{eqA29}, the equation \eqref{eqA28} for $t < {t_e}$ is rewritten as:
\begin{equation}\label{eqA30}
{\dot V_\xi } \le  - \mu {\lambda _{{\rm{min}}}}\left( M \right){\left\| {{e_{ref}}} \right\|^2} - {\left\| {\tilde \theta } \right\|^{\rm{2}}} + \left( {\delta  + 1} \right){\left\| {\tilde \theta \left( {t_r^ + } \right)} \right\|^2} \le  - {\eta _{{\xi _1}}}{V_\xi } + {r_B}{\rm{,}}
\end{equation}
where ${\eta _{{\xi _1}}} = {\rm{min}}\left\{ {{\textstyle{{\mu {\lambda _{{\rm{min}}}}\left( M \right)} \over {{\lambda _{{\rm{max}}}}\left( M \right)}}}{\rm{;\;2}}} \right\}{\rm{;}}\;{r_B} = \left( {\delta  + 1} \right){\left\| {\tilde \theta \left( {t_r^ + } \right)} \right\|^2}$.

Having solved \eqref{eqA30}, it is obtained:
\begin{equation}\label{eqA31}
\forall t < {t_e}{\rm{:}}\;{V_\xi }\left( t \right) \le {e^{ - {\eta _{{\xi _1}}}\left( {t - t_r^ + } \right)}}{V_\xi }\left( {t_r^ + } \right) + {\textstyle{{{r_B}} \over {{\eta _{{\xi _1}}}}}}.
\end{equation}

As ${\lambda _{\mathop{\rm m}\nolimits} }{\left\| {\xi \left( t \right)} \right\|^2} \le {V_\xi }\left( t \right)$ and ${V_\xi }\left( {t_r^ + } \right) \le {\lambda _M}{\left\| {\xi \left( {t_r^ + } \right)} \right\|^2}$, then $\forall t < {t_e}$ the estimate of $\xi $ is obtained from \eqref{eqA31}:
\begin{equation}\label{eqA32}
\left\| {\xi \left( t \right)} \right\| \le \sqrt {{\textstyle{{{\lambda _M}} \over {{\lambda _m}}}}{e^{ - {\eta _{\rm{1}}}\left( {t - t_r^ + } \right)}}{{\left\| {\xi \left( {t_r^ + } \right)} \right\|}^2} + {\textstyle{{{r_B}} \over {{\lambda _m}{\eta _{\rm{1}}}}}}}  \le \sqrt {{\textstyle{{{\lambda _M}} \over {{\lambda _m}}}}{{\left\| {\xi \left( {t_r^ + } \right)} \right\|}^2} + {\textstyle{{{r_B}} \over {{\lambda _m}{\eta _{\rm{1}}}}}}}.
\end{equation}

So $\xi \left( t \right)$ is bounded $\forall t < {t_e}$.

Secondly, let $t \ge {t_e}$. Taking into account \eqref{eq32} and the facts that, if Assumption \ref{assumption1} holds, $\Psi \left( t \right) \in {\rm{FE}}$ and $p \in {L_\infty }$ $\forall t \ge {t_e}{\rm{}}$ \linebreak $0 < {\Omega _{{\rm{LB}}}} \le \Omega\left(t\right)  \le {\Omega _{{\rm{UB}}}}$ are true, it is obtained from \eqref{eqA27} that $\forall t \ge {t_e}$:
\begin{equation}\label{eqA33}
\begin{array}{c}
{{\dot V}_\xi } \le  - \mu e_{ref}^{\rm{T}}P{e_{ref}}
+tr\left\{ {{{\tilde \theta }^{\rm{T}}}\omega {\omega ^{\rm{T}}}\tilde \theta  - {{\tilde \theta }^{\rm{T}}}\left[ {{\textstyle{{\left( {{\gamma _0}{\lambda _{{\rm{max}}}}\left( {\omega {\omega ^{\rm{T}}}} \right) + {\gamma _1}} \right){\Omega ^2}} \over {{\Omega ^2}}}}\tilde \theta  + {\textstyle{{\left( {{\gamma _0}{\lambda _{{\rm{max}}}}\left( {\omega {\omega ^{\rm{T}}}} \right) + {\gamma _1}} \right)\Omega } \over {{\Omega ^2}}}}\varsigma } \right]} \right\} \le \\
 \le  - \mu e_{ref}^{\rm{T}}P{e_{ref}}
 +tr\left\{ {{{\tilde \theta }^{\rm{T}}}\omega {\omega ^{\rm{T}}}\tilde \theta  - {{\tilde \theta }^{\rm{T}}}\left( {{\gamma _0}{\lambda _{{\rm{max}}}}\left( {\omega {\omega ^{\rm{T}}}} \right) + {\gamma _1}} \right)\tilde \theta  - {{\tilde \theta }^{\rm{T}}}\left( {{\textstyle{{\left( {{\gamma _0}{\lambda _{{\rm{max}}}}\left( {\omega {\omega ^{\rm{T}}}} \right) + {\gamma _1}} \right)} \over \rho }}\varsigma } \right)} \right\}.
\end{array}
\end{equation}

It holds for any possible unbounded $\omega \left( t \right)$ that:
\begin{equation}\label{eqA34}
\begin{array}{c}
tr\left\{ {{{\tilde \theta }^{\rm{T}}}\left( t \right)\left( {\omega \left( t \right)\omega {{\left( t \right)}^{\rm{T}}} - \left[ {{\gamma _0}{\lambda _{{\rm{max}}}}\left( {\omega \left( t \right){\omega ^{\rm{T}}}\left( t \right)} \right) + {\gamma _1}} \right]{I_{\left( {m + n} \right) \times \left( {m + n} \right)}}} \right)\tilde \theta \left( t \right)} \right\}
\le - {\gamma _1}tr\left\{ {{{\tilde \theta }^{\rm{T}}}\left( t \right)\tilde \theta \left( t \right)} \right\}.
\end{array}
\end{equation}

So \eqref{eqA33} is written as:
\begin{equation}\label{eqA35}
\begin{array}{c}
{{\dot V}_\xi } \le  - \mu e_{ref}^{\rm{T}}P{e_{ref}} - {\gamma _1}tr\left\{ {{{\tilde \theta }^{\rm{T}}}\tilde \theta  + {{\tilde \theta }^{\rm{T}}}{\textstyle{{{\gamma _0}{\lambda _{{\rm{max}}}}\left( {\omega {\omega ^{\rm{T}}}} \right) + {\gamma _1}} \over \rho }}\varsigma } \right\}
 \le  - \mu {\lambda _{{\rm{min}}}}\left( P \right){\left\| {{e_{ref}}} \right\|^2} - {\gamma _1}{\left\| {\tilde \theta } \right\|^{\rm{2}}} + {\textstyle{{{\gamma _0}{\lambda _{{\rm{max}}}}\left( {\omega {\omega ^{\rm{T}}}} \right) + {\gamma _1}} \over \rho }}{\varsigma _{{\rm{UB}}}}\left\| {\tilde \theta } \right\|.
\end{array}
\end{equation}

Assuming $a = \sqrt {{\gamma _1}} \left\| {\tilde \theta } \right\|{\rm{,}}$ $b = {\textstyle{{{\gamma _0}{\lambda _{{\rm{max}}}}\left( {\omega {\omega ^{\rm{T}}}} \right) + {\gamma _1}} \over {\rho \sqrt {{\gamma _1}} }}}{\varsigma _{{\rm{UB}}}}{\rm{,}}$ and applying the inequality $ - {a^2} + ab \le  - {\textstyle{1 \over 2}}{a^2} + {\textstyle{1 \over 2}}{b^2}$, it follows from \eqref{eqA35} that:
\begin{equation}\label{eqA36}
\begin{array}{c}
{{\dot V}_\xi } \le  - \mu {\lambda _{{\rm{min}}}}\left( P \right){\left\| {{e_{ref}}} \right\|^2} - {\textstyle{{{\gamma _1}} \over 2}}{\left\| {\tilde \theta } \right\|^{\rm{2}}} + {\textstyle{1 \over {2{\gamma _1}}}}{\left( {{\textstyle{{{\gamma _0}{\lambda _{{\rm{max}}}}\left( {\omega {\omega ^{\rm{T}}}} \right) + {\gamma _1}} \over \rho }}{\varsigma _{{\rm{UB}}}}} \right)^2}
 \le  - {\eta _{{\xi _2}}}V + {\textstyle{1 \over {2{\gamma _1}}}}{\left( {{\textstyle{{{\gamma _0}{\lambda _{{\rm{max}}}}\left( {\omega {\omega ^{\rm{T}}}} \right) + {\gamma _1}} \over \rho }}{\varsigma _{{\rm{UB}}}}} \right)^2}
\end{array}
\end{equation}
where ${\eta _{\xi {\rm{2}}}} = {\rm{min}}\left\{ {{\textstyle{{\mu {\lambda _{{\rm{min}}}}\left( P \right)} \over {{\lambda _{{\rm{max}}}}\left( P \right)}}}{\rm{;}}\;{\gamma _1}} \right\}.$

The inequality \eqref{eqA36} is solved, and it is obtained for $t \ge {t_e}$ that:
\begin{equation}\label{eqA37}
\begin{array}{c}
{V_\xi }\left( t \right) \le {e^{ - {\eta _{{\xi _2}}}\left( {t - {t_e}} \right)}}V\left( {{t_e}} \right) + \int\limits_{{t_e}}^t {{e^{ - {\eta _2}\left( {t - \tau } \right)}}{\textstyle{1 \over {2{\eta _{{\xi _2}}}{\gamma _1}}}}{{\left( {{\textstyle{{{\gamma _0}{\lambda _{{\rm{max}}}}\left( {\omega \left( \tau  \right){\omega ^{\rm{T}}}\left( \tau  \right)} \right) + {\gamma _1}} \over \rho }}{\varsigma _{{\rm{UB}}}}} \right)}^2}d\tau }  \le \\
 \le {e^{ - {\eta _{{\xi _2}}}\left( {t - {t_e}} \right)}}V\left( {{t_e}} \right) + \underbrace {{\textstyle{{\varsigma _{{\rm{UB}}}^2} \over {2{\eta _{{\xi _2}}}{\rho ^2}{\gamma _1}}}}\int\limits_{{t_e}}^t {{e^{ - {\eta _{{\xi _2}}}\left( {t - \tau } \right)}}{{\left( {{\gamma _0}{\lambda _{{\rm{max}}}}\left( {\omega \left( \tau  \right){\omega ^{\rm{T}}}\left( \tau  \right)} \right) + {\gamma _1}} \right)}^2}d\tau } }_{\varepsilon _\xi ^2}.
\end{array}
\end{equation}

For further proof of Theorem, it is to be shown that ${\varepsilon _\xi } \in {L_\infty }$. So, we need to prove that $\forall t \ge {t_e}{\rm{}\;}{\lambda _{{\rm{max}}}}\left( {\omega \left( t \right){\omega ^{\rm{T}}}\left( t \right)} \right) \le \lambda _{{\rm{max}}}^{{\rm{UB}}}$.

Following the definition $\omega \left( t \right) = {{\begin{bmatrix}
{{x^{\rm{T}}}\left( t \right)}&{{r^{\rm{T}}}}\end{bmatrix}}^{\rm{T}}}$, we have that ${\lambda _{{\rm{max}}}}\left( {\omega \left( t \right){\omega ^{\rm{T}}}\left( t \right)} \right) \le \lambda _{{\rm{max}}}^{{\rm{UB}}}$ if $x\left( t \right)$ and $r$ are bounded. The reference signal $r$ is bounded according to the problem statement. So, the boundedness of the state vector $x\left( t \right)$ is to be shown.

To do that, the following quadratic function is considered:
\begin{equation}\label{eqA38}
{V_{{e_{ref}}}} = e_{ref}^{\rm{T}}\overline P{e_{ref}}
\end{equation}

The derivative of \eqref{eqA38} with respect to \eqref{eq12}, \eqref{eq32} is written as:
\begin{equation}\label{eqA39}
\begin{array}{c}
{{\dot V}_{{e_{ref}}}} = e_{ref}^{\rm{T}}\left( {A_{ref}^{\rm{T}}\overline P + \overline P{A_{ref}}} \right){e_{ref}} + 2e_{ref}^{\rm{T}}\overline PB{{\tilde \theta }^{\rm{T}}}\omega 
 =  - e_{ref}^{\rm{T}}Q{e_{ref}} + 2e_{ref}^{\rm{T}}\overline PB{{\tilde K}_x}x + 2e_{ref}^{\rm{T}}\overline PB{{\tilde K}_r}r \le \\
 \le  - {\lambda _{{\rm{min}}}}\left( {\overline Q} \right){\left\| {{e_{ref}}} \right\|^2} + 2\left\| {{e_{ref}}} \right\|\left\| {\overline PB} \right\|\left\| {\tilde \theta } \right\|\left\| x \right\|
 +2\left\| {{e_{ref}}} \right\|\left\| {\overline PB} \right\|\left\| {\tilde \theta } \right\|{r_{\max }} \le \\
 \le  - {\lambda _{{\rm{min}}}}\left( {\overline Q} \right){\left\| {{e_{ref}}} \right\|^2} + 2{\left\| {{e_{ref}}} \right\|^2}\left\| {\overline PB} \right\|\left\| {\tilde \theta } \right\| 
 +2\left( {\left\| {{x_{ref}}} \right\| + {r_{\max }}} \right)\left\| {{e_{ref}}} \right\|\left\| {\overline PB} \right\|\left\| {\tilde \theta } \right\| \le \\
 \le \left( { - {\lambda _{{\rm{min}}}}\left( {\overline Q} \right) + 2\left\| {\overline PB} \right\|\left\| {\tilde \theta } \right\|} \right){\left\| {{e_{ref}}} \right\|^2}
 +2\left( {\left\| {{x_{ref}}} \right\| + {r_{\max }}} \right)\left\| {{e_{ref}}} \right\|\left\| {\overline PB} \right\|\left\| {\tilde \theta } \right\|
\end{array}
\end{equation}

Following Theorem statement, the disturbance $\varsigma \left( t \right)$ satisfies an inequality:
\begin{equation}\label{eqA40}
\forall t \ge \overline t > {t_e}{\rm{}\;}{\varsigma _{{\rm{UB}}}} < {\textstyle{{\rho {\lambda _{{\rm{min}}}}\left( {\overline Q} \right)} \over {2\left\| {\overline PB} \right\|}}} - \rho {e^{ - 0.5{\eta _{\tilde \theta }}\left( {\overline t - {t_e}} \right)}}\left\| {\tilde \theta \left( {{t_e}} \right)} \right\|.
\end{equation}

Then $\forall t \ge \overline t > {t_e}$ the definition \eqref{eqA40} is substituted into \eqref{eqA24} to obtain:
\begin{equation}\label{eqA41}
\left\| {\tilde \theta \left( t \right)} \right\| \le {e^{ - 0.5{\eta _{\tilde \theta }}\left( {\overline t - {t_e}} \right)}}\left\| {\tilde \theta \left( {{t_e}} \right)} \right\| + {\textstyle{{{\varsigma _{{\rm{UB}}}}} \over \rho }} < {\textstyle{{{\lambda _{{\rm{min}}}}\left( {\overline Q} \right)} \over {2\left\| {\overline PB} \right\|}}}.
\end{equation}

Considering the inequality \eqref{eqA41}, the upper bound of the derivative \eqref{eqA39} is written as:
\begin{equation}\label{eqA42}
\forall t \ge \overline t{\rm{:}}\;{\dot V_{{e_{ref}}}} <  - {\Delta _{\overline Q}}{\left\| {{e_{ref}}} \right\|^2} + {\lambda _{{\rm{min}}}}\left( {\overline Q} \right)\left\| {{e_{ref}}} \right\|\left( {\left\| {{x_{ref}}} \right\| + {r_{\max }}} \right){\rm{,}}
\end{equation}
where ${\Delta _{\overline Q}} = {\lambda _{{\rm{min}}}}\left( {\overline Q} \right) - 2\left\| {\overline PB} \right\|\left\| {\tilde \theta \left( {\overline t} \right)} \right\| > 0$ because of the inequality \eqref{eqA41}.

Assuming $a = \sqrt {{\Delta _{\overline Q}}} \left\| {{e_{ref}}} \right\|{\rm{,}}$ $b = {\textstyle{{{\lambda _{{\rm{min}}}}\left( {\overline Q} \right)\left( {\left\| {{x_{ref}}} \right\| + {r_{\max }}} \right)} \over {\sqrt {{\Delta _{\overline Q}}} }}}{\rm{,}}$ and applying the inequality $ - {a^2} + ab \le  - {\textstyle{1 \over 2}}{a^2} + {\textstyle{1 \over 2}}{b^2}$, it is obtained from \eqref{eqA42} that:
\begin{equation}\label{eqA43}
\begin{array}{c}
{\dot V_{{e_{ref}}}} <  - {\textstyle{1 \over 2}}{\Delta _{\overline Q}}{\left\| {{e_{ref}}} \right\|^2} + {\textstyle{1 \over 2}}{\textstyle{{\lambda _{{\rm{min}}}^2\left( {\overline Q} \right){{\left( {\left\| {{x_{ref}}} \right\| + {r_{\max }}} \right)}^2}} \over {{\Delta _{\overline Q}}}}} =  - {\textstyle{{{\Delta _{\overline Q}}} \over {2{\lambda _{{\rm{max}}}}\left( {\overline P} \right)}}}{V_{{e_{ref}}}}
+{\textstyle{1 \over 2}}{\textstyle{{\lambda _{{\rm{min}}}^2\left( {\overline Q} \right){{\left( {\left\| {{x_{ref}}} \right\| + {r_{\max }}} \right)}^2}} \over {{\Delta _{\overline Q}}}}}.
\end{array}
\end{equation}

Taking into account ${\lambda _{\min }}\left( {\overline P} \right){\left\| {{e_{ref}}\left( t \right)} \right\|^2} \le {V_{{e_{ref}}}}\left( t \right)$, $V\left( {\overline t} \right) \le {\lambda _{\max }}\left( {\overline P} \right){\left\| {{e_{ref}}\left( {\overline t} \right)} \right\|^2}$, the estimate of ${e_{ref}}\left( t \right)$ for $t \ge \overline t$ is obtained from \eqref{eqA43}:
\begin{equation}\label{eqA44}
\left\| {{e_{ref}}\left( t \right)} \right\| < {e^{ - {\textstyle{{{\Delta _{\overline Q}}} \over {4{\lambda _{{\rm{max}}}}\left( {\overline P} \right)}}}\left( {t - \overline t} \right)}}\left\| e_{ref}\left(\overline{t}\right)\right\|  + {\textstyle{{\sqrt {{\lambda _{{\rm{max}}}}\left( {\overline P} \right)} {\lambda _{{\rm{min}}}}\left( {\overline Q} \right)\left( {\left\| {{x_{ref}}} \right\| + {r_{\max }}} \right)} \over {{\Delta _{\overline Q}}}}}{\rm{,}}
\end{equation}

Whence, as ${x_{ref}}\left( t \right)$ is bounded, it follows that $\forall t \ge \overline t{\rm{}}\;{e_{ref}}\left( t \right) \in {L_\infty }$. Since the system \eqref{eq12} is linear and has no finite escape time, then $\forall t \in \left[ {t_r^ + {\rm{;}}\;\overline t} \right]{\rm{}}\;{e_{ref}}\left( t \right) \in {L_\infty }$. Hence, in addition to \eqref{eqA44}, it is true that $\forall t \ge t_r^ + {\rm{}}\;{e_{ref}}\left( t \right) \in {L_\infty }$.

As ${e_{ref}}\left( t \right)$ is bounded, as well as ${x_{ref}}\left( t \right)$, then $x\left( t \right)$ is also bounded. As a result, $\forall t \ge t_r^ + $ ${\lambda _{{\rm{max}}}}\left( {\omega \left( t \right){\omega ^{\rm{T}}}\left( t \right)} \right) \le \lambda _{{\rm{max}}}^{{\rm{UB}}}$ holds.

Then ${\varepsilon _\xi } \in {L_\infty }$, and it satisfies the inequality:
\begin{equation}\label{eqA45}
{\varepsilon _\xi } = {\textstyle{{{\varsigma _{{\rm{UB}}}}\left( {{\gamma _0}\lambda _{{\rm{max}}}^{{\rm{UB}}} + {\gamma _1}} \right)} \over {\rho \sqrt {2{\eta _{{\xi _2}}}{\gamma _1}} }}}\sqrt {\int\limits_{{t_e}}^t {{e^{ - {\eta _{{\xi _2}}}\left( {t - \tau } \right)}}d\tau } }  \le {\textstyle{{{\varsigma _{{\rm{UB}}}}\left( {{\gamma _0}\lambda _{{\rm{max}}}^{{\rm{UB}}} + {\gamma _1}} \right)} \over {\rho {\eta _{{\xi _2}}}\sqrt {2{\gamma _1}} }}}
\end{equation}

Taking into account ${\lambda _{\mathop{\rm m}\nolimits} }{\left\| {\xi \left( t \right)} \right\|^2} \le V\left( t \right)$, $V\left( {{t_e}} \right) \le {\lambda _M}{\left\| {\xi \left( {{t_e}} \right)} \right\|^2}$ and \eqref{eqA45}, the estimate of $\xi \left( t \right)$ for $t \ge {t_e}$ is obtained from \eqref{eqA37}:
\begin{equation}\label{eqA46}
\left\| {\xi \left( t \right)} \right\| \le \sqrt {{\textstyle{{{\lambda _M}} \over {{\lambda _m}}}}{e^{ - {\eta _2}t}}{{\left\| {\xi \left( {{t_e}} \right)} \right\|}^2} + \varepsilon _\xi ^2}  \le \sqrt {{\textstyle{{{\lambda _M}} \over {{\lambda _m}}}}\left( {{\textstyle{{{\lambda _M}} \over {{\lambda _m}}}}{{\left\| {\xi \left( {t_r^ + } \right)} \right\|}^2} + {\textstyle{{{r_B}} \over {{\lambda _m}{\eta _{\rm{1}}}}}}} \right) + \varepsilon _\xi ^2}.
\end{equation}

Hence, it is concluded from \eqref{eqA46} and \eqref{eqA32} that $\forall t \ge t_r^ + {\rm{}}\;\xi \left( t \right) \in {L_\infty }$ and $\left\| {\xi \left( t \right)} \right\|$ converges to ${\varepsilon _\xi }$ exponentially $\forall t \ge {t_e}$ with adjustable by $\gamma_0$, $\gamma_1$ rate. Therefore, following Proposition \ref{proposition5}, when ${\tau _\infty } \in {L_\infty }$, the implication $\mathop {\lim }\limits_{p \to \infty } \varepsilon _\theta ^{UB} = 0 \Rightarrow \mathop {\lim }\limits_{p \to \infty } {\varsigma _{UB}} = 0$ holds. Then, taking into consideration the requirement $p \in {L_\infty }$, the upper bound of ${\varepsilon _\xi }$ can be reduced to its finite limit with the help of choice of the finitely high value of $p$. This completes the proof of the statement (a) of the theorem.

According to the statement (b) of Theorem, it is true that $\varsigma \left( t \right) = o\left( {\Omega \left( t \right)\theta } \right){\rm{,}}$ so it is admissible to consider that \linebreak $\Upsilon \left( t \right) = \Omega \left( t \right)\theta  + o\left( {\Omega \left( t \right)\theta } \right) \buildrel \Delta \over = \Omega \left( t \right)\theta $. Then, as $\Omega \left( t \right) \in \mathbb{R}$, the elementwise solution of the equation \eqref{eq32} is written as:
\begin{equation}\label{eqA47}
{\tilde \theta _i}\left( t \right) = {e^{ - \int\limits_{t_r^ + }^t {\gamma \left( \tau  \right){\Omega ^2}\left( \tau  \right)d\tau } }}{\tilde \theta _i}\left( {t_r^ + } \right).
\end{equation}

As ${\rm{sign}}\left( {\gamma {\Omega ^2}} \right) = {\mathop{\rm const}\nolimits}  > 0$, then it follows from \eqref{eqA47} that $\left| {{{\tilde \theta }_i}\left( {{t_a}} \right)} \right| \le \left| {{{\tilde \theta }_i}\left( {{t_b}} \right)} \right|$ $\forall {t_a} \ge {t_b}$. This completes the proof of the part (b) of the theorem.

\nocite{*}% Show all bib entries - both cited and uncited; comment this line to view only cited bib entries;
\bibliography{Glushchenko}%

\clearpage

\section*{Author Biography}

\begin{biography}
{\includegraphics[width=60pt,height=78pt]{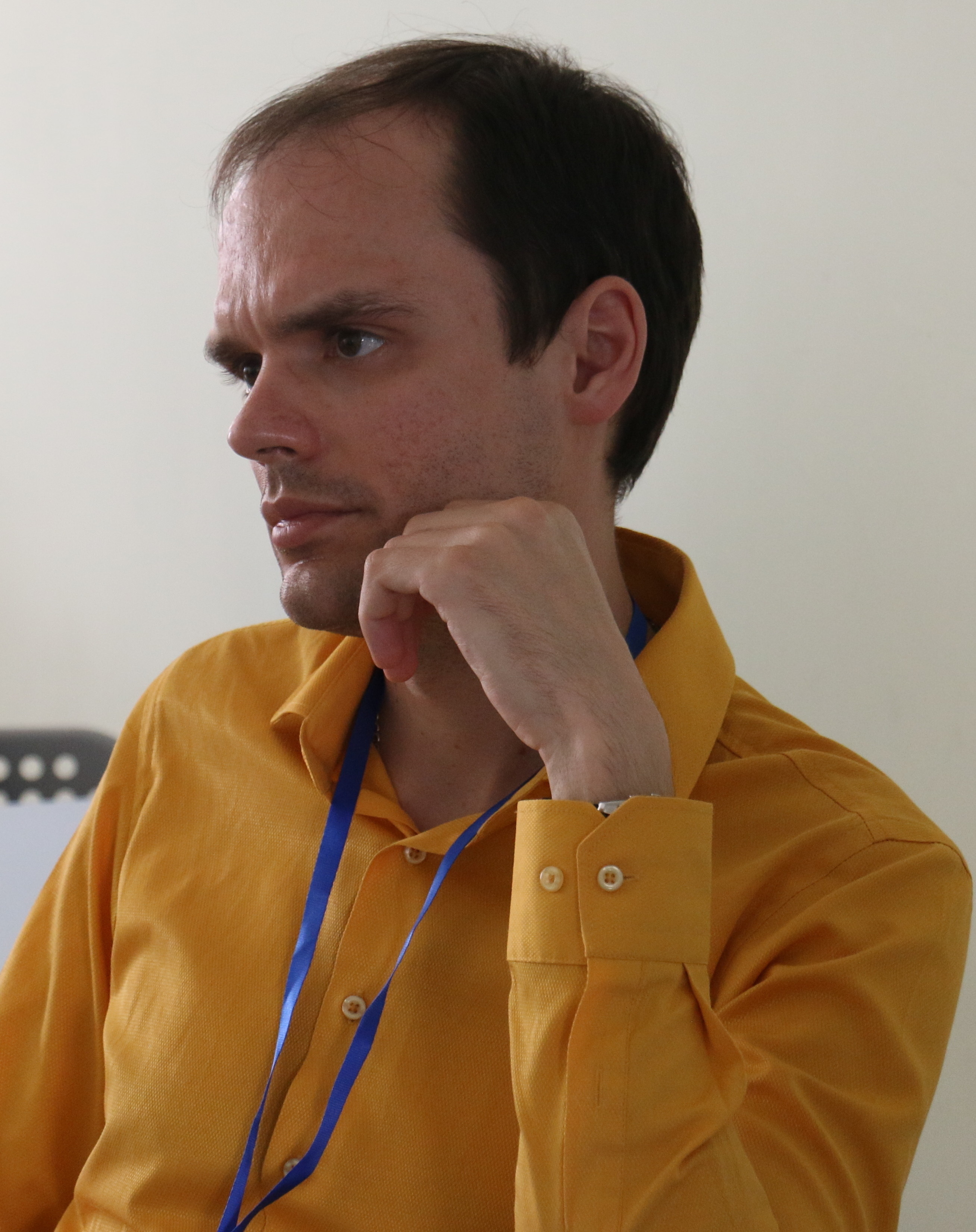}}{\textbf{Anton Glushchenko.} Anton Glushchenko received Software Engineering Degree from National University of Science and Technology "MISIS" (NUST "MISIS", Moscow, Russia) in 2008. In 2009 he gained Candidate of Sciences (Eng.) Degree from NUST "MISIS", in 2021 - Doctor of Sciences (Eng.) Degree from Voronezh State Technical University (Voronezh, Russia). Anton Glushchenko is currently the leading research scientist of laboratory 7 of V.A. Trapeznikov Institute of Control Sciences of Russian Academy of Sciences, Moscow, Russia. His research interests are mainly concentrated on exponentially stable adaptive control of linear time-invariant and time-varying plants under relaxed excitation conditions. The list of his works published includes more than 70 titles.}
\end{biography}

\begin{biography}
{\includegraphics[width=60pt,height=78pt]{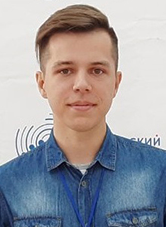}}{\textbf{Konstantin Lastochkin.} Konstantin Lastochkin received his bachelor’s degree in Electrical Engineering from National University of Science and Technology ”MISIS” (NUST ”MISIS”, Moscow, Russia) in 2020, master’s degree in Automation and control of technological processes - from NUST ”MISIS” in 2022. Konstantin
Lastochkin is currently a post-graduate student and a junior research scientist of laboratory 7 of V.A. Trapeznikov Institute of Control Sciences of Russian Academy of Sciences, Moscow, Russia. His present interests include adaptive and robust control, identification theory, nonlinear systems. The list of his works published includes 20 titles.}
\end{biography}

\end{document}